\DeclareMathOperator*{\argmax}{arg\,max}
\newcommand\norm[1]{\left\lVert#1\right\rVert}
\newtheorem{proposition}{Proposition}
\newtheorem{corollary}{Corollary}
\newtheorem{remark}{Remark}
\begin{document}
\title{Optimal Power Allocation in Downlink Multicarrier NOMA Systems: Theory and Fast Algorithms}
\author{Sepehr Rezvani, \IEEEmembership{Student~Member,~IEEE}, Eduard A. Jorswieck, \IEEEmembership{Fellow,~IEEE}, Roghayeh Joda, \IEEEmembership{Member,~IEEE}, and
	Halim Yanikomeroglu, \IEEEmembership{Fellow,~IEEE}
	\thanks{S. Rezvani and E. A. Jorswieck are with the Department of Information Theory and Communication Systems, Technische Universität Braunschweig, 38106 Braunschweig, Germany (e-mails: \{rezvani, jorswieck\}@ifn.ing.tu-bs.de).}
	\thanks{R. Joda is with Communication Department, ICT Research Institute, Tehran, Iran (e-mail: r.joda@itrc.ac.ir). She is currently a visiting researcher at the School of Electrical Engineering and Computer Science, University of Ottawa, Ottawa K1N 6N5, Canada (e-mail: rjoda@uottawa.ca).}
	\thanks{H. Yanikomeroglu is with the Department of Systems and Computer Engineering, Carleton University, Ottawa, Canada (e-mail: halim@sce.carleton.ca).}
}

%% The paper headers
%\markboth{Journal of \LaTeX\ Class Files,~Vol.~XX, No.~XX, July~2021}%
%{Shell \MakeLowercase{\textit{et al.}}: Bare Demo of IEEEtran.cls for IEEE Communications Society Journals}

\maketitle

\begin{abstract}
	In this work, we address the problem of finding globally optimal power allocation strategies to maximize the users sum-rate (SR) as well as system energy efficiency (EE) in the downlink of single-cell multicarrier non-orthogonal multiple access (MC-NOMA) systems. Each NOMA cluster includes a set of users in which the well-known superposition coding (SC) combined with successive interference cancellation (SIC) technique is applied among them. By obtaining the closed-form expression of intra-cluster power allocation, we show that MC-NOMA can be equivalently transformed to a virtual orthogonal multiple access (OMA) system, where the effective channel gain of these virtual OMA users is obtained in closed-form. Then, the SR and EE maximization problems are solved by using very fast water-filling and Dinkelbach algorithms, respectively. The equivalent transformation of MC-NOMA to the virtual OMA system brings new theoretical insights, which are discussed throughout the paper. The extensions of our analysis to other scenarios, such as considering users rate fairness, admission control, long-term performance, and a number of future next-generation multiple access (NGMA) schemes enabling recent advanced technologies, e.g., reconfigurable intelligent surfaces are discussed. Extensive numerical results are provided to show the performance gaps between single-carrier NOMA (SC-NOMA), OMA-NOMA, and OMA.
\end{abstract}

\begin{IEEEkeywords}
	Broadcast channel, NGMA, superposition coding, successive interference cancellation, multicarrier, NOMA, power allocation, water-filling, energy efficiency.
\end{IEEEkeywords}

\IEEEpeerreviewmaketitle

\section{Introduction}
\allowdisplaybreaks
\subsection{Evolution of NOMA: From Fully SC-SIC to Hybrid-NOMA}
\IEEEPARstart{T}{he} rapidly growing demands for high data rate services along with energy constrained networks necessitates the characterization and analysis of the next-generation multiple access (NGMA) techniques in wireless communication systems. It is proved that the capacity region of degraded single-input single-output (SISO) Gaussian broadcast channels (BCs) can be achieved by performing linear superposition coding (SC) at the transmitter side combined with coherent multiuser detection algorithms, like successive interference cancellation (SIC) at the receivers side \cite{PHDdiss,10.5555/1146355,NIFbook,915665,1246013}.
The SC can be performed in code or power domains \cite{10.5555/3294318}. The SC-SIC technique is also called non-orthogonal multiple access (NOMA) \cite{10.5555/3294318}. Based on the adopted SC technique, NOMA can be divided into two main categories, namely code-domain NOMA, and power-domain NOMA \cite{10.5555/3294318,6692652,8357810}. In our work, we consider power-domain NOMA, and subsequently, the term NOMA is referred to as power-domain NOMA.
In addition to the superior spectral efficiency of NOMA compared to orthogonal multiple access (OMA), i.e., frequency division multiple access (FDMA), and time division multiple access (TDMA) \cite{915665,1246013}, academic and industrial research has demonstrated that NOMA can support massive connectivity, which is important for ensuring that the fifth generation (5G) wireless networks can effectively support Internet of Things (IoT) functionalities \cite{7973146,9311149}.
The concept of NOMA has been considered in the 3rd generation partnership project (3GPP) long-term evolution advanced (LTE-A) standard, where NOMA is referred to as multiuser superposition transmission (MUST) \cite{3gppNOMA2015}. NOMA is also introduced on many existing as well as future wireless systems, because of its high compatibility with other communication technologies \cite{7973146}. 
For example, a significant number of works addressed the integration of NOMA to simultaneous wireless information and power transfer \cite{7973146,8792153}, cognitive radio networks \cite{7973146,8792153}, cooperative communications \cite{8792153,9382272,8456624}, millimeter wave communications \cite{8792153,8456624}, mobile edge computing networks \cite{8792153,8758862,9311149}, and reconfigurable intelligent surfaces (RISs) \cite{9122596,9475160,9365004}. In \cite{9365004}, it is shown that the channel capacity of the multiuser downlink RIS system can be achieved by NOMA with time sharing. To this end, NOMA is a promising candidate solution for the beyond-5G (B5G)/sixth generation (6G) wireless networks \cite{9154358}.

The SIC complexity is cubic in the number of multiplexed users \cite{7263349}.
Another issue is error propagation, which increases with the number of multiplexed users \cite{7263349}. Hence, single-carrier NOMA (SC-NOMA), where the signal of all the users is multiplexed, is still impractical for a large number of users. In this line, NOMA is introduced on multicarrier systems, called multicarrier NOMA (MC-NOMA), where the users are grouped into multiple clusters each operating in an isolated resource block, and SC-SIC is applied among users within each cluster \cite{6692652}.
Note that the space division multiple access (SDMA) can also be introduced on NOMA, where the clusters are isolated by zero-forcing beamforming \cite{7263349}.
MC-NOMA with disjoint clusters is based on SC-SIC and FDMA/TDMA, where each user occupies only one resource block, thus receives a single symbol. In FDMA-NOMA, no user benefits from the well-known multiplexing gain in the fading channels. To this end, NOMA is introduced on orthogonal frequency division multiple access (OFDMA), called OFDMA-NOMA or Hybrid-NOMA \cite{7263349,7676258,10.5555/3294318,8823873,9154358}. Hybrid-NOMA is the general case of MC-NOMA, where each user can occupy more than one subchannel, and SC-SIC is applied to each isolated subchannel. Therefore, all the users can benefit from the multiplexing gain.

\subsection{Related Works and Open Problems}
It is well-known that the dynamic resource allocation is necessary in downlink SC/MC-NOMA to achieve a preferable performance, as well as guaranteed quality of services (QoSs) for mission-critical applications \cite{9154358}. Maximizing users sum-rate (SR) is one of the important objectives of resource allocation optimization, which is widely addressed not only for SC/MC-NOMA, but also for the other multiple access techniques. In downlink SC-NOMA, maximizing users SR leads to the full base station's (BS's) power consumption \cite{9583874}. The energy consumption is becoming a social and economical issue due to the rapid increase of the data traffic and number of mobile devices \cite{Jorswieckfractional}. Hence, minimizing the BSs power consumption while guaranteeing users minimum rate demands is another important objective of resource allocation optimization. To strike a balance between users SR and BS's power consumption, maximizing the well-known fractional system energy efficiency (EE) function, defined as $\frac{\text{Receivers Sum-Rate}}{\text{Transmitter's Total Power Consumption}}$, has attracted lots of attention \cite{Jorswieckfractional,6213038}. The EE is measured in bit/Joule, thus measuring the amount of data transmitted per Joule of the consumed transmitter's energy \cite{Jorswieckfractional}. In the following, we review the related works which addressed resource allocation optimization for maximizing SR/EE in the downlink of single-cell SC/MC-NOMA systems.

\subsubsection{SC-NOMA}
In our previous work \cite{9583874}, we derived the closed-form expression of optimal powers to maximize the SR of $M$-user SC-NOMA system with minimum rate demands under the optimal channel-to-noise ratio (CNR)-based decoding order. The work in \cite{8861078} addresses the problem of simultaneously maximizing users SR and minimizing total power consumption defined as a utility function for SC-NOMA. However, the analysis in \cite{8861078} is affected by a detection constraint for successful SIC which is not necessary, since SISO Gaussian BCs are degraded. Hence, \textit{the closed-form expression of optimal powers to maximize system EE in SC-NOMA is still an open problem}.

\subsubsection{MC-NOMA}
The joint power and subchannel allocation in MC-NOMA is proved to be strongly NP-hard \cite{7587811,8422362,9044770}. In this way, these two problems are decoupled in most of the prior works. For any given set of clusters, the optimal power allocation for SR/EE maximization in MC-NOMA is more challenging compared to SC-NOMA. In MC-NOMA, there exists a competition among multiple clusters to get the cellular power. Actually, the optimal power allocation in MC-NOMA includes two components: 1) \textit{Inter-cluster power allocation}: optimal power allocation among clusters to get the cellular power budget; 2) \textit{Intra-cluster power allocation}: optimal power allocation among multiplexed users to get the clusters power budget. From the optimization perspective, the analysis in \cite{9583874} is also valid for MC-NOMA with any predefined power budget for each cluster, e.g., the considered models in \cite{7557079,8114362}. In this case, the intra-cluster power allocation can be equivalently decoupled into multiple SC-NOMA subproblems. There has been some efforts in finding the optimal joint intra- and inter-cluster power allocation, thus globally optimal power allocation, for MC-NOMA to maximize SR/EE \cite{7982784}. In \cite{7982784}, FDMA-NOMA with $2$ users per cluster is considered. The authors first obtain the closed-form expression of optimal intra-cluster power allocation for each $2$-order cluster. Then, by substituting these closed-forms to the original problems, the optimal inter-cluster power allocation is obtained in efficient manners for various objectives. \textit{In \cite{7982784}, all the analysis is based on allocating more power to each weaker user to guarantee successful SIC, which is not necessary, due to the degradation of SISO Gaussian BCs \cite{NIFbook,8823873}. Another concern is the generalization of the special FDMA-NOMA scheme with $2$-order clusters to Hybrid-NOMA with arbitrary number of multiplexed users.} 

The works on Hybrid-NOMA mainly focus on achieving the maximum multiplexing gain, where each user receives different symbols on the assigned subchannels. It is straightforward to show that Hybrid-NOMA with per-symbol/subchannel minimum rate constraints can be equivalently transformed to FDMA-NOMA, since a user on different assigned subchannels can be viewed as independent users with individual per-subchannel minimum rate demands.
The fractional EE maximization problem for downlink FDMA/Hybrid-NOMA with per-symbol minimum rate demands is addressed by \cite{7523951,9276828,8119791,8448840,9032198}. In this line, the EE maximization problem is solved by using the suboptimal difference-of-convex (DC) approximation method \cite{7523951}, Dinkelbach algorithm with Fmincon optimization software \cite{9276828}, and Dinkelbach algorithm with subgradient method \cite{8119791,8448840,9032198}. Despite the potentials, there are some fundamental questions that are not yet solved in the literature for the SR/EE maximization problems of downlink Hybrid-NOMA with minimum rate constraints as follows:
\begin{enumerate}
	\item \textit{What are the closed-form of optimal powers for the SR/EE maximization problems?}
	\item \textit{Is the equal power allocation strategy a good solution for the SR/EE maximization problems?}
	\item \textit{Is there any users rate fairness guarantee in the SR/EE maximization problems?}
	\item \textit{When the full cellular power consumption is energy efficient?}
	\item \textit{How can we equivalently transform Hybrid-NOMA to a FDMA system?}
\end{enumerate}
The answer of the first question brings new theoretical insights on the impact of minimum rate demands and channel gains on the optimal power coefficients among multiplexed users. Also, by analyzing the heterogeneity of optimal power allocation among multiplexed users/clusters, we can analytically observe which of the equal intra/inter-cluster power allocation strategies are mostly infeasible/near-optimal. The optimality conditions analysis for the SR/EE maximization problem shows us which users get additional rate rather than their individual minimum rate demands, which is important for guaranteeing users rate fairness. If we guarantee that the full power consumption leads to the maximum EE, the EE maximization problem can be reduced to the SR maximization problem, which subsequently decreases the complexity of the solution methods used in \cite{8119791,8448840,9032198}. Finally, transforming a Hybrid-NOMA system with $N$ subchannels each having $K$ users to a FDMA system with $N$ subchannels will reduce the dimension of the SR/EE maximization problems of Hybrid-NOMA. This decreases the complexity of the solution algorithms, e.g., the pure convex solvers used in \cite{8119791,8448840,9032198}. Moreover, Hybrid-NOMA-to-FDMA transformation facilitates the implementation of Hybrid-NOMA, since the optimization algorithms which are already developed for FDMA can be easily adopted to be used for Hybrid-NOMA.

In general, finding the optimal power allocation for SR/EE maximization problem in downlink Hybrid-NOMA with per-user minimum rate demands\footnote{Minimum rate constraint for each user over all the assigned subchannels.} is more challenging, due to the nonconvexity of minimum rate constraints. The works in \cite{7560605,7587811,7812683,7996797,8489876,8737495,8885490,8422362,9044770} address the problem of weighted SR/SR maximization for Hybrid-NOMA without guaranteeing users minimum rate demands. In Hybrid-NOMA with per-user minimum rate constraints, \cite{9264208} proposes a suboptimal power allocation strategy for the EE maximization problem based on the combination of the DC approximation method and Dinkelbach algorithm. Also, a suboptimal penalty function method is proposed in \cite{8807992}. We show that most of our analysis for Hybrid-NOMA with per-symbol minimum rate demands also hold for Hybrid-NOMA with per-user minimum rate demands by using the fundamental relations between these two schemes.

\subsection{Our Contributions}
In this work, we address the problem of finding optimal power allocation for maximizing SR/EE of the downlink single-cell Hybrid-NOMA system including multiple clusters each having an arbitrary number of multiplexed users. We assume that each user has a predefined minimum rate demand on each assigned subchannel \cite{7982784,7523951,9276828,8119791,8448840,9032198}.
Our main contributions are listed as follows:
\begin{itemize}
	\item We prove that for the three main objective functions as total power minimization, SR maximization and EE maximization, in each cluster, 
	only the cluster-head\footnote{The user with the highest decoding order which cancels the signal of all the other multiplexed users.} user deserves additional power while all the other users get power to only maintain their minimal rate demands\footnote{For the total power minimization problem, the cluster-head users also get power to only maintain their minimal rate demands.}.
	\item We obtain the closed-form expression of intra-cluster power allocation within each cluster. We prove that the intra-cluster power allocation is mainly affected by the minimum rate demand of users with lower decoding order leading to high heterogeneity of intra-cluster power allocation. As a result, the equal intra-cluster power allocation will be infeasible in most of the cases. The users exact CNRs merely impact on the intra-cluster power allocation, specifically for high signal-to-interference-plus-noise ratio (SINR) regions.
	\item The feasible power allocation region of Hybrid-NOMA with per-symbol minimum rate demands is defined as the intersection of closed boxes along with affine maximum cellular power constraint. Then, the optimal value for the power minimization problem is obtained in closed form.
	\item For the SR/EE maximization problem, we show that Hybrid-NOMA can be transformed to an equivalent virtual FDMA system. Each cluster acts as a virtual OMA user whose effective CNR is obtained in closed form. Moreover, each virtual OMA user requires a minimum power to satisfy its multiplexed users minimum rate demands, which is obtained in closed form.
	\item A very fast water-filling algorithm is proposed to solve the 
	SR maximization problem in Hybrid-NOMA. The EE maximization problem is solved by using the Dinkelbach algorithm with inner Lagrange dual with subgradient method or barrier algorithm with inner Newton's method. Different from \cite{7523951,9276828,8119791,8448840,9032198}, the closed-form of optimal powers among multiplexed users is applied to further reduce the dimension of the problems, thus reducing the complexity of the iterative algorithms, as well as increase the accuracy of the solutions, which is a win-win strategy.
	\item We propose a necessary and sufficient condition for the equal inter-cluster power allocation strategy to be optimal. We show that in the high SINR regions, the effective CNR of the virtual OMA users merely impacts on the inter-cluster power allocation showing the low heterogeneity of inter-cluster power allocation.
	\item We propose a sufficient condition to verify whether the full cellular power consumption is energy efficient or not. When this condition is fulfilled, we guarantee that at the optimal point of the EE maximization problem, the cellular power constraint is active, so the EE maximization problem can be solved by using our proposed water-filling algorithm.
\end{itemize}
Our optimality conditions analysis show that although \textit{usually} more power will be allocated to the weaker user when all the multiplexed users have the same minimum rate demands, there still exists a critical users rate fairness issue in the SR/EE maximization problem. To this end, we propose a new rate fairness scheme for the downlink of Hybrid-NOMA systems which is a mixture of the well-known proportional fairness among cluster-head users, and weighted minimum rate fairness among non-cluster-head users. The extension of our analysis for the pure Hybrid-NOMA system to other more general/complicated scenarios as well as the integration of Hybrid-NOMA to recent advanced technologies, e.g., reconfigurable intelligent surfaces are discussed in the paper.
Extensive numerical results are provided to evaluate the performance of SC-NOMA, FDMA-NOMA with different maximum number of multiplexed users, and FDMA in terms of outage probability, minimum BS's power consumption, maximum SR and EE. The performance comparison between FDMA-NOMA and SC-NOMA brings new theoretical insights on the suboptimality-level of FDMA-NOMA due to user grouping based on FDMA. In this work, we answer the question \textit{"How much performance gain can be achieved if we increase the order of NOMA clusters, and subsequently, decrease the number of user groups?"} for a wide range of the number of users and their minimum rate demands. The latter knowledge is highly necessary since multiplexing a large number of users would cause high complexity cost at the users' hardware.
The complete source code of the simulations including a user guide is available in \cite{sourcecode}.

\subsection{Paper Organization}
The rest of this paper is organized as follows: The system model is presented in Section \ref{Section sysmodel}. The globally optimal power allocation strategies are presented in Section \ref{Section solution}. The possible extensions of our analysis and future research directions are presented in Section \ref{section extensions}. The numerical results are presented in Section \ref{Section simulation}. Our concluding remarks are presented in Section \ref{Section conclusion}. The abbreviations used in the paper are summarized in Table \ref{table ABBREVIATION}.
\begin{table}[tp]
	\caption{Abbreviations.}
	\begin{center} \label{table ABBREVIATION}
		\scalebox{1}{\begin{tabular}{|c|c|}
			\hline \rowcolor[gray]{0.910}
			\textbf{Abbreviation} & \textbf{Definition} 
			\\ \hline \rowcolor[gray]{0.960}
			3GPP & Third generation partnership project
			\\ \hline \rowcolor[gray]{0.960}
			5G & Fifth generation
			\\ \hline \rowcolor[gray]{0.960}
			6G & Sixth generation
			\\ \hline \rowcolor[gray]{0.960}
			AWGN & Additive white Gaussian noise
			\\ \hline \rowcolor[gray]{0.960}
			B5G & beyond-5G
			\\ \hline \rowcolor[gray]{0.960}
			BC & Broadcast channel
			\\ \hline \rowcolor[gray]{0.960}
			BS & Base station
			\\ \hline \rowcolor[gray]{0.960}
			CNR & Channel-to-noise ratio
			\\ \hline \rowcolor[gray]{0.960}
			CSI & Channel state information
			\\ \hline \rowcolor[gray]{0.960}
			DC & Difference-of-convex
			\\ \hline \rowcolor[gray]{0.960}
			EE & Energy efficiency
			\\ \hline \rowcolor[gray]{0.960}
			FDMA & Frequency division multiple access
			\\ \hline \rowcolor[gray]{0.960}
			FD-NOMA & FDMA-NOMA
			\\ \hline \rowcolor[gray]{0.960}
			IoT & Internet of things
			\\ \hline \rowcolor[gray]{0.960}
			IPM & Interior point method
			\\ \hline \rowcolor[gray]{0.960}
			KKT & Karush–Kuhn–Tucker
			\\ \hline \rowcolor[gray]{0.960}
			LTE-A & Long-term evolution advanced
			\\ \hline \rowcolor[gray]{0.960}
			MC-NOMA & Multicarrier non-orthogonal multiple access
			\\ \hline \rowcolor[gray]{0.960}
			MUST & Multiuser superposition transmission
			\\ \hline \rowcolor[gray]{0.960}
			NGMA & Next-generation multiple access
			\\ \hline \rowcolor[gray]{0.960}
			NOMA & Non-orthogonal multiple access
			\\ \hline \rowcolor[gray]{0.960}
			OFDMA & Orthogonal frequency division multiple access
			\\ \hline \rowcolor[gray]{0.960}
			OMA & Orthogonal multiple access
			\\ \hline \rowcolor[gray]{0.960}
			QoS & Quality of service
			\\ \hline \rowcolor[gray]{0.960}
			RIS & Reconfigurable intelligent surface
			\\ \hline \rowcolor[gray]{0.960}
			SC & Superposition coding
			\\ \hline \rowcolor[gray]{0.960}
			SC-NOMA & Single-carrier non-orthogonal multiple access
			\\ \hline \rowcolor[gray]{0.960}
			SDMA & Space division multiple access
			\\ \hline \rowcolor[gray]{0.960}
			SIC & Successive interference cancellation
			\\ \hline \rowcolor[gray]{0.960}
			SINR & Signal-to-interference-plus-noise ratio
			\\ \hline \rowcolor[gray]{0.960}
			SISO & Single-input single-output
			\\ \hline \rowcolor[gray]{0.960}
			SR & Sum-rate
			\\ \hline \rowcolor[gray]{0.960}
			TDMA & Time division multiple access
			\\ \hline \rowcolor[gray]{0.960}
		\end{tabular}}
	\end{center}
\end{table}

\section{Hybrid-NOMA: OFDMA-Based SC-SIC} \label{Section sysmodel}
\subsection{Network Model and Achievable Rates}\label{subsection netmodel}
Consider the downlink channel of a multiuser system, where a BS serves $K$ users with limited processing capabilities in a unit time slot of a quasi-static channel. The set of users is denoted by $\mathcal{K}=\{1,\dots,K\}$. In this system, the total bandwidth $W$ (Hz) is equally divided into $N$ isolated subchannels with the set $\mathcal{N}=\{1,\dots,N\}$, where the bandwidth of each subchannel is $W_s=W/N$. NOMA is applied to each subchannel with maximum number of multiplexed users $U^{\rm{max}}$. Note that SC-NOMA is infeasible when $U^{\rm{max}}<K$. The set of multiplexed users on subchannel $n$ is denoted by $\mathcal{K}_n=\{k \in \mathcal{K} | \rho^n_{k}=1 \}$, in which $\rho^n_{k}$ is the binary channel allocation indicator, where if user $k$ occupies subchannel $n$, we set $\rho^n_{k}=1$, and otherwise, $\rho^n_{k}=0$.
The set of subchannels occupied by user $k \in \mathcal{K}$, is indicated by $\mathcal{N}_k=\{n \in \mathcal{N} | \rho^n_{k}=1 \}$. In FDMA-NOMA, each user belongs to only one cluster \cite{7557079,8114362,7982784}, thus we have $\mathcal{K}_n \cap \mathcal{K}_m = \emptyset, \forall n,m \in \mathcal{N},~ n \neq m$, or equivalently, $|\mathcal{N}_k|=1,~\forall k \in \mathcal{K}$, where $|.|$ indicates the cardinality of a finite set. In the following, we consider the more general case Hybrid-NOMA with $|\mathcal{N}_k| \geq 1,~\forall k \in \mathcal{K}$. The maximum number of multiplexed users $U^{\rm{max}}$ implies that $|\mathcal{K}_n| \leq U^{\rm{max}},~\forall n \in \mathcal{N}$. The exemplary models of SC-NOMA, FDMA-NOMA, FDMA, Hybrid-NOMA with multiplexing all users in all the subchannels (Hybrid-NOMA with $K$ multiplexed users per cluster), Hybrid-NOMA with $U^{\rm{max}}$ multiplexed users per subchannel, and OFDMA are illustrated in Fig. \ref{Fig Systemmodel}.
\begin{figure*}
	\centering
	\subfigure[SC-NOMA: Capacity-achieving, thus optimal, when $U^{\rm{max}} \geq K$, and infeasible when $U^{\rm{max}}<K$. No multiplexing gain is achieved.]{
		\includegraphics[scale=0.33]{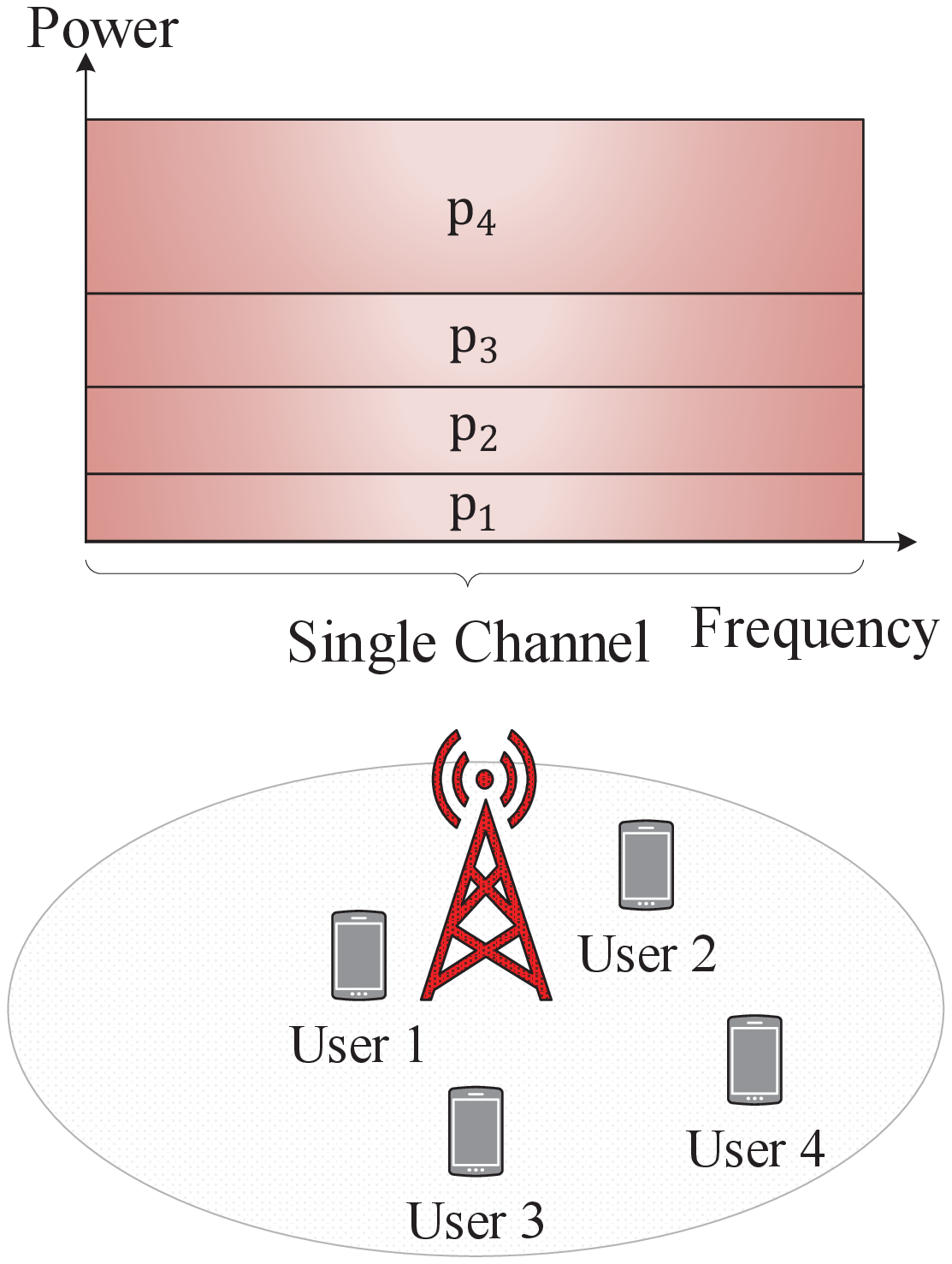}
		\label{Fig_SC-NOMA}
	}\hfill
	\subfigure[FDMA-NOMA: Suboptimal but feasible. SC-SIC is applied to each cluster. No multiplexing gain is achieved.]{
		\includegraphics[scale=0.33]{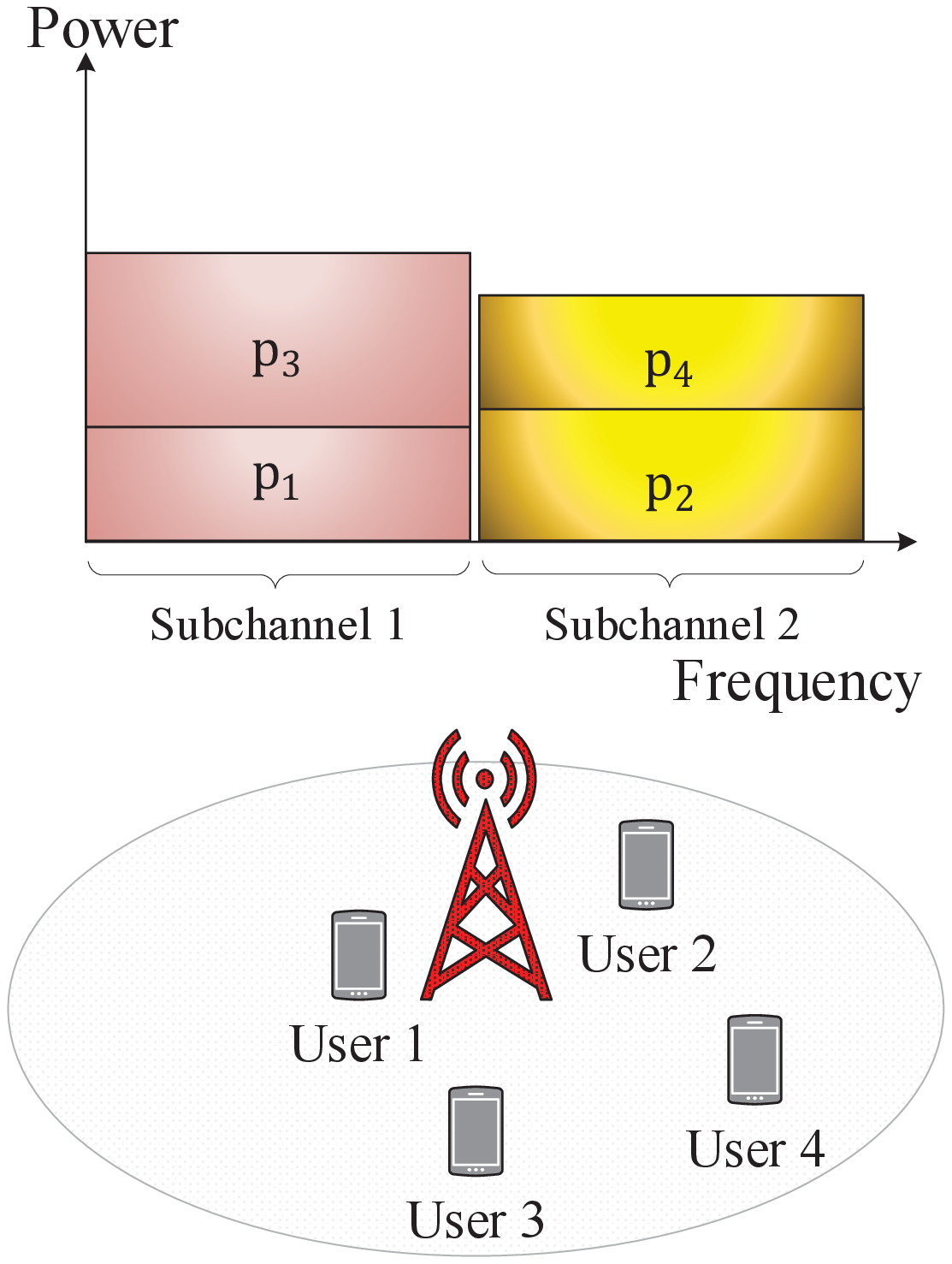}
		\label{Fig_FDMA-NOMA}
	}\hfill
	\subfigure[FDMA: Suboptimal but feasible with no SC-SIC. No multiplexing gain is achieved.]{
		\includegraphics[scale=0.33]{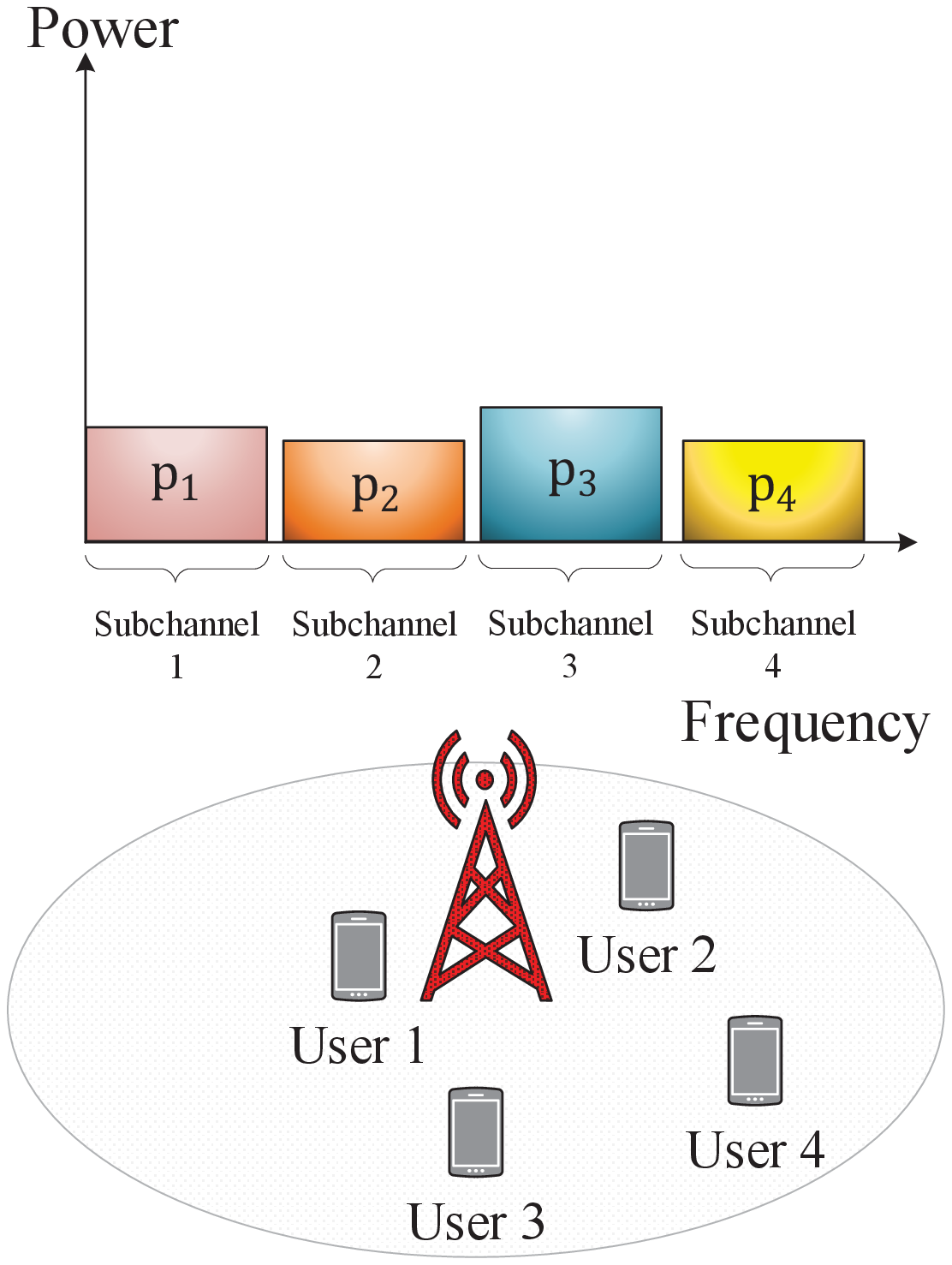}
		\label{Fig_FDMA}
	}\hfill
	\subfigure[Hybrid-NOMA with $K$ multiplexed users: Capacity-achieving, thus optimal, when $U^{\rm{max}} \geq K$, and infeasible when $U^{\rm{max}}<K$. Multiplexing gain can be achieved.]{
		\includegraphics[scale=0.33]{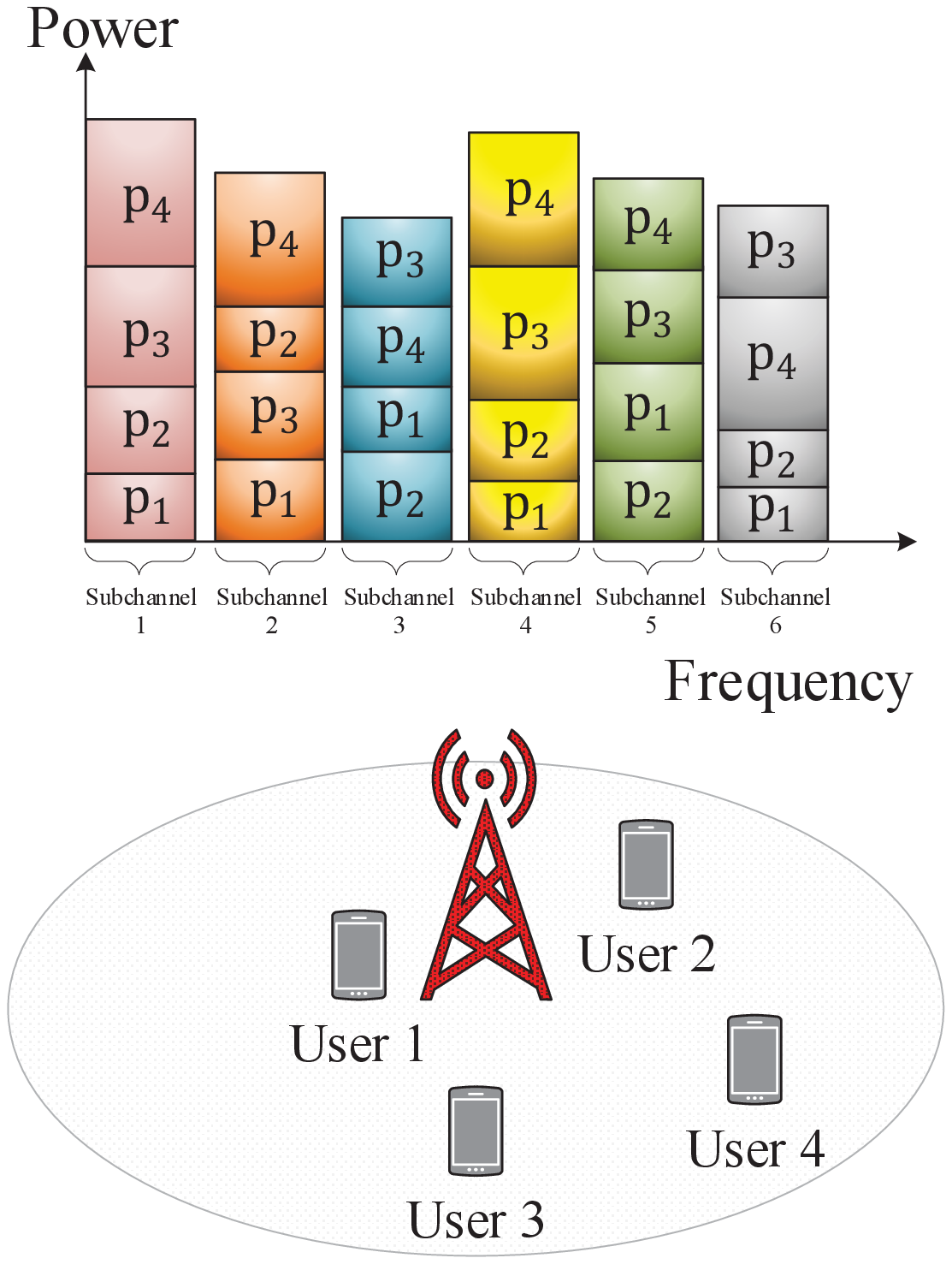}
		\label{Fig_Hybrid-NOMA-all}
	}\hfill
	\subfigure[Hybrid-NOMA with $2$ multiplexed users: Suboptimal but feasible. SC-SIC is applied among users within each cluster. Multiplexing gain can be achieved.]{
		\includegraphics[scale=0.33]{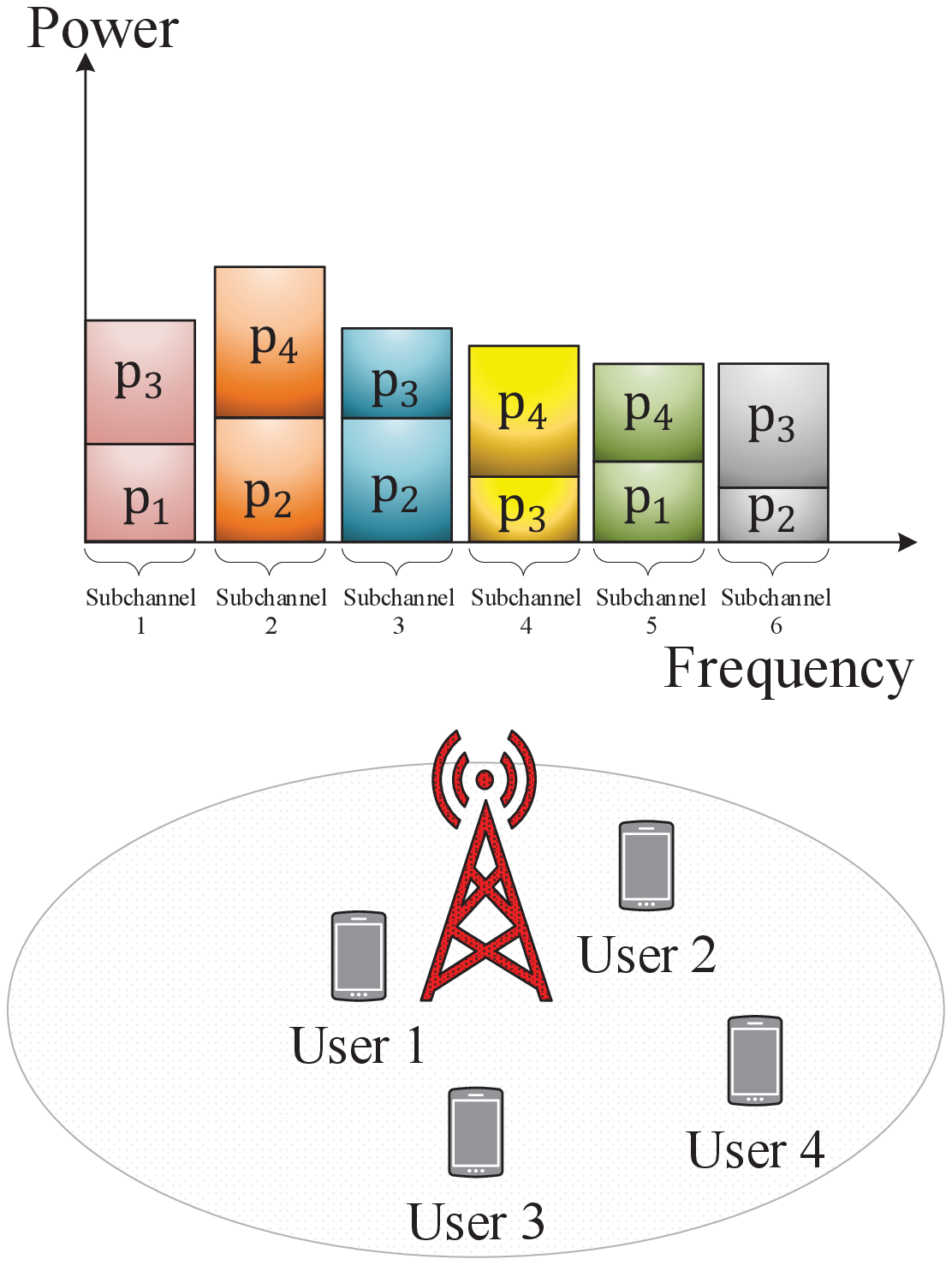}
		\label{Fig_Hybrid-NOMA}
	}\hfill
	\subfigure[OFDMA: Suboptimal but feasible with no SC-SIC. Multiplexing gain can be achieved.]{
		\includegraphics[scale=0.33]{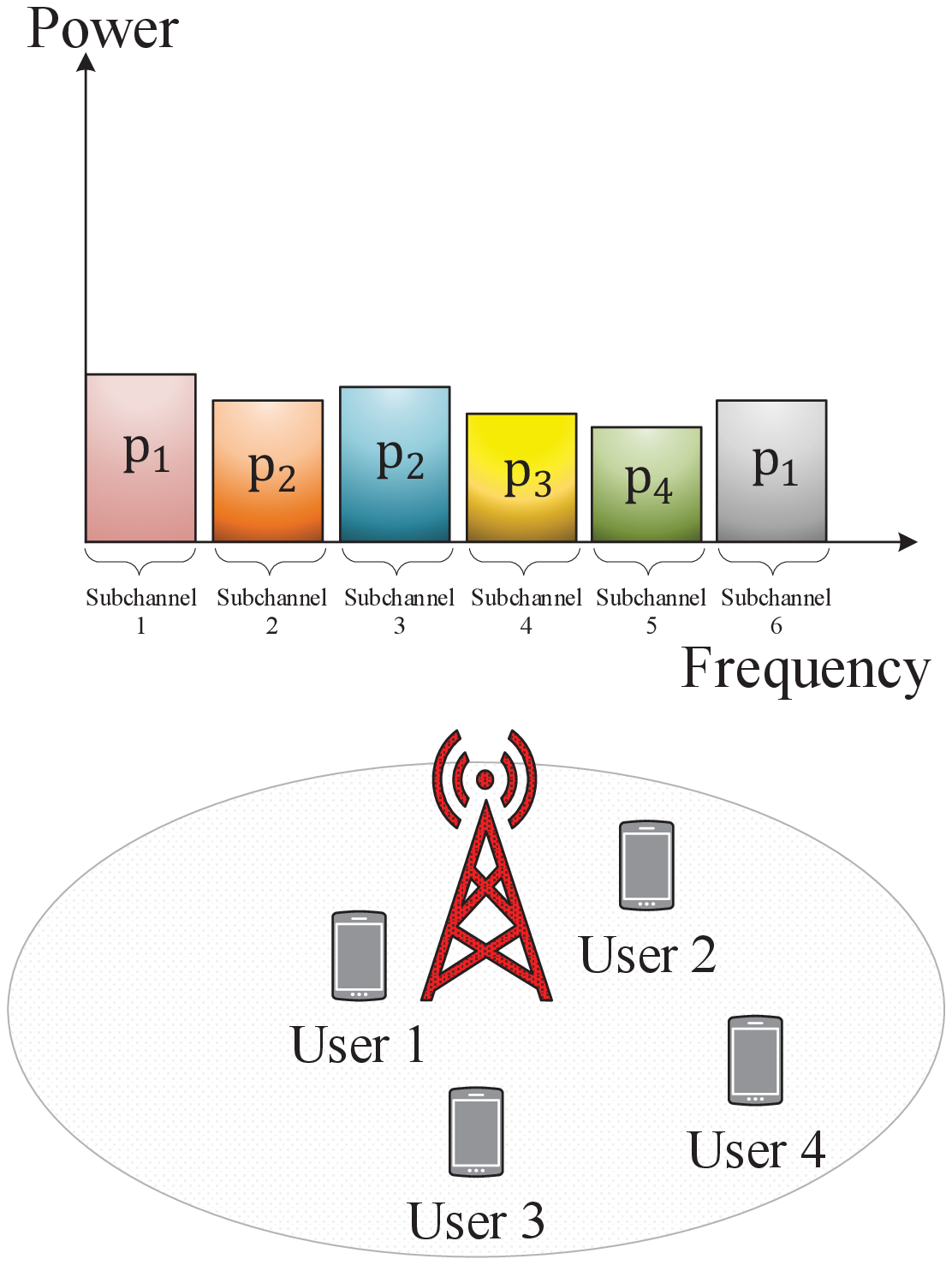}
		\label{Fig_OFDMA}
	}\hfill
	\caption
	{\ref{Fig_SC-NOMA}-\ref{Fig_FDMA}: Exemplary models of SC-NOMA, FDMA-NOMA, and FDMA, respectively, where a single symbol is transmitted to each user. \ref{Fig_Hybrid-NOMA-all}-\ref{Fig_OFDMA}: Exemplary models of Hybrid-NOMA with $K$ multiplexed users, Hybrid-NOMA with $2$ multiplexed users, and OFDMA, respectively, where an independent symbol is transmitted to each user on each assigned subchannel. In these examples, we set $U^{\rm{max}}=2$.}
	\label{Fig Systemmodel}
\end{figure*}

Each subchannel can be modeled as a SISO Gaussian BC. The transmitted signal by the BS on subchannel $n$ is formulated by
%\begin{equation}\label{transmitted signal}
$x^n= \sum\limits_{k \in \mathcal{K}_n}
\sqrt{p^n_{k}} s^n_{k},$
%\end{equation} 
where $s^n_{k} \sim \mathcal{CN} (0,1)$ and $p^n_{k} \geq 0$ are the modulated symbol from Gaussian codebooks, and transmit power of user $k \in \mathcal{K}$ on subchannel $n \in \mathcal{N}$, respectively. Obviously, $p^n_{k}=0, \forall n \in \mathcal{N},~k \notin \mathcal{K}_n$. The received signal at user $k$ on subchannel $n$ is 
\begin{equation}\label{received signal}
	y^n_{k}= \underbrace{\sqrt{p^n_{k}} g^n_{k} s^n_{k}}_\text{intended signal} + \underbrace{\sum\limits_{i \in \mathcal{K}_n\setminus\{k\}} \sqrt{p^n_{i}} g^n_{k} s^n_{i}}_\text{co-channel interference} + z^n_{k},
\end{equation}
where $g^n_{k}$ is the (generally complex) channel gain from the BS to user $k$ on subchannel $n$, and $z^n_{k}\sim \mathcal{CN} (0,\sigma^n_{k})$ is the additive white Gaussian noise (AWGN). We assume that the perfect channel state information (CSI) is available at the BS as well as users.

In Hybrid-NOMA, SC-SIC is applied to each multiuser subchannel according to the optimal CNR-based decoding order \cite{PHDdiss,10.5555/1146355,NIFbook,915665,1246013}. Let $h^n_{k}=|g^n_{k}|^2/\sigma^n_{k},~\forall n \in \mathcal{N}, k \in \mathcal{K}_n$. Then, the CNR-based decoding order is indicated by $h^n_{i} > h^n_{j} \Rightarrow i \to j,\forall i,j \in \mathcal{K}_n$, where $i \to j$ represents that user $i$ fully decodes (and then cancels) the signal of user $j$ before decoding its desired signal on subchannel $n$. Moreover, the signal of user $i$ is fully treated as noise at user $j$ on subchannel $n$. In summary, the SIC protocol in each isolated subchannel is the same as the SIC protocol of SC-NOMA. We call the stronger user $i$ as the user with higher decoding order in the user pair $i,j \in \mathcal{K}_n$. In each subchannel $n$, the index of the cluster-head user is denoted by $\Phi_n=\argmax\limits_{k \in \mathcal{K}_n} h^n_{k}$. When $|\mathcal{K}_n|=1$, the single user can be defined as the cluster-head user on subchannel $n$ since it does not experience any interference. The SINR of each user $i \in \mathcal{K}_n$ for decoding the desired signal of user $k$ on subchannel $n$ is
$\gamma^n_{k,i} = \frac{p^n_{k} h^n_{i}}{\sum\limits_{j \in \mathcal{K}_n, \atop h^n_{j} > h^n_{k}} p^n_{j} h^n_{i} + 1}$ \cite{NIFbook}. User $i \in \mathcal{K}_n$ is able to fully decode the signal of user $k$ if and only if $\gamma^n_{k,i} \geq \gamma^n_{k,k}$, where $\gamma^n_{k,k} \equiv \gamma^n_{k}= \frac{p^n_{k} h^n_{k}}{\sum\limits_{j \in \mathcal{K}_n, \atop h^n_{j} > h^n_{k}} p^n_{j} h^n_{k} + 1}$ is the SINR of user $k$ for decoding its own signal $s^n_{k}$. According to the Shannon's capacity formula, the achievable rate (in bps) of user $k \in \mathcal{K}_n$ on subchannel $n \in \mathcal{N}$ after successful SIC is given by \cite{10.5555/1146355,NIFbook,9583874}
$$
R^n_{k} (\boldsymbol{p}^n) = \min\limits_{i \in \mathcal{K}_n \atop h^n_{i} \geq h^n_{k}}\left\{W_s \log_2\left( 1+ \gamma^n_{k,i} (\boldsymbol{p}^n) \right)\right\},
$$
where $\boldsymbol{p}^n=[p^n_{k}]_{1 \times K}$, is the
vector of allocated powers to all the users on subchannel $n$. The matrix of power allocation among all the users and subchannels is denoted by $\boldsymbol{p}=[p^n_{k}]_{N \times K}$. Therefore, $\boldsymbol{p}^n$ is the $n$-th row of matrix $\boldsymbol{p}$.
For the user pair $i,j \in \mathcal{K}_n$ with $h^n_{i} > h^n_{k}$, the condition $\gamma^n_{k,i} (\boldsymbol{p}^n) \geq \gamma^n_{k} (\boldsymbol{p}^n)$ or equivalently $p^n_{k} h^n_{i} \geq p^n_{k} h^n_{k}$ holds independent of $\boldsymbol{p}^n$. Accordingly, at any $\boldsymbol{\boldsymbol{p}}^n$, the achievable rate of each user $k \in \mathcal{K}_n$ on subchannel $n$ is equal to its channel capacity formulated by \cite{NIFbook}
\begin{equation}\label{achiev se}
	R^n_{k} (\boldsymbol{p}^n) = W_s \log_2\left( 1+ \gamma^n_{k} (\boldsymbol{p}^n) \right).
\end{equation}
The overall achievable rate of user $k \in \mathcal{K}$ can thus be obtained by $R_{k} (\boldsymbol{p}) = \sum\limits_{n \in \mathcal{N}_k} R^n_{k} (\boldsymbol{p}^n)$.

\subsection{Optimization Problem Formulations}
Assume that the set of clusters, i.e., $\mathcal{K}_n,~\forall n \in \mathcal{N},$ is predefined. The general power allocation problem for maximizing users SR in Hybrid-NOMA is formulated by
\begin{subequations}\label{sumrate problem}
	\begin{align}\label{obf sumrate problem}
		\max_{ \boldsymbol{p} \geq 0}\hspace{.0 cm}	
		~~ & \sum\limits_{n \in \mathcal{N}} \sum\limits_{k \in \mathcal{K}_n} R^n_{k} (\boldsymbol{p}^n)
		\\
		\text{s.t.}~~
		\label{Constraint minrate}
		& R^n_{k} (\boldsymbol{p}^n) \geq R^{\rm{min}}_{k,n},~\forall k \in \mathcal{K},~n \in \mathcal{N}_k,
		\\
		\label{Constraint cell power}
		& \sum\limits_{n \in \mathcal{N}} \sum\limits_{k \in \mathcal{K}_n} p^n_{k} \leq P^{\rm{max}},
		\\
		\label{Constraint mask power}
		& \sum\limits_{k \in \mathcal{K}_n} p^n_{k} \leq P^{\rm{mask}}_n, \forall n \in \mathcal{N},
	\end{align}
\end{subequations}
where \eqref{Constraint minrate} is the per-subchannel minimum rate constraint, in which $R^{\rm{min}}_{k,n}$ is the individual minimum rate demand of user $k$ on subchannel $n$ \cite{7982784,7523951,9276828,8119791,8448840,9032198}. \eqref{Constraint cell power} is the cellular power constraint, where $P^{\rm{max}}$ denotes the maximum available power of the BS. \eqref{Constraint mask power} is the maximum per-subchannel power constraint, where $P^{\rm{mask}}_n$ denotes the maximum allowable power on subchannel\footnote{We do not impose any specific condition on $P^{\rm{mask}}_n$. We only take into account $P^{\rm{mask}}_n$ in our analysis to keep the generality, such that $P^{\rm{mask}}_n \geq P^{\rm{max}},~\forall n \in \mathcal{N}$, as special case.} $n$. For convenience, we denote the general power allocation matrix as $\boldsymbol{p}=[\boldsymbol{p}^n],\forall n \in \mathcal{N}$.

The overall system EE is formulated by $E(\boldsymbol{p})=\frac{\sum\limits_{n \in \mathcal{N}} \sum\limits_{k \in \mathcal{K}_n} R^n_{k} (\boldsymbol{p}^n)}{\sum\limits_{n \in \mathcal{N}} \sum\limits_{k \in \mathcal{K}_n} p^n_{k} + P_{\rm{C}}}$, where constant $P_{\rm{C}}$ is the circuit power consumption \cite{6213038,Jorswieckfractional}. The power allocation problem for maximizing system EE under the individual minimum rate demand of users in Hybrid-NOMA is formulated by
\begin{equation}\label{EE problem}
	\max_{ \boldsymbol{p} \geq 0}~E(\boldsymbol{p})
	~~~~~~~~~\text{s.t.}~\eqref{Constraint minrate}\text{-}\eqref{Constraint mask power}.
\end{equation}
The main notations of the paper are summarized in Table \ref{table notations}.
\begin{table}[tp]
	\caption{Main Notations.}
	\begin{center} \label{table notations}
		\scalebox{1}{\begin{tabular}{|c|c|}
				\hline \rowcolor[gray]{0.910}
				\textbf{Notation} & \textbf{Description} 
				\\ \hline \rowcolor[gray]{0.960}
				$\mathcal{K}$ & Set of all the users
				\\ \hline \rowcolor[gray]{0.960}
				$\mathcal{K}_n$ & Set of users on subchannel $n$
				\\ \hline \rowcolor[gray]{0.960}
				$\mathcal{N}$ & Set of subchannels
				\\ \hline \rowcolor[gray]{0.960}
				$\mathcal{N}_k$ & Set of subchannels occupied by user $k$
				\\ \hline \rowcolor[gray]{0.960}
				$W_s$ & Bandwidth of each subchannel
				\\ \hline \rowcolor[gray]{0.960}
				$U^{\rm{max}}$ & Maximum number of multiplexed users
				\\ \hline \rowcolor[gray]{0.960}
				$\rho^n_{k}$ & Channel allocation indicator for user $k$ and subchannel $n$
				\\ \hline \rowcolor[gray]{0.960}
				$p^n_{k}$ & Allocated power to user $k$ on subchannel $n$
				\\ \hline \rowcolor[gray]{0.960}
				$h^n_{k}$ & CNR of user $k$ on subchannel $n$
				\\ \hline \rowcolor[gray]{0.960}
				$\Phi_n$ & Index of the cluster-head user on subchannel $n$
				\\ \hline \rowcolor[gray]{0.960}
				$R^n_{k}$ & Achievable rate of user $k$ on subchannel $n$
				\\ \hline \rowcolor[gray]{0.960}
				$\boldsymbol{p}^n$ & Vector of allocated powers on subchannel $n$
				\\ \hline \rowcolor[gray]{0.960}
				$\boldsymbol{p}$ & Matrix of power allocation 
				\\ \hline \rowcolor[gray]{0.960}
				$R^{\rm{min}}_{k,n}$ & Minimum rate demand of user $k$ on subchannel $n$
				\\ \hline \rowcolor[gray]{0.960}
				$P^{\rm{max}}$ & Maximum transmit power of the BS
				\\ \hline \rowcolor[gray]{0.960}
				$P^{\rm{mask}}_n$ & Maximum allowable power of subchannel $n$
				\\ \hline \rowcolor[gray]{0.960}
				$P_{\rm{C}}$ & BS's circuit power consumption
				\\ \hline \rowcolor[gray]{0.960}
				$E(\boldsymbol{p})$ & System EE
				\\ \hline \rowcolor[gray]{0.960}
				$q_n$ & Power consumption of cluster $n$
				\\ \hline \rowcolor[gray]{0.960}
				$Q^{\rm{min}}_n$ & Lower-bound of $q_n$
				\\ \hline \rowcolor[gray]{0.960}
				$H_n$ & Effective CNR of virtual OMA user $n$
				\\ \hline \rowcolor[gray]{0.960}
				$P_{\rm{EE}}$ & Power consumption of the BS
				\\ \hline \rowcolor[gray]{0.960}
				$P_{\rm{min}}$ & Lower-bound of $P_{\rm{EE}}$ 
				\\ \hline \rowcolor[gray]{0.960}
				$\lambda$ & Fractional parameter
				\\ \hline \rowcolor[gray]{0.960}
		\end{tabular}}
	\end{center}
\end{table}

\section{Solution Algorithms}\label{Section solution}
In this section, we propose globally optimal power allocation algorithms for the SR and EE maximization problems. The closed-form of optimal powers for the total power minimization problem is also derived to characterize the feasible set of our target problems.

\subsection{Sum-Rate Maximization Problem}
Here, we propose a water-filling algorithm to find the globally optimal solution of \eqref{sumrate problem}. The SR of users in each cluster, i.e., $\sum\limits_{k \in \mathcal{K}_n} R^n_{k} (\boldsymbol{p}^n)$ is strictly concave in $\boldsymbol{p}^n$, since its Hessian is negative definite \cite{Boydconvex}. For more details, please see Appendix A in \cite{8352643}. The overall SR in \eqref{obf sumrate problem} is thus strictly concave in $\boldsymbol{p}$, since it is the positive summation of strictly concave functions. Besides, the power constraints in \eqref{Constraint cell power} and \eqref{Constraint mask power} are affine, so are convex. The minimum rate constraint in \eqref{Constraint minrate} can be equivalently transformed to the following affine form as
$2^{(R^{\rm{min}}_{k,n}/W_s)} \bigg(\sum\limits_{j \in \mathcal{K}_n, \atop h^n_{j} > h^n_{k}} p^n_{j} h^n_{k} + 1\bigg) \leq 
\sum\limits_{j \in \mathcal{K}_n, \atop h^n_{j} > h^n_{k}} p^n_{j} h^n_{k} + 1 + p^n_k h^n_k,~\forall k \in \mathcal{K},~n \in \mathcal{N}_k.$
Accordingly, the feasible set of \eqref{sumrate problem} is convex. Summing up, problem \eqref{sumrate problem} is convex in $\boldsymbol{p}$.
Let us define $q_n=\sum\limits_{k \in \mathcal{K}_n} p^n_{k}$ as the power consumption of cluster $n$. Problem \eqref{sumrate problem} can be equivalently transformed to the following joint intra- and inter-cluster power allocation problem as
\begin{subequations}\label{SCuser problem}
	\begin{align}\label{obf SCuser problem}
		\max_{\boldsymbol{p} \geq 0, \boldsymbol{q} \geq 0}\hspace{.0 cm}	
		~~ & \sum\limits_{n \in \mathcal{N}} \sum\limits_{k \in \mathcal{K}_n} R^n_{k} (\boldsymbol{p}^n)
		\\
		\text{s.t.}~~
		\label{Constraint minrate SCuser}
		& R^n_{k} (\boldsymbol{p}^n) \geq R^{\rm{min}}_{k,n},~\forall k \in \mathcal{K},~n \in \mathcal{N}_k,
		\\
		\label{Constraint cell q}
		& \sum\limits_{n \in \mathcal{N}} q_n \leq P^{\rm{max}},
		\\
		\label{Constraint sumpow}
		& \sum\limits_{k \in \mathcal{K}_n} p^n_{k} = q_n,~\forall n \in \mathcal{N},
		\\
		\label{Constraint mask q}
		& 0 \leq q_n \leq P^{\rm{mask}}_n, \forall n \in \mathcal{N}.
	\end{align}
\end{subequations}
In the following, we first convert the feasible set of \eqref{SCuser problem} to the intersection of closed-boxes along with the affine cellular power constraint.
\begin{proposition}\label{proposition feasiblecluster}
	The feasible set of \eqref{SCuser problem} is the intersection of $q_n \in \left[Q^{\rm{min}}_n,P^{\rm{mask}}_n\right],~\forall n \in \mathcal{N}$, and cellular power constraint $\sum\limits_{n \in \mathcal{N}} q_n \leq P^{\rm{max}}$, where the lower-bound constant $Q^{\rm{min}}_n$ is
	\begin{multline}\label{Qmin func}
		Q^{\rm{min}}_n=
		\sum\limits_{k \in \mathcal{K}_n} \beta^n_{k} \Bigg(\prod\limits_{j \in \mathcal{K}_n \atop h^n_{j} > h^n_{k}} \left(1+\beta^n_{j}\right) +\frac{1}{h^n_k}+ \\
		\sum\limits_{j \in \mathcal{K}_n \atop h^n_{j} > h^n_{k}} \frac{ \beta^n_{j} \prod\limits_{l \in \mathcal{K}_n \atop h^n_{k} < h^n_{l} < h^n_{j}} \left(1+\beta^n_{l}\right)}{h^n_j}\Bigg),
	\end{multline}
	in which $\beta^n_k=2^{(R^{\rm{min}}_{k,n}/W_s)} -1,~\forall n \in \mathcal{N},~k \in \mathcal{K}_n$.
\end{proposition}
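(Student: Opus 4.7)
The plan is to project the feasible set of \eqref{SCuser problem} onto the $\boldsymbol{q}$ coordinates and verify that the projection coincides with the intersection claimed in the proposition. The crucial observation is that, for fixed $\boldsymbol{q}$, the per-subchannel rate constraints \eqref{Constraint minrate SCuser} together with the coupling equations \eqref{Constraint sumpow} decouple across subchannels: each $n$ involves only the vector $\boldsymbol{p}^n$ and the single scalar $q_n$, while \eqref{Constraint cell q} and \eqref{Constraint mask q} act on $\boldsymbol{q}$ alone. Thus the characterization reduces to a single-cluster feasibility question: given $q_n \geq 0$, does there exist $\boldsymbol{p}^n \geq 0$ with $\sum_{k \in \mathcal{K}_n} p^n_{k} = q_n$ satisfying $R^n_{k}(\boldsymbol{p}^n) \geq R^{\min}_{k,n}$ for every $k \in \mathcal{K}_n$? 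I would take $Q^{\min}_n$ to be the infimum of such $q_n$ and establish the proposition by (i) identifying this infimum with the closed form \eqref{Qmin func} and (ii) showing that every $q_n \in [Q^{\min}_n, P^{\mathrm{mask}}_n]$ is actually attainable.

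For the lower bound, I would reuse the linearization already employed in the convexity argument above, rewriting each rate constraint as the affine inequality $p^n_{k} \geq \beta^n_{k}\bigl(\sum_{j:\, h^n_{j} > h^n_{k}} p^n_{j} + 1/h^n_{k}\bigr)$ with $\beta^n_{k} = 2^{R^{\min}_{k,n}/W_s} - 1$. Minimizing $\sum_{k \in \mathcal{K}_n} p^n_{k}$ over nonnegative $\boldsymbol{p}^n$ subject to these affine lower bounds is then a linear program, and a standard slackness argument forces every such lower bound to be tight at the optimum (otherwise one could strictly reduce the corresponding $p^n_{k}$ and shrink the objective). Processing the resulting triangular system in decreasing CNR order, beginning with the cluster head (whose constraint immediately yields $p^{n,\star}_{\Phi_n} = \beta^n_{\Phi_n}/h^n_{\Phi_n}$) and substituting downward into each successive user's constraint, produces closed-form minimizing powers $p^{n,\star}_{k}$, whose summation and subsequent regrouping of the nested interference contributions reproduces \eqref{Qmin func}.

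For the matching upper inclusion, I would exhibit, for any $q_n \in [Q^{\min}_n, P^{\mathrm{mask}}_n]$, an explicit feasible $\boldsymbol{p}^n$ summing to $q_n$. Starting from $\boldsymbol{p}^{n,\star}$ and loading the entire surplus $q_n - Q^{\min}_n$ onto the weakest user $\tilde{k}_n = \argmin_{k \in \mathcal{K}_n} h^n_{k}$ works, because $\tilde{k}_n$ is never contained in the higher-CNR set $\{j : h^n_{j} > h^n_{k}\}$ appearing in any other user's SINR denominator, so the modification affects no other user's rate, while the rate of $\tilde{k}_n$ itself only increases and thus stays above $R^{\min}_{\tilde{k}_n,n}$. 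Intersecting the resulting per-subchannel box constraint $q_n \in [Q^{\min}_n, P^{\mathrm{mask}}_n]$ with the already-decoupled cellular constraint \eqref{Constraint cell q} delivers the stated intersection.

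The main obstacle is purely algebraic: while the induction on the cumulative sums $S_i = (1+\beta^n_{k_i})\,S_{i-1} + \beta^n_{k_i}/h^n_{k_i}$ (obtained by ordering users on subchannel $n$ by decreasing CNR) unrolls routinely, re-indexing the unrolled expression so that the nested products are organized by the triples $(k,j,l)$ satisfying $h^n_{k} < h^n_{l} < h^n_{j}$, and cleanly separating the pure noise contributions $1/h^n_{k}$ from the chains arising from stronger interferers, is the one step that demands careful bookkeeping to recover the exact form printed in \eqref{Qmin func}. All remaining ingredients -- decoupling across $n$, convexity of the feasible set, and the construction of the surplus-feasible allocation -- follow from structural observations already established earlier in this section.
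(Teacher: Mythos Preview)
Your proof is correct and follows essentially the same route as the paper: decouple across subchannels, minimize total cluster power by forcing every rate constraint tight, and unroll the resulting triangular system in CNR order to obtain the closed form \eqref{Qmin func}. The paper's version defers the tightness claim to an external reference rather than giving your self-contained LP/slackness argument, and it merely asserts the upper inclusion (that every $q_n\in[Q^{\min}_n,P^{\mathrm{mask}}_n]$ is attainable) without proof; your explicit construction of loading the surplus $q_n-Q^{\min}_n$ onto the weakest user is a clean way to fill that gap, since that user never appears in any other user's interference term.
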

\begin{proof}
	Please see Appendix \ref{appendix Lemma feasible}.
\end{proof}
The feasibility of problems \eqref{sumrate problem} and \eqref{EE problem} can be immediately determined as follows:
\begin{corollary}\label{corol feasible region}
	Problems \eqref{sumrate problem} and \eqref{EE problem} are feasible if and only if $Q^{\rm{min}}_n \leq P^{\rm{mask}}_n,~\forall n \in \mathcal{N}$, and $\sum\limits_{n \in \mathcal{N}} Q^{\rm{min}}_n \leq P^{\rm{max}}$.
\end{corollary}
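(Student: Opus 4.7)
The plan is to reduce this corollary directly to Proposition 1 by observing that problems \eqref{sumrate problem} and \eqref{EE problem} share the same feasible set, which coincides with that of the equivalent formulation \eqref{SCuser problem} introduced just before Proposition 1. Thus the feasibility question for both problems is settled entirely by characterizing when the feasible set of \eqref{SCuser problem} is non-empty.

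First I would invoke Proposition 1 to write the feasible set of \eqref{SCuser problem} as the intersection
$$
\mathcal{F} = \Bigl\{\boldsymbol{q} : q_n \in [Q^{\rm{min}}_n, P^{\rm{mask}}_n]\ \forall n,\ \sum_{n \in \mathcal{N}} q_n \leq P^{\rm{max}}\Bigr\},
$$
lifted back to $\boldsymbol{p}$ via the intra-cluster decomposition underlying \eqref{Constraint sumpow}. Non-emptiness of $\mathcal{F}$ then requires two things: each component interval must itself be non-empty, and the affine cellular constraint must admit at least one point of the resulting box.

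For the necessity direction I would argue component-wise: if any $Q^{\rm{min}}_n > P^{\rm{mask}}_n$, the $n$-th interval is empty and no admissible $q_n$ exists, so \eqref{SCuser problem} is infeasible; and since every feasible $\boldsymbol{q}$ satisfies $q_n \geq Q^{\rm{min}}_n$ for all $n$, summing yields $\sum_n q_n \geq \sum_n Q^{\rm{min}}_n$, so the cellular constraint forces $\sum_n Q^{\rm{min}}_n \leq P^{\rm{max}}$. For the sufficiency direction I would exhibit the canonical point $q_n = Q^{\rm{min}}_n$: under the two stated conditions this lies in every box and satisfies the cellular constraint, hence $\mathcal{F} \neq \emptyset$. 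The corresponding intra-cluster powers witnessing $Q^{\rm{min}}_n$ are already constructed inside the proof of Proposition 1, so no further work is needed.

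There is essentially no obstacle here; the only subtlety worth flagging is that we must invoke the equivalence between \eqref{sumrate problem} and \eqref{SCuser problem} (a routine substitution via \eqref{Constraint sumpow}) and observe that \eqref{EE problem} has the very same feasible set as \eqref{sumrate problem}, so a single non-emptiness argument handles both.
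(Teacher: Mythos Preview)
Your proposal is correct and matches the paper's approach: the paper states this corollary as an immediate consequence of Proposition~\ref{proposition feasiblecluster} without writing out a proof, and what you have spelled out---non-emptiness of each box plus the canonical witness $q_n = Q^{\rm{min}}_n$ for the cellular constraint---is precisely the argument the paper leaves implicit. Your remark that \eqref{EE problem} shares the same feasible set as \eqref{sumrate problem} is exactly what the paper notes explicitly later when treating the EE problem.
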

In the following, we find the closed-form of optimal intra-cluster power allocation as a linear function of any given feasible $\boldsymbol{q}$, thus satisfying Proposition \ref{proposition feasiblecluster}.
\begin{proposition}\label{proposition jointcluster}
	For any given feasible $\boldsymbol{q}=[q_1,\dots,q_n]$, the optimal intra-cluster powers for each cluster $n \in \mathcal{N}$ can be obtained by
	\begin{equation}\label{opt IC i}
		{p^n_{k}}^*=\left(\beta^n_{k} \prod\limits_{j \in \mathcal{K}_n \atop h^n_{j} < h^n_{k}} \left(1-\beta^n_{j}\right)\right) q_n + c^n_{k},~\forall k \in \mathcal{K}_n\setminus\{\Phi_n\},
	\end{equation}
	and
	\begin{equation}\label{opt IC M}
		{p^n_{\Phi_n}}^*= \left(1 - \sum\limits_{i \in \mathcal{K}_n \atop h^n_{i} < h^n_{\Phi_n}} \beta^n_{i} \prod\limits_{j \in \mathcal{K}_n \atop h^n_{j} < h^n_{i}} \left(1-\beta^n_{j}\right)\right) q_n - \sum\limits_{i \in \mathcal{K}_n \atop h^n_{i} < h^n_{\Phi_n}} c^n_{i},
	\end{equation}
	where $\beta^n_{k} = \frac{2^{(R^{\rm{min}}_{k,n}/W_s)}-1} {2^{(R^{\rm{min}}_{k,n}/W_s)}},~ \forall n \in \mathcal{N},~k \in \mathcal{K}_n$, and
	$c^n_{k}=\beta^n_{k} \left( \frac{1}{h^n_{k}} - \sum\limits_{j \in \mathcal{K}_n \atop h^n_{j} < h^n_{k}} \frac{\prod\limits_{l \in \mathcal{K}_n \atop h^n_{j} < h^n_{l} < h^n_{k}} \left(1-\beta^n_{l}\right) \beta^n_{j}} {h^n_{j}}\right),~ \forall n \in \mathcal{N},~k \in \mathcal{K}_n$.
\end{proposition}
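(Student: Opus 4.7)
The plan is to reduce Proposition \ref{proposition jointcluster} to solving a first-order linear recursion once one has established that, inside each cluster, the minimum-rate constraint of every non-cluster-head user must be binding at the sum-rate optimum. This ``only the cluster-head deserves additional power'' property is the substance of the paper's first claimed contribution, so I would either cite it or, equivalently, supply the following short self-contained monotonicity argument.

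Fix cluster $n$ with budget $q_n$ and relabel the users of $\mathcal{K}_n$ in order of increasing CNR so that $h^n_{(1)} < \cdots < h^n_{(M)}$, where $M = |\mathcal{K}_n|$ and $(M)=\Phi_n$. Introduce the tail sums $T_i = \sum_{l \ge i} p^n_{(l)}$, so that $T_1 = q_n$, $T_{M+1}=0$, and $p^n_{(i)} = T_i - T_{i+1}$. A direct computation gives $1+\gamma^n_{(i)} = (h^n_{(i)}T_i+1)/(h^n_{(i)}T_{i+1}+1)$, so the cluster sum rate equals
\[
W_s\sum_{i=1}^M \log_2\frac{h^n_{(i)}T_i+1}{h^n_{(i)}T_{i+1}+1}.
\]
Differentiating in any $T_j$ with $j\ge 2$ yields $\frac{W_s}{\ln 2}\cdot\frac{h^n_{(j)}-h^n_{(j-1)}}{(h^n_{(j)}T_j+1)(h^n_{(j-1)}T_j+1)}>0$, so sum rate is strictly increasing in each $T_j$, $j\ge 2$. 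Because the minimum-rate constraint of user $(i)$ (for $i<M$) can be rewritten as the upper bound $T_{i+1} \le (1-\beta^n_{(i)})T_i - \beta^n_{(i)}/h^n_{(i)}$, the optimum is attained by tightening these bounds successively from $i=1$ upward, which gives the recursion $T_{i+1} = (1-\beta^n_{(i)})T_i - \beta^n_{(i)}/h^n_{(i)}$ together with $p^n_{(i)} = \beta^n_{(i)}T_i + \beta^n_{(i)}/h^n_{(i)}$.

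What remains is to unroll this affine recursion from $T_1 = q_n$. Standard telescoping yields
\[
T_i = q_n\prod_{j<i}(1-\beta^n_{(j)}) - \sum_{j<i}\frac{\beta^n_{(j)}}{h^n_{(j)}}\prod_{j<l<i}(1-\beta^n_{(l)}),
\]
and substituting into $p^n_{(i)} = \beta^n_{(i)}T_i + \beta^n_{(i)}/h^n_{(i)}$ produces exactly \eqref{opt IC i} after translating the ordered labels $(\cdot)$ back into the original notation using the conditions ``$h^n_j < h^n_k$'' and ``$h^n_j < h^n_l < h^n_k$''. The cluster-head formula \eqref{opt IC M} then follows from $p^n_{\Phi_n} = T_M = q_n - \sum_{i<M}p^n_{(i)}$ by collecting the $q_n$-coefficient and the remaining constants $c^n_i$.

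The main obstacle I expect is the binding-constraint step. A naive pairwise power swap between two non-head users can help one of them but simultaneously harm every user whose CNR lies strictly in between, so no local exchange argument settles the matter on its own. The global monotonicity of sum rate in each tail sum $T_j$, $j\ge 2$, sidesteps this difficulty by showing that \emph{any} slack in a non-head minimum-rate constraint can be removed without decreasing sum rate; once this is in hand, the remainder of the proposition is straightforward linear algebra.
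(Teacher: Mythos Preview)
Your proof is correct and follows the same two-step skeleton as the paper: (i) establish that every non-cluster-head minimum-rate constraint is active at the optimum, and (ii) unroll the resulting first-order linear recursion. The difference is in how step (i) is argued. The paper invokes the KKT analysis from Appendix~B of its companion reference to conclude that only the cluster-head deserves additional power, and then rewrites the binding constraints as a forward recursion in the head-sums $q_n-\sum_{j<k}p^n_{(j)}$. Your tail-sum substitution $T_i=\sum_{l\ge i}p^n_{(l)}$ instead yields a self-contained monotonicity argument: the cluster sum rate telescopes to $W_s\sum_i\log_2\frac{h^n_{(i)}T_i+1}{h^n_{(i)}T_{i+1}+1}$, whose partial derivative in each $T_j$ ($j\ge 2$) is manifestly positive, and the minimum-rate constraints become upper bounds $T_{i+1}\le(1-\beta^n_{(i)})T_i-\beta^n_{(i)}/h^n_{(i)}$ that must therefore all bind. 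This avoids any Lagrangian machinery or external citation and makes transparent why a naive pairwise power exchange fails while the global tail-sum argument succeeds. The unrolling step is then equivalent to the paper's (yours works ``from above'' via the $T_i$, theirs ``from below'' via $q_n-\sum_{j<k}p^n_{(j)}$), and both recover \eqref{opt IC i}--\eqref{opt IC M} exactly.
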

\begin{proof}
	Please see Appendix \ref{appendix closedform pow sum-rate}.
\end{proof}
Since the closed-form expressions of optimal intra-cluster power allocation in Proposition \ref{proposition jointcluster} are valid for any given feasible $\boldsymbol{q}$, we can substitute \eqref{opt IC i} and \eqref{opt IC M} directly to problem \eqref{SCuser problem}. For convenience, we first rewrite \eqref{opt IC M} as
\begin{equation}\label{opt IC M qmin}
	{p^n_{\Phi_n}}^*=\alpha_n q_n - c_n,~\forall n \in \mathcal{N},
\end{equation}
where $\alpha_n=\left(1 - \sum\limits_{i \in \mathcal{K}_n \atop h^n_{i} < h^n_{\Phi_n}} \beta^n_{i} \prod\limits_{j \in \mathcal{K}_n \atop h^n_{j} < h^n_{i}} \left(1-\beta^n_{j}\right)\right),~\forall n \in \mathcal{N}$, and $c_n=\sum\limits_{i \in \mathcal{K}_n \atop h^n_{i} < h^n_{\Phi_n}} c^n_{i},~\forall n \in \mathcal{N}$, are nonnegative constants. According to the proof of Proposition \ref{proposition jointcluster} and \eqref{opt IC M qmin}, the SR function of each cluster $n \in \mathcal{N}$ at the optimal point can be formulated as a function of $q_n$ given by
\begin{multline}\label{optimal value}
	R^n_{\rm{opt}} (q_n)=\sum\limits_{k \in \mathcal{K}_n} R^n_{k} (\boldsymbol{p^*}^n) = \sum\limits_{k \in \mathcal{K}_n \atop k \neq \Phi_n} (R^{\rm{min}}_{k,n}) + R^n_{\Phi_n} (q_n) \\
	=\sum\limits_{k \in \mathcal{K}_n \atop k \neq \Phi_n} (R^{\rm{min}}_{k,n}) + W_s \log_2\left( 1 + \left(\alpha_n q_n - c_n\right) h^n_{\Phi_n} \right), \forall n \in \mathcal{N}.
\end{multline}
By utilizing Proposition \ref{proposition feasiblecluster} and \eqref{optimal value}, the joint intra- and inter-cluster power allocation problem \eqref{SCuser problem} can be equivalently transformed to the following inter-cluster power allocation problem
\begin{subequations}\label{SCuser problem eq} 
	\begin{align}\label{obf SCuser problem eq}
		\max_{\boldsymbol{q}}\hspace{.0 cm} 
		~~ & \sum\limits_{n \in \mathcal{N}} W_s \log_2\left( 1 + \left(\alpha_n q_n - c_n\right) h^n_{\Phi_n} \right)
		\\
		\text{s.t.}~~
		\label{Constraint cell eq}
		& \sum\limits_{n \in \mathcal{N}} q_n = P^{\rm{max}},
		\\
		\label{Constraint mask eq}
		& q_n \in [Q^{\rm{min}}_n,P^{\rm{mask}}_n],~\forall n \in \mathcal{N}.
	\end{align}
\end{subequations}
Let us define $\boldsymbol{\tilde{q}}=[\tilde{q}_n],\forall n \in \mathcal{N}$, where $\tilde{q}_n=q_n - \frac{c_n}{\alpha_n},~\forall n \in \mathcal{N}$. Hence, \eqref{SCuser problem eq} can be transformed to the following equivalent OMA problem as
\begin{subequations}\label{eq1 problem} 
	\begin{align}\label{obf eq1 problem}
		\max_{\boldsymbol{\tilde{q}}}\hspace{.0 cm} 
		~~ & \sum\limits_{n \in \mathcal{N}} W_s \log_2\left( 1 + \tilde{q}_n H_n \right)
		\\
		\text{s.t.}~~
		\label{Constraint 1}
		& \sum\limits_{n \in \mathcal{N}} \tilde{q} = \tilde{P}^{\rm{max}},
		\\
		\label{Constraint mask eq 2}
		& \tilde{q}_n \in [\tilde{Q}^{\rm{min}}_n,\tilde{P}^{\rm{mask}}_n],\forall n \in \mathcal{N},
	\end{align}
\end{subequations}
where $H_n=\alpha_n h^n_{\Phi_n},~\forall n \in \mathcal{N}$, $\tilde{P}^{\rm{max}}=P^{\rm{max}} - \sum\limits_{n \in \mathcal{N}} \frac{c_n}{\alpha_n}$, $\tilde{Q}^{\rm{min}}_n=Q^{\rm{min}}_n - \frac{c_n}{\alpha_n},~\forall n \in \mathcal{N}$, and $\tilde{P}^{\rm{mask}}_n=P^{\rm{mask}}_n - \frac{c_n}{\alpha_n},~\forall n \in \mathcal{N}$. 
Constraint \eqref{Constraint 1} is the affine cellular power constraint, and \eqref{Constraint mask eq 2} is derived based on Proposition \ref{proposition feasiblecluster}. The objective function \eqref{obf eq1 problem} is strictly concave in $\boldsymbol{\tilde{q}}$, and the feasible set of \eqref{eq1 problem} is affine, so is convex. Accordingly, problem \eqref{eq1 problem} is convex.
The equivalent FDMA problem \eqref{eq1 problem} can be optimally solved by using the well-known water-filling algorithm \cite{1381759,1576943,6547819,10.1155/2008/643081,8995606}. After some mathematical manipulations, the optimal $\tilde{q}^*_n$ can be obtained as
\begin{align}\label{bisection opt form}
	\tilde{q}^*_n=
	\begin{cases}
		\frac{W_s/(\ln 2)}{\nu^*} - \frac{1}{H_n}, &\quad \left(\frac{W_s/(\ln 2)}{\nu^*} - \frac{1}{H_n}\right) \in [\tilde{Q}^{\rm{min}}_n,\tilde{P}^{\rm{mask}}_n], \\
		0, &\quad \text{otherwise}, \\ 
	\end{cases}
\end{align}
such that $\boldsymbol{\tilde{q}}^*=[\tilde{q}^*_n],~\forall n \in \mathcal{N},$ satisfies \eqref{Constraint 1}. Moreover, $\nu^*$ is the dual optimal corresponding to constraint \eqref{Constraint 1}. For more details, please see Appendix \ref{appendix bisection sumrate}.
The pseudo-code of the bisection method for finding $\nu^*$ is presented in Alg. \ref{Alg bisection}.
\begin{algorithm}[tp]
	\caption{The bisection method for finding $\nu^*$ in \eqref{bisection opt form}.} \label{Alg bisection}
	\begin{algorithmic}[1]
		\STATE Initialize tolerance $\epsilon$, lower-bound $\nu_l$, upper-bound $\nu_h$, and maximum iteration $L$.
		\FOR {$l=1:L$}
		\STATE Set $\nu_m=\frac{\nu_l + \nu_h}{2}$.
		\\
		\STATE\textbf{if}~~$\sum\limits_{n \in \mathcal{N}} \max \left\{\tilde{Q}^{\rm{min}}_n, \min \left\{ \left(\frac{W_s/(\ln 2)}{\nu_m} - \frac{1}{H_n}\right) , \tilde{P}^{\rm{mask}}_n\right\} \right\} < \tilde{P}^{\rm{max}}$~~\textbf{then}
		\\
		\STATE~~~~~Set $\nu_h=\nu_m$.
		\\
		\STATE\textbf{else}
		\\~~~~~Set $\nu_l=\nu_m$.
		\\
		\STATE\textbf{end if}
		\STATE\textbf{if}~~$\frac{\tilde{P}^{\rm{max}} - \sum\limits_{n \in \mathcal{N}} \max \left\{\tilde{Q}^{\rm{min}}_n, \min \left\{ \left(\frac{W_s/(\ln 2)}{\nu_m} - \frac{1}{H_n}\right) , \tilde{P}^{\rm{mask}}_n\right\} \right\}}{\tilde{P}^{\rm{max}}} \leq \epsilon$~~\textbf{then}
		\\
		\STATE~~~~~\textbf{break}.
		\\
		\STATE\textbf{end if}
		\ENDFOR
	\end{algorithmic}
\end{algorithm}
After finding $\boldsymbol{\tilde{q}}^*$, we obtain $\boldsymbol{q}^*$ by using $q^*_n=(\tilde{q}^*_n + \frac{c_n} {\alpha_n}),~\forall n \in \mathcal{N}$. Then, we find the optimal intra-cluster power allocation according to Proposition \ref{proposition jointcluster}. Since problems \eqref{sumrate problem} and \eqref{eq1 problem} are equivalent, the obtained globally optimal solution for \eqref{eq1 problem} is also globally optimal for \eqref{sumrate problem}.

\subsection{Energy Efficiency Maximization Problem}
In this subsection, we find a globally optimal solution for problem \eqref{EE problem}.
The feasible region of problem \eqref{EE problem} is identical to the feasible region of problem \eqref{sumrate problem}. Hence, Proposition \ref{proposition feasiblecluster} can be used to characterize the feasible region of problem \eqref{EE problem}. 
Let us define $P_{\rm{EE}}=\sum\limits_{n \in \mathcal{N}} q_n$ as the cellular power consumption in the EE maximization problem \eqref{EE problem}. For any given $P_{\rm{EE}}$, problem \eqref{EE problem} can be equivalently transformed to the SR maximization problem \eqref{sumrate problem} in which $P^{\rm{max}}=P_{\rm{EE}}$. As a result, the globally optimal solution of \eqref{EE problem} can be obtained by exploring different values of $P_{\rm{EE}} \in \left[\sum\limits_{n \in \mathcal{N}} Q^{\rm{min}}_n,P^{\rm{max}}\right]$, and applying the water-filling Alg. \ref{Alg bisection}, in which $P^{\rm{max}}=P_{\rm{EE}}$. Exploring $P_{\rm{EE}} \in \left[\sum\limits_{n \in \mathcal{N}} Q^{\rm{min}}_n,P^{\rm{max}}\right]$ may be computationally prohibitive, specifically when the stepsize of exhaustive search is small and/or $\sum\limits_{n \in \mathcal{N}} Q^{\rm{min}}_n \to 0$, e.g., when the users have small minimum rate demands.

The SR function in the numerator of the EE function in \eqref{EE problem} is strictly concave in $\boldsymbol{p}$. The denominator of the EE function is an affine function, so is convex. Therefore, problem \eqref{EE problem} is a concave-convex fractional program with a pseudoconcave objective function \cite{6213038,Jorswieckfractional}. The pseudoconcavity of the objective function in \eqref{EE problem} implies that any stationary point is indeed globally optimal and the Karush–Kuhn–Tucker (KKT) optimality conditions are sufficient if a constraint qualification is fulfilled \cite{6213038,Jorswieckfractional}. For more details, please see Appendix \ref{appendix converge Dinkelbach}. Hence, the globally optimal solution of \eqref{EE problem} can be obtained by using the well-known Dinkelbach algorithm \cite{6213038,Jorswieckfractional}. In this algorithm, we iteratively solve the following problem
\begin{align}\label{EEfrac problem}
	\max_{ \boldsymbol{p} \geq 0}~&F(\lambda,\boldsymbol{p})=\left(\sum\limits_{n \in \mathcal{N}} \sum\limits_{k \in \mathcal{K}_n} R^n_{k} (\boldsymbol{p}^n)\right) - \lambda
	\left(\sum\limits_{n \in \mathcal{N}} \sum\limits_{k \in \mathcal{K}_n} p^n_{k} + P_{\rm{C}}\right) \nonumber
	\\
	\text{s.t.}~&\eqref{Constraint minrate}\text{-}\eqref{Constraint mask power},
\end{align}
where $\lambda \geq 0$ is the fractional parameter, and $F(\lambda,\boldsymbol{p})$ is strictly concave in $\boldsymbol{p}$. 
This algorithm is described as follows: We first initialize parameter $\lambda_{(0)}$ such that $F\left(\lambda_{(0)},\boldsymbol{p}^*\right) \geq 0$.
At each iteration $(t)$, we set $\lambda_{(t)}=E(\boldsymbol{p}^*_{(t-1)})$, where $\boldsymbol{p}^*_{(t-1)}$ is the optimal solution obtained from the prior iteration $(t-1)$. After that, we find $\boldsymbol{p}^*_{(t)}$ by solving \eqref{EEfrac problem} in which $\lambda=\lambda_{(t)}$. We repeat the iterations until $|F\left(\lambda_{(t)},\boldsymbol{p}^*_{t}\right)| \leq \Upsilon$, where $\Upsilon$ is a tolerance tuning the optimality gap. The pseudo-code of the Dinkelbach algorithm for solving \eqref{EE problem} is presented in Alg. \ref{Alg dinkelbach}.
\begin{algorithm}[tp]
	\caption{The Dinkelbach method for solving the energy efficiency maximization problem.} \label{Alg dinkelbach}
	\begin{algorithmic}[1]
		\STATE Initialize parameter $\lambda_{(0)}$ satisfying $F(\lambda_{(0)},\boldsymbol{p}^*) \geq 0$, tolerance $\Upsilon$ (sufficiently small), and $t=0$.
		\WHILE {$F(\lambda,\boldsymbol{p}^*) > \Upsilon$}
		\STATE Set $\lambda=E(\boldsymbol{p}^*)$. Then, solve \eqref{EEfrac problem} and find $\boldsymbol{p}^*$.
		\STATE\textbf{if}~~$|F\left(\lambda,\boldsymbol{p}^*\right)| \leq \Upsilon$~\textbf{then}
		\STATE~~~~\textbf{break}.
		\STATE\textbf{end if}
		\ENDWHILE
	\end{algorithmic}
\end{algorithm}
Similar to the transformation of \eqref{sumrate problem} to \eqref{SCuser problem}, we define $q_n=\sum\limits_{k \in \mathcal{K}_n} p^n_{k}$ as the power consumption of cluster $n$. The main problem \eqref{EEfrac problem} can be equivalently transformed to the following joint intra- and inter-cluster power allocation problem as
\begin{equation}\label{EEq problem}
	\max_{\boldsymbol{p} \geq 0, \boldsymbol{q} \geq 0}	
	~\left(\sum\limits_{n \in \mathcal{N}} \sum\limits_{k \in \mathcal{K}_n} R^n_{k} (\boldsymbol{p}^n)\right) - \lambda
	\left(\sum\limits_{n \in \mathcal{N}} q_n + P_{\rm{C}}\right)
	~~~~~\text{s.t.}~\eqref{Constraint minrate SCuser}\text{-}\eqref{Constraint mask q}.
\end{equation}
The feasible set of problems \eqref{SCuser problem} and \eqref{EEq problem} is identical, thus the feasibility of \eqref{EEq problem} can be characterized by Proposition \ref{proposition feasiblecluster}. 
\begin{proposition}\label{proposition intraEE}
	For any given feasible $\boldsymbol{q}$, the optimal intra-cluster power allocation in problem \eqref{EEq problem} can be obtained by using \eqref{opt IC i} and \eqref{opt IC M}.
\end{proposition}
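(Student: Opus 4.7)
The plan is to reduce Proposition \ref{proposition intraEE} to Proposition \ref{proposition jointcluster} by exploiting the structure of the Dinkelbach subproblem \eqref{EEq problem} once the inter-cluster budget vector $\boldsymbol{q}$ is fixed. First, I would freeze $\boldsymbol{q}$ and observe that the penalty term $\lambda \left(\sum_{n \in \mathcal{N}} q_n + P_{\rm{C}}\right)$ becomes a constant, so it plays no role in the intra-cluster optimization. What remains of the objective is $\sum_{n \in \mathcal{N}} \sum_{k \in \mathcal{K}_n} R^n_{k}(\boldsymbol{p}^n)$, which is precisely the objective of \eqref{SCuser problem}.

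Next, I would argue that the remaining problem decomposes across clusters. Given $\boldsymbol{q}$, the constraint \eqref{Constraint sumpow} couples the powers only within each cluster $n$, the per-symbol rate constraint \eqref{Constraint minrate SCuser} is local to each cluster, and \eqref{Constraint cell q}, \eqref{Constraint mask q} involve only $\boldsymbol{q}$, which is already fixed. Therefore, the intra-cluster subproblem for each $n \in \mathcal{N}$ is exactly the one treated inside the proof of Proposition \ref{proposition jointcluster}, namely to maximize $\sum_{k \in \mathcal{K}_n} R^n_{k}(\boldsymbol{p}^n)$ subject to $\sum_{k \in \mathcal{K}_n} p^n_{k} = q_n$, $p^n_k \geq 0$, and $R^n_{k}(\boldsymbol{p}^n) \geq R^{\rm{min}}_{k,n}$ for all $k \in \mathcal{K}_n$.

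Since feasibility of $\boldsymbol{q}$ in the sense of Proposition \ref{proposition feasiblecluster} guarantees $q_n \in [Q^{\rm{min}}_n, P^{\rm{mask}}_n]$, the intra-cluster subproblem is feasible and admits the unique closed-form optimizer \eqref{opt IC i}--\eqref{opt IC M}. Applying this optimizer cluster-by-cluster yields the claim. The argument requires no new calculation and no case analysis in $\lambda$, because $\lambda$ drops out entirely once $\boldsymbol{q}$ is fixed; the only mild point to check is that feasibility of $\boldsymbol{q}$ in \eqref{EEq problem} coincides with feasibility in \eqref{SCuser problem}, which is immediate since the two problems share constraints \eqref{Constraint minrate SCuser}--\eqref{Constraint mask q}. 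I do not anticipate any real obstacle, as the proof is essentially a separability-and-reduction observation rather than a new optimization argument.
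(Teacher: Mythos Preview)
Your proposal is correct and follows essentially the same route as the paper: fix $\boldsymbol{q}$ so that the penalty term $\lambda\bigl(\sum_{n}q_n+P_{\rm C}\bigr)$ becomes a constant, observe that the residual objective and constraints coincide with those of \eqref{SCuser problem}, and then invoke Proposition~\ref{proposition jointcluster}. Your write-up is slightly more explicit about the per-cluster decomposition and the identical feasibility sets, but no new idea is involved.
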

\begin{proof}
	When $\boldsymbol{q}$ is fixed, the second term $\lambda
	\left(\sum\limits_{n \in \mathcal{N}} q_n + P_{\rm{C}}\right)$ in \eqref{EEq problem} is a constant. Hence, the objective function of \eqref{EEq problem} can be equivalently rewritten as maximizing users SR given by $\max\limits_{\boldsymbol{p} \geq 0}	
	~\left(\sum\limits_{n \in \mathcal{N}} \sum\limits_{k \in \mathcal{K}_n} R^n_{k} (\boldsymbol{p}^n)\right)$, which is independent of $\lambda$. Hence, for any given feasible $\boldsymbol{q}$, problems \eqref{EEq problem} and \eqref{SCuser problem} are identical. Accordingly, Proposition \ref{proposition jointcluster} also holds for any given feasible $\boldsymbol{q}$, and $\lambda$ in \eqref{EEq problem}.
\end{proof}
Similar to the SR maximization problem \eqref{SCuser problem}, we substitute \eqref{opt IC i} and \eqref{opt IC M} to problem \eqref{EEq problem}. By utilizing Proposition \ref{proposition feasiblecluster} and \eqref{optimal value}, the joint intra- and inter-cluster power allocation problem \eqref{EEq problem} can be equivalently transformed to the following inter-cluster power allocation problem
\begin{subequations}\label{EE problem eq} 
	\begin{align}\label{obf EE problem eq}
		\max_{\boldsymbol{q}}\hspace{.0 cm} 
		& \hat{F}(\boldsymbol{q})= \left(\sum\limits_{n \in \mathcal{N}} W_s \log_2\left( 1 + \left(\alpha_n q_n - c_n\right) h^n_{\Phi_n} \right) \right) - \lambda
		\left(\sum\limits_{n \in \mathcal{N}} q_n\right)
		\\
		\text{s.t.}~~
		\label{Constraint EEcell eq}
		& \sum\limits_{n \in \mathcal{N}} q_n \leq P^{\rm{max}},~~~q_n \in [Q^{\rm{min}}_n,P^{\rm{mask}}_n],\forall n \in \mathcal{N},
	\end{align}
\end{subequations}
where $\alpha_n$ and $c_n$ are defined in \eqref{opt IC M qmin}. Note that since $\lambda$ and $P_{\rm{C}}$ are constants, the term $-\lambda P_{\rm{C}}$ can be removed from \eqref{EEq problem}, so is removed in \eqref{obf EE problem eq} during the equivalent transformation. The differences between problems \eqref{SCuser problem eq} and \eqref{EE problem eq} are the additional term $-\lambda \left(\sum\limits_{n \in \mathcal{N}} q_n\right)$ in $\hat{F}(\boldsymbol{q})$, and also inequality constraint \eqref{Constraint EEcell eq}. 
\begin{proposition}\label{proposition bisectionEE}
	At the optimal point of the EE maximization problem \eqref{EEfrac problem},  if 
	\begin{equation}\label{EE fullbudget}
		\frac{W_s/(\ln 2)}{\lambda} - \frac{1 - c_n h^n_{\Phi_n}}{\alpha_n h^n_{\Phi_n}} > P^{\rm{mask}}_n,~\forall n \in \mathcal{N},
	\end{equation}
	the cellular power constraint \eqref{Constraint cell power} is active, meaning that $\sum\limits_{n \in \mathcal{N}} q^*_n = P^{\rm{max}}$.
\end{proposition}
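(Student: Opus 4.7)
The plan is to apply a KKT analysis to the equivalent inter-cluster formulation \eqref{EE problem eq}, which by Proposition~\ref{proposition intraEE} inherits the same cellular power activity structure as \eqref{EEfrac problem} for a fixed $\lambda$. I attach multipliers $\mu\geq 0$, $\xi_n\geq 0$ and $\eta_n\geq 0$ to $\sum_n q_n\leq P^{\rm{max}}$, $q_n\leq P^{\rm{mask}}_n$ and $q_n\geq Q^{\rm{min}}_n$, respectively. Differentiating $\hat F$ in \eqref{obf EE problem eq} yields the stationarity condition
\begin{equation*}
\frac{W_s\,\alpha_n h^n_{\Phi_n}/\ln 2}{1+(\alpha_n q_n^{*}-c_n)\,h^n_{\Phi_n}}=\lambda+\mu+\xi_n-\eta_n,
\end{equation*}
whose left-hand side is strictly decreasing in $q_n^{*}$. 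Writing the unconstrained water level as $q_n^{\circ}:=\frac{W_s/\ln 2}{\lambda}-\frac{1-c_n h^n_{\Phi_n}}{\alpha_n h^n_{\Phi_n}}$, the hypothesis \eqref{EE fullbudget} reads exactly $q_n^{\circ}>P^{\rm{mask}}_n$ for every $n$.

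I then argue by contradiction: suppose $\mu^{*}=0$. For each $n$, complementary slackness leaves three exhaustive possibilities. In case (i), an interior optimum $q_n^{*}\in(Q^{\rm{min}}_n,P^{\rm{mask}}_n)$ sets $\xi_n^{*}=\eta_n^{*}=0$, so stationarity forces $q_n^{*}=q_n^{\circ}$, contradicting $q_n^{\circ}>P^{\rm{mask}}_n$. In case (ii), a floor activation $q_n^{*}=Q^{\rm{min}}_n$ (with $\xi_n^{*}=0$, $\eta_n^{*}\geq 0$) makes the left-hand side evaluated at $Q^{\rm{min}}_n$ at most $\lambda$, so strict monotonicity gives $q_n^{\circ}\leq Q^{\rm{min}}_n\leq P^{\rm{mask}}_n$, again contradicting the hypothesis. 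Only case (iii), the ceiling activation $q_n^{*}=P^{\rm{mask}}_n$, is consistent with the KKT system. Hence every coordinate saturates its own mask.

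Summing over $n$ yields $\sum_n q_n^{*}=\sum_n P^{\rm{mask}}_n$, and in the only nontrivial regime, where the cellular cap actually tightens the aggregate masks (i.e.\ $\sum_n P^{\rm{mask}}_n>P^{\rm{max}}$), this violates \eqref{Constraint cell power}; otherwise the cellular cap is redundant and the claim is vacuously true. Therefore $\mu^{*}>0$, and complementary slackness $\mu^{*}\!\bigl(\sum_n q_n^{*}-P^{\rm{max}}\bigr)=0$ delivers $\sum_n q_n^{*}=P^{\rm{max}}$, as claimed. The main obstacle is tracking the three KKT cases at each coordinate simultaneously with the single coupling constraint; the decisive insight is that \eqref{EE fullbudget} is equivalent to $\partial\hat F/\partial q_n>0$ throughout the entire feasible box whenever $\mu=0$, so any slack in the cellular budget can be converted into a strict local improvement, which is precisely what rules out $\mu^{*}=0$.
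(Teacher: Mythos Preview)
Your argument is correct and, as you yourself recognise in the final sentence, reduces to exactly the paper's proof: the hypothesis \eqref{EE fullbudget} is equivalent to $\partial\hat F/\partial q_n>0$ on the entire box $[Q^{\rm min}_n,P^{\rm mask}_n]$, so the strictly concave objective is increasing in every coordinate and the sum constraint must bind. The paper states this monotonicity directly rather than threading it through a KKT case split; one minor wording slip is that when $\sum_n P^{\rm mask}_n<P^{\rm max}$ the conclusion is not ``vacuously true'' but would actually fail (you would obtain $\sum_n q_n^{*}=\sum_n P^{\rm mask}_n<P^{\rm max}$)---both you and the paper are implicitly working in the non-degenerate regime $\sum_n P^{\rm mask}_n\geq P^{\rm max}$.
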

\begin{proof} 
	The optimal solution of \eqref{EE problem eq} is unique if and only if the objective function \eqref{obf EE problem eq} is strictly concave. For the case that the concave function in \eqref{obf EE problem eq} is increasing in $\boldsymbol{q}$, we can guarantee that at the optimal point, the cellular power constraint \eqref{Constraint EEcell eq} is active. In other words, for the case that $\frac{\partial \hat{F}(\boldsymbol{q})}{\partial q_n} > 0,~\forall n \in \mathcal{N}$, for any $q_n \in [Q^{\rm{min}}_n,P^{\rm{mask}}_n],~\forall n \in \mathcal{N}$, the optimal $\boldsymbol{q}^*$ satisfies $\sum\limits_{n \in \mathcal{N}} q^*_n = P^{\rm{max}}$. In this case, the cellular power constraint \eqref{Constraint EEcell eq} can be replaced with $\sum\limits_{n \in \mathcal{N}} q_n = P^{\rm{max}}$, thus the optimization problem \eqref{EE problem eq} can be equivalently transformed to the SR maximization problem \eqref{SCuser problem eq} whose globally optimal solution can be obtained by Alg. \ref{Alg bisection}.
	In the following, we find a sufficient condition, where it is guaranteed that $\frac{\partial \hat{F}(\boldsymbol{q})}{\partial q_n} > 0,~\forall n \in \mathcal{N}$, for any $q_n \in [Q^{\rm{min}}_n,P^{\rm{mask}}_n],~\forall n \in \mathcal{N}$. The condition $\frac{\partial \hat{F}(\boldsymbol{q})}{\partial q_n} > 0,~\forall n \in \mathcal{N}$ can be rewritten as
	$
	\frac{W_s \alpha h^n_{\Phi_n}/(\ln 2)}{1 + \left(\alpha_n q_n - c_n\right) h^n_{\Phi_n}} - \lambda > 0,~\forall n \in \mathcal{N}.
	$
	After some mathematical manipulations, the latter inequality is rewritten as
	\begin{equation}\label{suf EE bound}
		q_n < \frac{W_s/(\ln 2)}{\lambda} - \frac{1-c_n h^n_{\Phi_n}}{\alpha_n h^n_{\Phi_n}},~\forall n \in \mathcal{N}.
	\end{equation}
	The right-hand side of \eqref{suf EE bound} is a constant providing an upper-bound for the region of $\boldsymbol{q}$ such that $\frac{\partial \hat{F}(\boldsymbol{q})}{\partial q_n} > 0,~\forall n \in \mathcal{N}$. The inequality in \eqref{suf EE bound} holds for any $q_n \in [Q^{\rm{min}}_n,P^{\rm{mask}}_n],~\forall n \in \mathcal{N}$, if and only if $\frac{W_s/(\ln 2)}{\lambda} - \frac{1-c_n h^n_{\Phi_n}}{\alpha_n h^n_{\Phi_n}} > P^{\rm{mask}}_n,~\forall n \in \mathcal{N}$, and the proof is completed.
\end{proof}
If \eqref{EE fullbudget} holds for the given $\lambda$, we guarantee that $\sum\limits_{n \in \mathcal{N}} q^*_n = P^{\rm{max}}$, meaning that the EE problem \eqref{EE problem eq} can be equivalently transformed to the SR maximization problem \eqref{SCuser problem eq} whose globally optimal solution is obtained by using Alg. \ref{Alg bisection}.

For the case that \eqref{EE fullbudget} does not hold, Alg. \ref{Alg bisection} may be suboptimal for \eqref{EE problem eq}. In this case, similar to the transformation of \eqref{SCuser problem eq}  to \eqref{eq1 problem}, we define $\boldsymbol{\tilde{q}}=[\tilde{q}_n],~\forall n \in \mathcal{N}$, where $\tilde{q}_n=q_n - \frac{c_n}{\alpha_n},~\forall n \in \mathcal{N}$. Problem \eqref{EE problem eq} can thus be rewritten as
\begin{subequations}\label{EEeq problem} 
	\begin{align}\label{obf EEeq problem}
		\max_{\boldsymbol{\tilde{q}}}\hspace{.0 cm} 
		~~ & \sum\limits_{n \in \mathcal{N}} W_s \log_2\left( 1 + \tilde{q}_n H_n \right) - \lambda
		\left(\sum\limits_{n \in \mathcal{N}} \tilde{q}_n\right)
		\\
		\text{s.t.}~~
		\label{ConstraintEEeq 1}
		& \sum\limits_{n \in \mathcal{N}} \tilde{q}_n \leq \tilde{P}^{\rm{max}},~~~q_n \in [\tilde{Q}^{\rm{min}}_n,\tilde{P}^{\rm{mask}}_n],\forall n \in \mathcal{N},
	\end{align}
\end{subequations}
where $H_n=\alpha_n h^n_{\Phi_n},~\forall n \in \mathcal{N}$, $\tilde{P}^{\rm{max}}=P^{\rm{max}} - \sum\limits_{n \in \mathcal{N}} \frac{c_n}{\alpha_n}$, $\tilde{Q}^{\rm{min}}_n=Q^{\rm{min}}_n - \frac{c_n}{\alpha_n},~\forall n \in \mathcal{N}$, and $\tilde{P}^{\rm{mask}}_n=P^{\rm{mask}}_n - \frac{c_n}{\alpha_n},~\forall n \in \mathcal{N}$. The equivalent FDMA convex problem \eqref{EEeq problem} can be solved by using the Lagrange dual method with subgradient algorithm or interior point methods (IPMs) \cite{Boydconvex,MinimizationMethods,boydsubgradient}. 
The derivations of the subgradient algorithm for solving \eqref{EEeq problem} is provided in Appendix \ref{appendix subgradient}. Moreover, the derivations of the barrier algorithm with inner Newton's method for solving \eqref{EEeq problem} is provided in Appendix \ref{appendix barrier}.
\begin{algorithm}[tp]
	\caption{The mixed water-filling/subgradient method for solving problem \eqref{EEfrac problem}.} \label{Alg mixedWFSub}
	\begin{algorithmic}[1]
		\STATE Calculate $\Phi_n=\left(\frac{W_s/(\ln 2)}{\lambda} - \frac{1 - c_n h^n_{\Phi_n}}{\alpha_n h^n_{\Phi_n}}\right) - P^{\rm{mask}}_n,~\forall n \in \mathcal{N}$.
		\STATE\textbf{if}~~$\min\limits_{n \in \mathcal{N}}\{\Phi_n\} > 0$~\textbf{then}
		\STATE~~~~Find $\boldsymbol{q}^*$ by using the water-filling Alg. \ref{Alg bisection}.
		\STATE\textbf{else}~~
		\STATE~~~~Initialize Lagrange multiplier $\nu^{(0)}$, step size $\epsilon_\text{s}$, and iteration index $t=0$.
		\STATE~~~~\textbf{repeat}
		\STATE~~~~~~~~Set $t:=t+1$.
		\STATE~~~~~~~~~~~~Find $\boldsymbol{\tilde{q}}^{(t)}$ by using $\tilde{q}^{(t)}_n= \left[\frac{W_s/(\ln 2) }{\lambda + \nu^{(t-1)}} - \frac{1}{H_n}\right]_{\tilde{Q}^{\rm{min}}_n}^{\tilde{P}^{\rm{mask}}_n},~\forall n$.
		\STATE~~~~~~~~Update $\nu^{(t)}=\left[\nu^{(t-1)} - \epsilon_\text{s} \left( \tilde{P}^{\rm{max}} - \sum\limits_{n \in \mathcal{N}} \tilde{q}^{(t)}_n \right)\right]^+$.
		\STATE~~~~\textbf{until} convergence of $\boldsymbol{\tilde{q}}^{(t)}$.
		\STATE~~~~Find $\boldsymbol{q}^*$ by using $q_n=\tilde{q}^*_n + \frac{c_n}{\alpha_n},~\forall n \in \mathcal{N}$.
		\STATE\textbf{end if}
	\end{algorithmic}
\end{algorithm}
According to the above, depending on the value of $\lambda$ at each Dinkelbach iteration, \eqref{EEfrac problem} can be solved by using Alg. \ref{Alg bisection} or subgradient/barrier method. The pseudo-codes of our proposed algorithms for solving \eqref{EEfrac problem} in Step 3 of Alg. \ref{Alg dinkelbach} based on the subgradient and barrier methods are presented in Algs. \ref{Alg mixedWFSub} and \ref{Alg mixedWFBarrier}, respectively.
\begin{algorithm}[tp]
	\caption{The mixed water-filling/barrier method for solving problem \eqref{EEfrac problem}.} \label{Alg mixedWFBarrier}
	\begin{algorithmic}[1]
		\STATE Calculate $\Phi_n=\left(\frac{W_s/(\ln 2)}{\lambda} - \frac{1 - c_n h^n_{\Phi_n}}{\alpha_n h^n_{\Phi_n}}\right) - P^{\rm{mask}}_n,~\forall n \in \mathcal{N}$.
		\STATE\textbf{if}~~$\min\limits_{n \in \mathcal{N}}\{\Phi_n\} > 0$~\textbf{then}
		\STATE~~~~Find $\boldsymbol{q}^*$ by using the water-filling Alg. \ref{Alg bisection}.
		\STATE\textbf{else}~~
		\STATE~~~~Initialize $\boldsymbol{\tilde{q}}$, $0<\alpha<0.5$, $0<\beta<1$, $\mu > 1$, $t \gg 1$, $0 < \epsilon_N \ll 1$, and $0 < \epsilon_B \ll 1$.
		\STATE~~~~\textbf{repeat}
		\STATE~~~~~~~~Set $\Delta \boldsymbol{\tilde{q}} = -\nabla U(\boldsymbol{\tilde{q}})~(\nabla^2 U(\boldsymbol{\tilde{q}}))^{-1}$.
		\item~~~~~~~~Set $\lambda_B=-\Delta \boldsymbol{\tilde{q}}.\nabla U(\boldsymbol{\tilde{q}})^\text{T}$.
		\STATE~~~~~~~~\textbf{if}~~$\lambda_B/2 \leq \epsilon_N$~\textbf{then}
		\STATE~~~~~~~~~~~~\textbf{break}
		\STATE~~~~~~~~\textbf{end if}
		\STATE~~~~~~~~Initialize $l=1$.
		\STATE~~~~~~\textbf{while} $(\tilde{q}_n + l \Delta \boldsymbol{\tilde{q}}_n) \notin [\tilde{Q}^{\rm{min}}_n,\tilde{P}^{\rm{mask}}_n],~\forall n \in \mathcal{N}$, \textbf{or} $\sum\limits_{n \in \mathcal{N}} \left(\tilde{q}_n + l \Delta \boldsymbol{\tilde{q}}_n\right) > \tilde{P}^{\rm{max}}$ \textbf{do}
		\STATE~~~~~~~~~~~~$l:=\beta l$
		\STATE~~~~~~~~\textbf{end while}
		\STATE~~~~~~~~\textbf{while} $U(\boldsymbol{\boldsymbol{\tilde{q}}}+l \Delta \boldsymbol{\tilde{q}}) > U(\boldsymbol{\boldsymbol{\tilde{q}}}) + \alpha l \Delta \boldsymbol{\tilde{q}}.\nabla U(\boldsymbol{\tilde{q}})^\text{T}$ \textbf{do}
		\STATE~~~~~~~~~~~~$l:=\beta l$
		\STATE~~~~~~~~\textbf{end while}
		\STATE~~~~~~~~Set $\boldsymbol{\tilde{q}} = \boldsymbol{\tilde{q}}+ l \Delta \boldsymbol{\tilde{q}}$.
		\STATE~~~~~~~~\textbf{if}~~$1/t \leq \epsilon_B$~\textbf{then}
		\STATE~~~~~~~~~~~~\textbf{break}
		\STATE~~~~~~~~\textbf{end if}
		\STATE~~~~~~~~Set $t:=\mu t$.
		\STATE\textbf{end if}
	\end{algorithmic}
\end{algorithm}
After finding $\boldsymbol{q}^*$ via Algs. \ref{Alg mixedWFSub} or \ref{Alg mixedWFBarrier}, we find the optimal intra-cluster power allocation by using \eqref{opt IC i} and \eqref{opt IC M}. 

\subsection{Important Theoretical Insights of the Optimal Power Allocation for Maximizing SR/EE}
Here, we present the important theoretical insights of optimal power allocation for the SR and EE maximization problems.
\subsubsection{Sum-Rate Maximization}\label{subsection insights sum-rate}
In Hybrid-NOMA, it is guaranteed that at the optimal point, the cellular power constraint is active, meaning that all the available BS's power will be distributed among clusters. According to the proof of Proposition \ref{proposition jointcluster}, it is guaranteed that at the optimal point, only the cluster-head users get additional power, and all the other users get power to only maintain their minimal rate demands on each subchannel. Hence, the remaining cellular power will be distributed among the cluster-head users. According to the analysis of KKT optimality conditions in Appendix \ref{appendix bisection sumrate}, it is shown that there is a competition among cluster-head users to get the rest of the cellular power.
\begin{remark}\label{remark virtualuser}
	In the transformation of \eqref{sumrate problem} to \eqref{eq1 problem}, the Hybrid-NOMA system is equivalently transformed to a virtual FDMA system including a single virtual BS with maximum power $\tilde{P}^{\rm{max}}=P^{\rm{max}} - \sum\limits_{n \in \mathcal{N}} \frac{c_n}{\alpha_n}$, and $N$ virtual OMA users operating in $N$ subchannels with maximum allowable power $\tilde{P}^{\rm{mask}}_n = P^{\rm{mask}}_n - \frac{c_n}{\alpha_n},~\forall n \in \mathcal{N}$. Each cluster $n \in \mathcal{N}$ is indeed a virtual OMA user whose CNR is $H_n=\alpha_n h^n_{\Phi_n}$, which depends on $\alpha_n$ that is a function of the minimum rate demand of users with lower decoding order in cluster $n$, and the CNR of the cluster-head user, whose index is $\Phi_n$.
	The allocated power to the virtual OMA user $n$ is formulated by $\tilde{q}_n=q_n - \frac{c_n}{\alpha_n}$.
	Each virtual OMA user $n$ has also a minimum power demand $\tilde{Q}^{\rm{min}}_n=Q^{\rm{min}}_n - \frac{c_n}{\alpha_n}$, in order to guarantee the individual minimum rate demand of its multiplexed users in $\mathcal{K}_n$ on subchannel $n$. For any given virtual clusters power budget $\boldsymbol{\tilde{q}}=[\tilde{q}_n],\forall n \in \mathcal{N}$, the achievable rate of each virtual OMA user is the SR of its multiplexed users, which is the sum-capacity of subchannel $n$.
\end{remark}
Based on the definition of virtual OMA users for the SR maximization problem in Hybrid-NOMA and the KKT optimality conditions analysis, the exemplary models in Fig. \ref{Fig Systemmodel} can be equivalently transformed to their corresponding virtual FDMA systems shown in Fig. \ref{Fig virtual system}.
\begin{figure*}
	\centering
	\subfigure[SC-NOMA: A single virtual user.]{
		\includegraphics[scale=0.33]{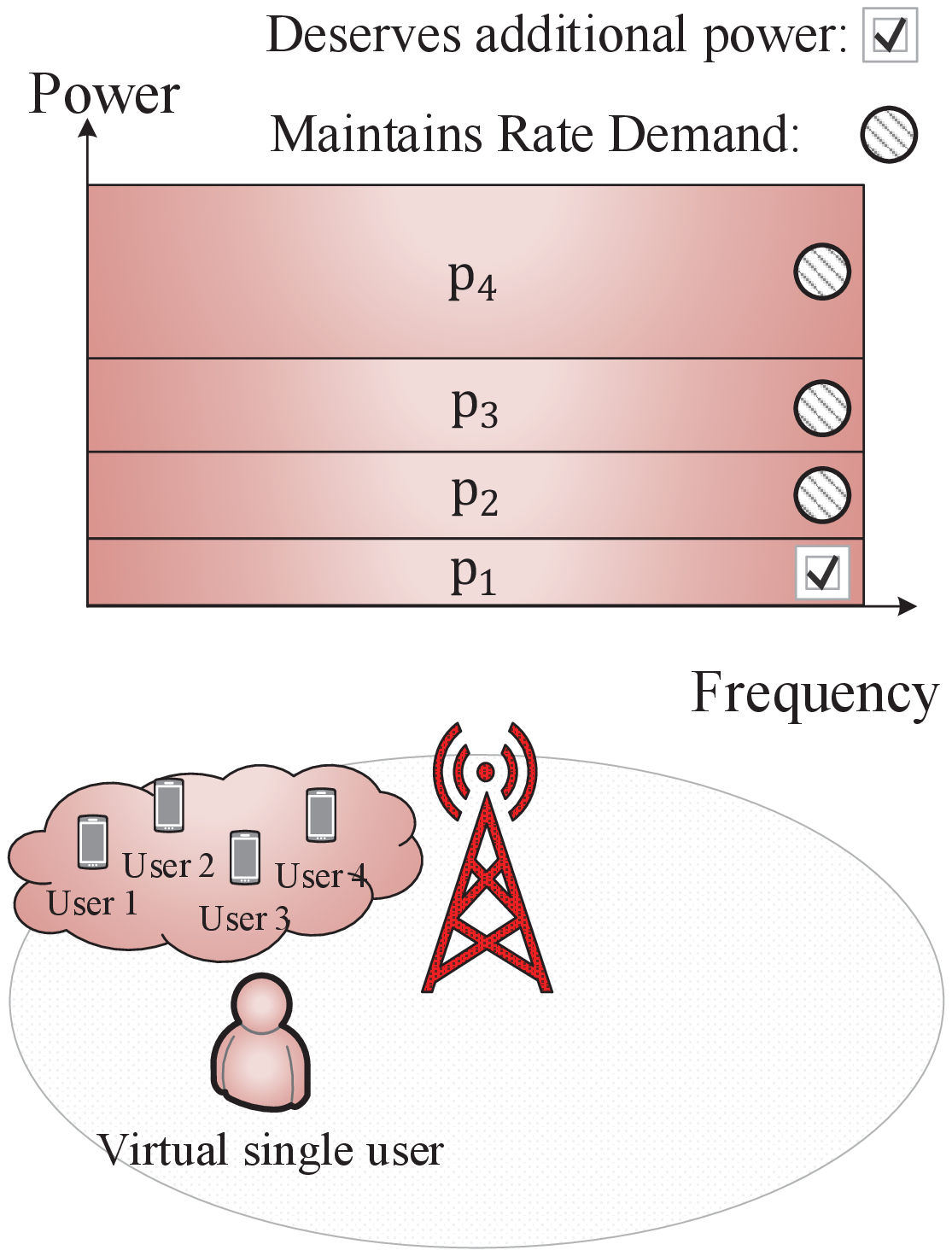}
		\label{Fig_SC-NOMA-Virtual}
	}\hfill
	\subfigure[FDMA-NOMA: FDMA with 2 virtual OMA users.]{
		\includegraphics[scale=0.33]{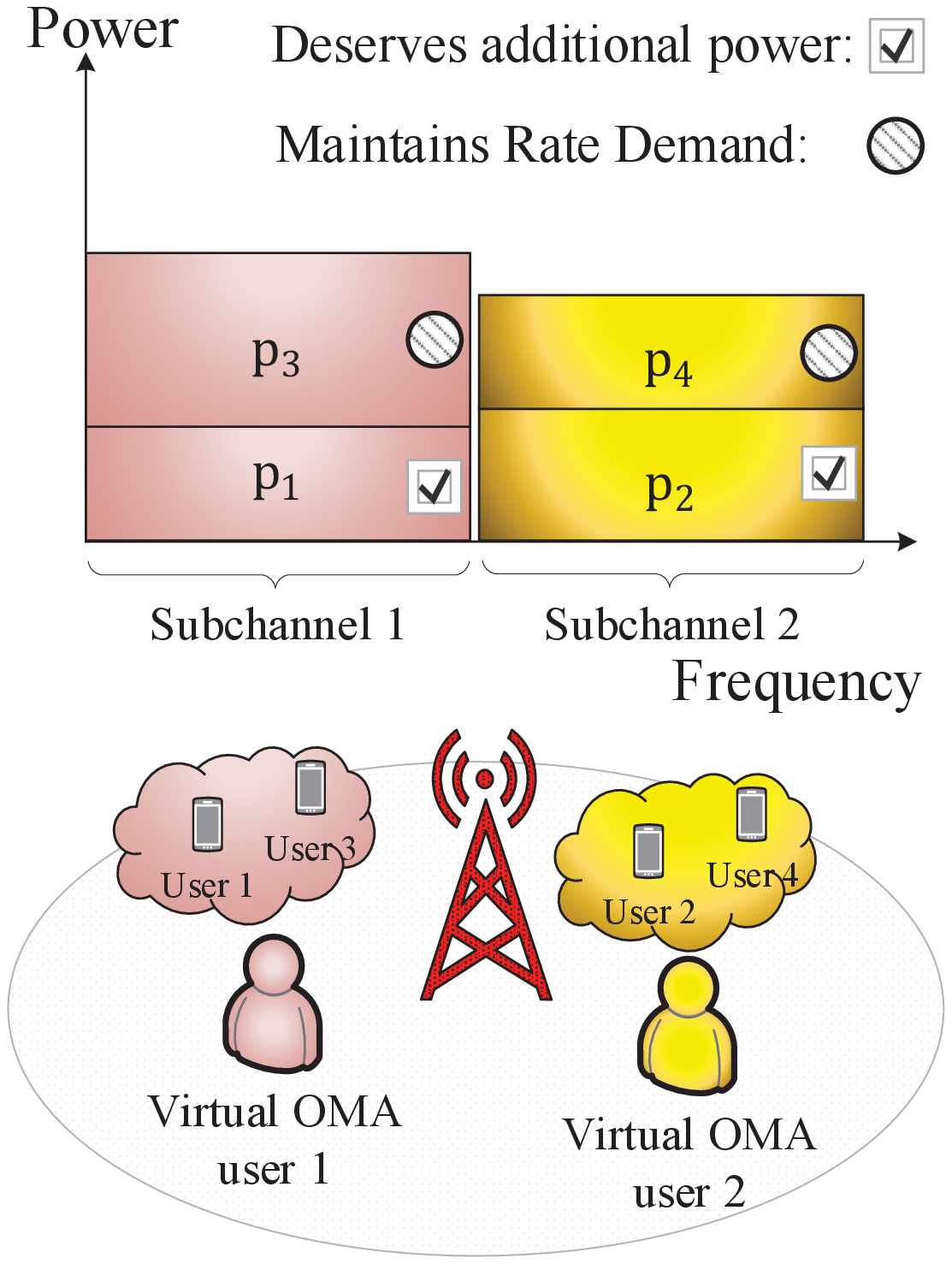}
		\label{Fig_FDMA-NOMA-Virtual}
	}\hfill
	\subfigure[FDMA: Real OMA users.]{
		\includegraphics[scale=0.33]{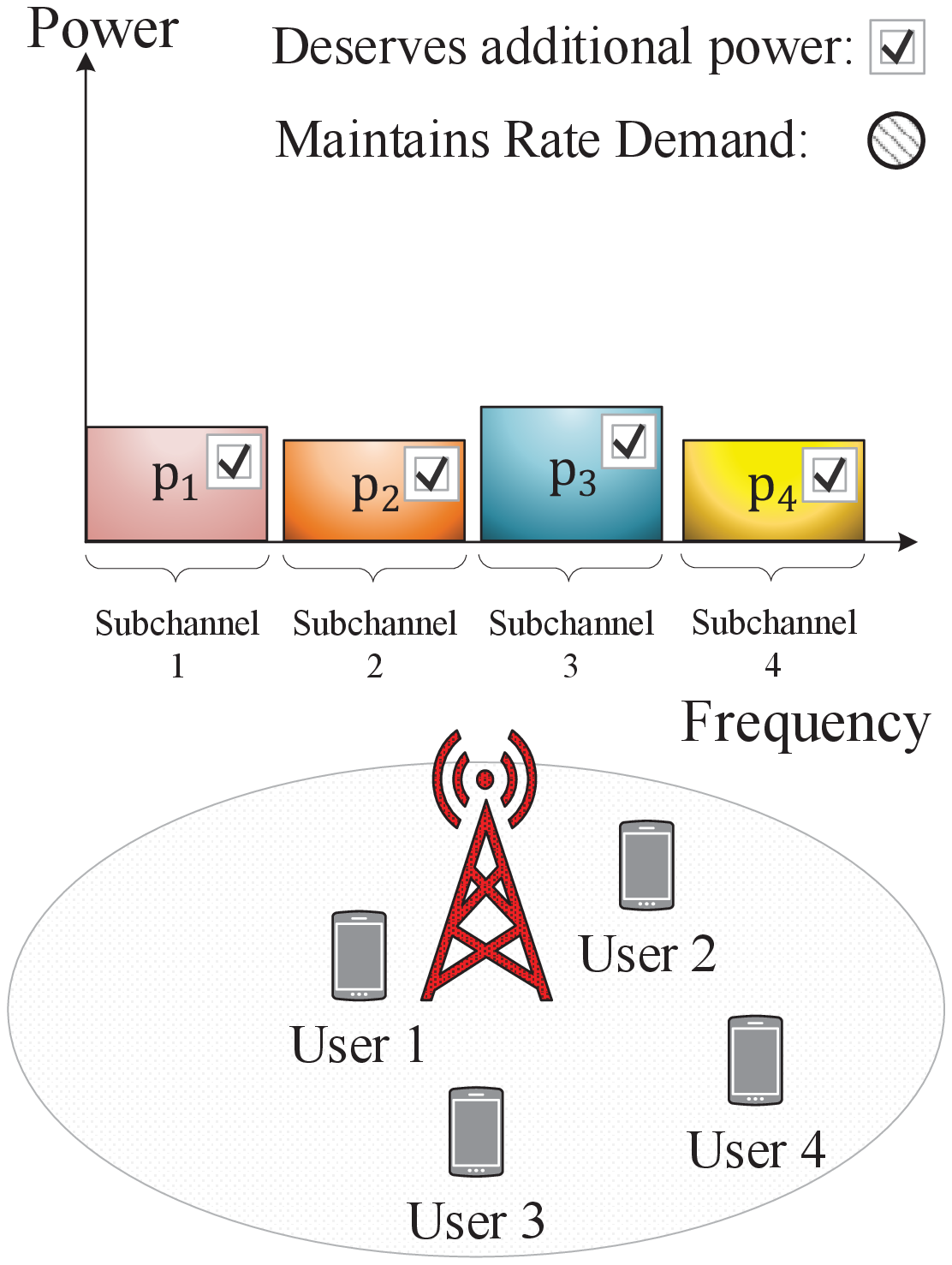}
		\label{Fig_FDMA-Virtual}
	}\hfill
	\subfigure[Hybrid-NOMA with $K$ multiplexed users: FDMA with $6$ virtual OMA users.]{
		\includegraphics[scale=0.33]{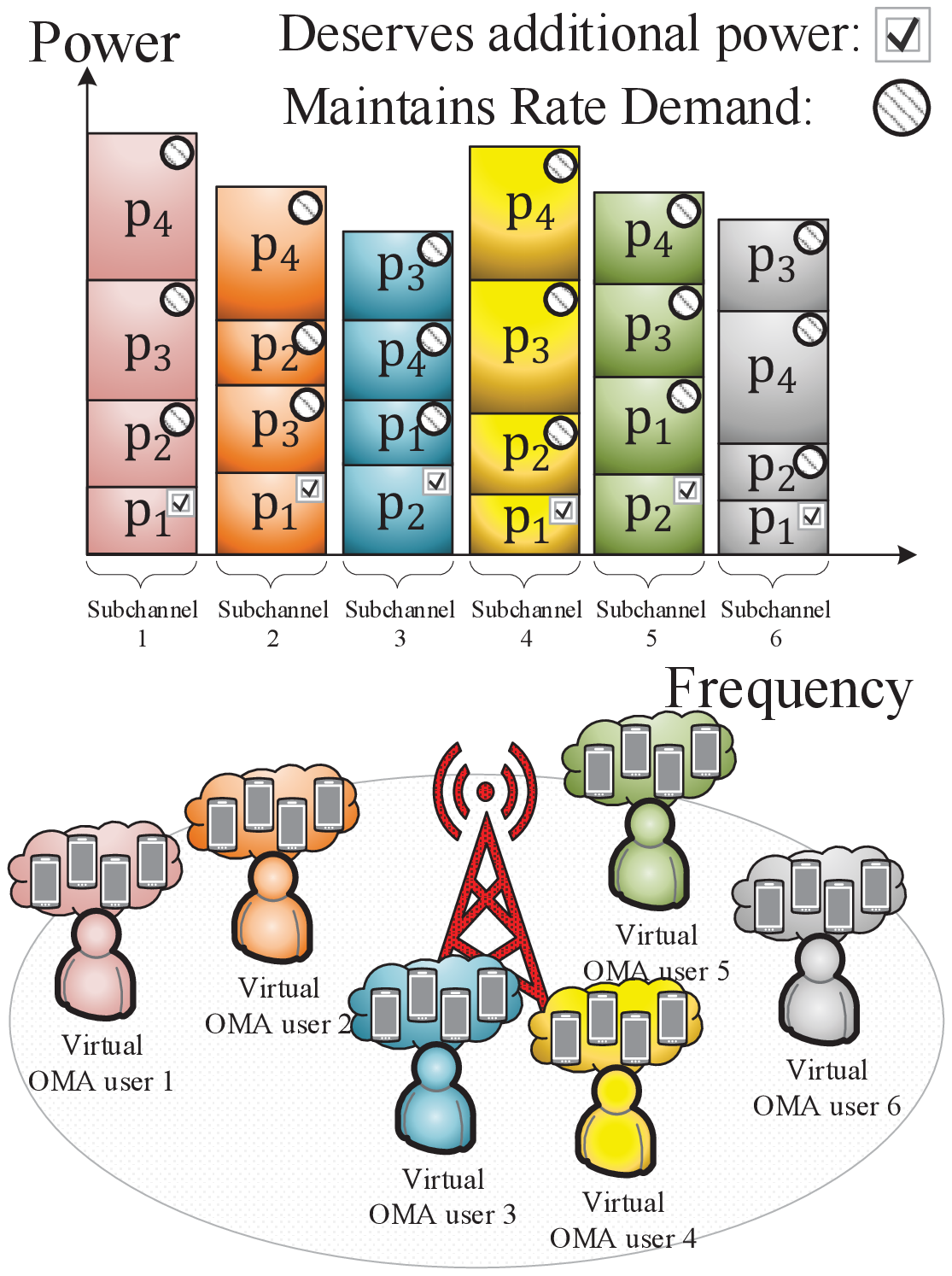}
		\label{Fig_Hybrid-NOMA-all-Virtual}
	}\hfill
	\subfigure[Hybrid-NOMA with $2$ multiplexed users: FDMA with $6$ virtual OMA users.]{
		\includegraphics[scale=0.33]{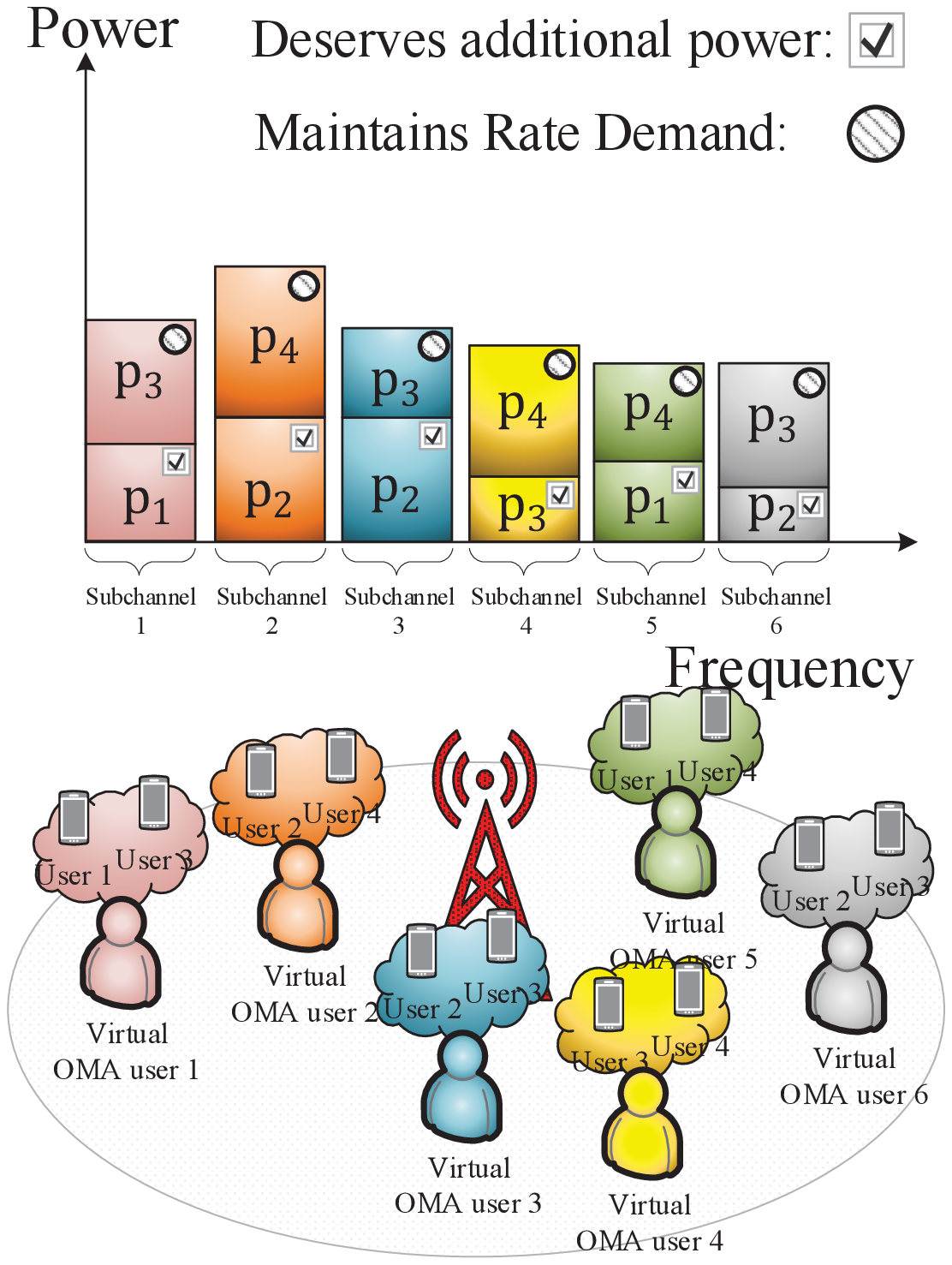}
		\label{Fig_Hybrid-NOMA-Virtual}
	}\hfill
	\subfigure[OFDMA: FDMA with $6$ virtual OMA users.]{
		\includegraphics[scale=0.33]{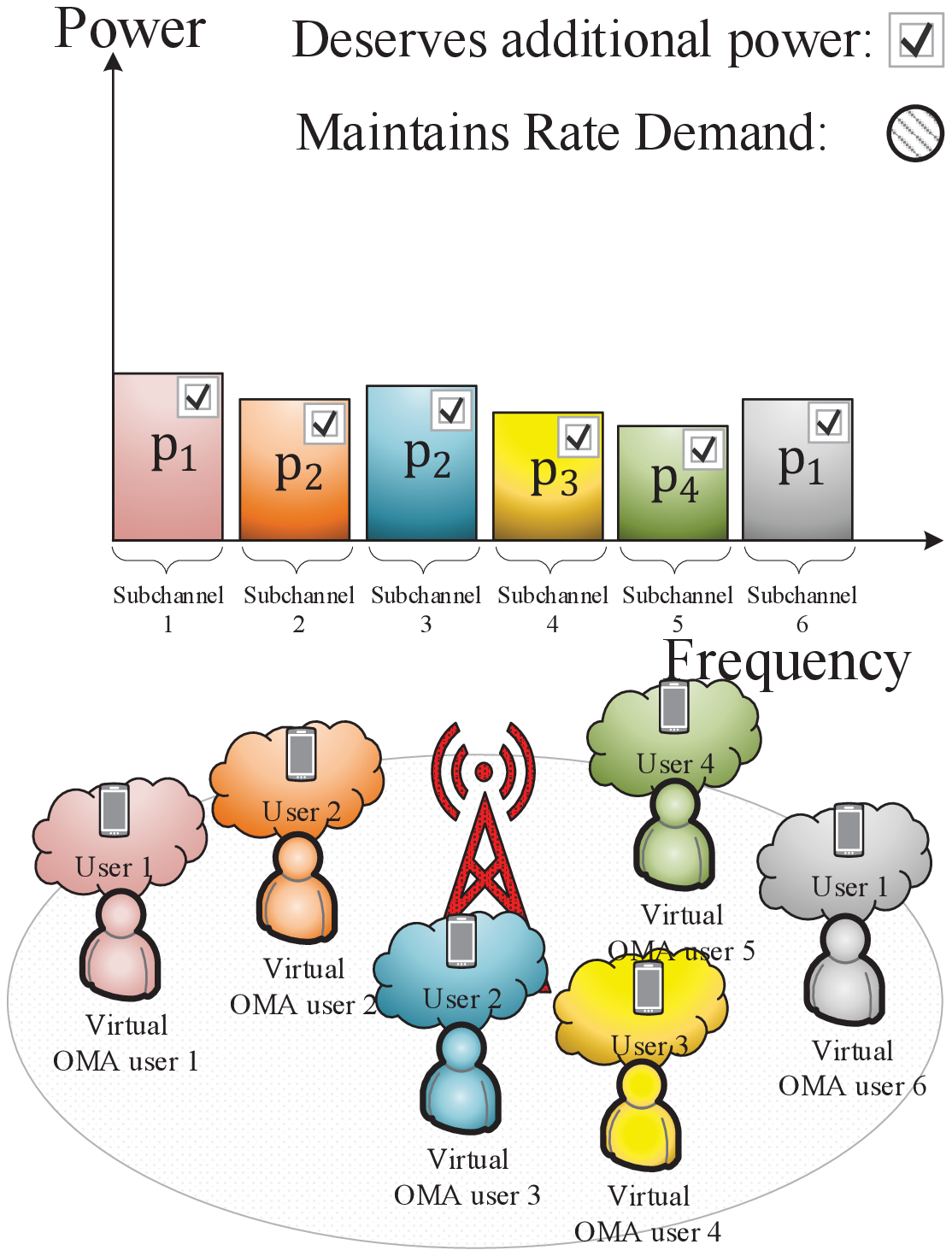}
		\label{Fig_OFDMA-Virtual}
	}\hfill
	\caption
	{The equivalent virtual FDMA models of Fig. \ref{Fig Systemmodel} including virtual OMA users (see Remark \ref{remark virtualuser}).}
	\label{Fig virtual system}
\end{figure*}
Note that FDMA/OFDMA is a special case of FDMA-NOMA/Hybrid-NOMA, where each subchannel is assigned to a single user. Hence, each OMA user acts as a cluster-head user, and subsequently, the virtual users are identical to the real OMA users, i.e., $\alpha_n=1$, $H_n=h^n_{\Phi_n}$, and $c_n=0$, for each $n \in \mathcal{N}$. As a result, each user in FDMA/OFDMA deserves additional power. In summary, the analysis for finding the optimal power allocation to maximize SR/EE of Hybrid-NOMA with per-symbol minimum rate constraints and FDMA is quite similar, and the only differences are $\alpha_n$ and $c_n$.
\begin{remark}\label{remark deviation cn}
	Remark \ref{remark virtualuser} shows that when $c_n \to 0$, the difference term $\frac{c_n}{\alpha_n} \to 0$ in $\tilde{q}_n$, $\tilde{Q}^{\rm{min}}_n$, and $\tilde{P}^{\rm{mask}}_n$. Subsequently, when $c_n \to 0,~\forall n \in \mathcal{N}$, we have $\sum\limits_{n \in \mathcal{N}} \frac{c_n}{\alpha_n} \to 0$ in $\tilde{P}^{\rm{max}}$. Accordingly, when $c_n \to 0,~\forall n \in \mathcal{N}$, we guarantee that $\tilde{q}_n=q_n,~\forall n \in \mathcal{N}$, $\tilde{Q}^{\rm{min}}_n=Q^{\rm{min}}_n,~\forall n \in \mathcal{N}$, $\tilde{P}^{\rm{mask}}_n=P^{\rm{mask}}_n,~\forall n \in \mathcal{N}$, and $\tilde{P}^{\rm{max}} = P^{\rm{max}},~\forall n \in \mathcal{N}$. In other words, when $c_n \to 0,~\forall n \in \mathcal{N}$, the network parameters of Hybrid-NOMA will be exactly the same as its virtual FDMA system.
\end{remark}
In each cluster $n$, the term $c^n_{k}$ tends to zero when $h^n_{k} \to \infty,~ k \in \mathcal{K}_n$. The numerical results verify that in most of the channel realizations, specifically high CNR regions, $c^n_{k} \approx 0,~\forall n \in \mathcal{N}, k \in \mathcal{K}_n$ \cite{9583874}. With assuming $c^n_{k} \approx 0,~\forall n \in \mathcal{N}, k \in \mathcal{K}_n$, we have $c_n \approx 0, \forall n \in \mathcal{N}$ in \eqref{opt IC M qmin}. Hence, the results in Remark \ref{remark deviation cn} are valid for the high CNR regions.

When $c^n_{k} \approx 0,~\forall n \in \mathcal{N}, k \in \mathcal{K}_n$, the optimal intra-cluster powers in \eqref{opt IC i} and \eqref{opt IC M} can be approximated, respectively, as 
\begin{equation}\label{opt IC i approx}
	{p^n_{k}}^*\approx\left(\beta^n_{k} \prod\limits_{j \in \mathcal{K}_n \atop h^n_{j} < h^n_{k}} \left(1-\beta^n_{j}\right)\right) q_n,~ \forall n \in \mathcal{N},~k \in \mathcal{K}_n\setminus\{\Phi_n\},
\end{equation}
and
\begin{equation}\label{opt IC M approx}
	{p^n_{\Phi_n}}^* \approx \left(1 - \sum\limits_{i \in \mathcal{K}_n \atop h^n_{i} < h^n_{\Phi_n}} \beta^n_{i} \prod\limits_{j \in \mathcal{K}_n \atop h^n_{j} < h^n_{i}} \left(1-\beta^n_{j}\right)\right) q_n,~ \forall n \in \mathcal{N}.
\end{equation}
For the case that the users in $\mathcal{K}_n$ have the same minimum rate demands $R^{\rm{min}}_{k,n}/W_s=R$ in bps/Hz, it is straightforward to show that \eqref{opt IC i approx} and \eqref{opt IC M approx} can be reformulated, respectively, by
\begin{equation}\label{power sameminrate i}
	{p^n_{k}}^*\approx\frac{2^{R}-1}{ \left(2^{R}\right)^{\Theta_k} } q_n,~ \forall n \in \mathcal{N},~k \in \mathcal{K}_n\setminus\{\Phi_n\},
\end{equation}
and 
\begin{equation}\label{power sameminrate M}
	{p^n_{\Phi_n}}^*\approx\frac{1}{ \left(2^{R}\right)^{|\mathcal{K}_n|-1} } q_n,~ \forall n \in \mathcal{N},
\end{equation}
where $\Theta_k = \left|\left\{i \in \mathcal{K}_n | h^n_{i} \leq h^n_{k} \right\}\right|$. 
\begin{corollary}\label{corollary Intra diverse}
	The approximated closed-form expressions \eqref{power sameminrate i} and \eqref{power sameminrate M} verify the high heterogeneity of optimal power coefficients among multiplexed users, thus the importance of finding optimal intra-cluster power allocation. For instance, the equal intra-cluster power allocation is infeasible in most of the cases, due to violating the minimum rate constraints in \eqref{Constraint minrate SCuser}.
\end{corollary}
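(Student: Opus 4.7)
The plan is to read off both assertions of the corollary directly from the approximate coefficients in \eqref{power sameminrate i}--\eqref{power sameminrate M}. Writing $\eta_k \triangleq {p^n_{k}}^*/q_n$ for the optimal power fraction of user $k$ in cluster $n$, those formulas give $\eta_k = (2^R-1)(2^R)^{-\Theta_k}$ for each $k \in \mathcal{K}_n \setminus \{\Phi_n\}$ and $\eta_{\Phi_n} = (2^R)^{-(|\mathcal{K}_n|-1)}$, so both parts of the statement reduce to algebraic comparisons among these coefficients.

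For the heterogeneity claim, I would compute the ratio $\eta_i/\eta_j = (2^R)^{\Theta_j-\Theta_i}$ for two non-cluster-head users, which grows geometrically in the gap between their decoding ranks and exponentially in the common rate $R$. Comparing the weakest user ($\Theta_k=1$) with the cluster head yields the ratio $(2^R-1)(2^R)^{|\mathcal{K}_n|-2}$, confirming that the spread of the optimal intra-cluster coefficients explodes both in the cluster order $|\mathcal{K}_n|$ and in the rate demand $R$. This geometric structure is the precise content of the first sentence of the corollary.

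For the infeasibility of equal intra-cluster power allocation, I would invoke the proof of Proposition \ref{proposition jointcluster}: at the SR-maximizing allocation every non-cluster-head user has its minimum-rate constraint active, so in the high-CNR regime (where $c^n_k \approx 0$) the fractions $\eta_k$ are the smallest amounts of $q_n$ that make each non-cluster-head user $k$ feasible. A necessary condition for the uniform assignment $p^n_k = q_n/|\mathcal{K}_n|$ to satisfy \eqref{Constraint minrate SCuser} is therefore $1/|\mathcal{K}_n| \geq \eta_k$ for every $k$, and the binding case is the weakest user with $\Theta_k=1$, which yields $|\mathcal{K}_n| \leq 2^R/(2^R-1)$. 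This threshold collapses to $|\mathcal{K}_n| \leq 2$ already at $R=1$ bps/Hz and forces $|\mathcal{K}_n|=1$ for $R \geq 2$ bps/Hz, so for essentially any cluster with two or more multiplexed users and non-trivial rate demand the equal-power rule already violates \eqref{Constraint minrate SCuser}.

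The main obstacle is interpretational rather than technical: one must be careful that \eqref{power sameminrate i}--\eqref{power sameminrate M} are approximations with the $c_n$-terms dropped, so the argument should either be restricted to the high-CNR regime where $c^n_k \approx 0$ and the fractions $\eta_k$ become true lower bounds on feasible intra-cluster powers, or carried out with the exact forms \eqref{opt IC i}--\eqref{opt IC M} of Proposition \ref{proposition jointcluster}, in which case the conclusion is only strengthened because the additional nonnegative terms $c^n_k$ merely enlarge the minimum powers each user demands. Either way, the single scalar inequality between $1/|\mathcal{K}_n|$ and $\eta_1$ carries the entire argument.
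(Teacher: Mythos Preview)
The paper states this corollary without proof, treating it as a direct observation from the formulas \eqref{power sameminrate i}--\eqref{power sameminrate M}; the only supporting remark is the sentence that follows, noting that for $|\mathcal{K}_n|=2$ and $R=1$ bps/Hz the equal split happens to be near-optimal. Your argument is therefore strictly more than what the paper offers: you make the heterogeneity quantitative via the geometric ratios $\eta_i/\eta_j=(2^R)^{\Theta_j-\Theta_i}$ and you derive the explicit feasibility threshold $|\mathcal{K}_n|\le 2^R/(2^R-1)$ from the weakest user's constraint, which is exactly the right computation and recovers the paper's $|\mathcal{K}_n|=2$, $R=1$ boundary case.

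One minor imprecision: the step ``the fractions $\eta_k$ are the smallest amounts of $q_n$ that make each non-cluster-head user $k$ feasible, so $1/|\mathcal{K}_n|\ge\eta_k$ is necessary for every $k$'' is not quite right for $\Theta_k>1$. The value $\eta_k$ is the power user $k$ receives under the \emph{particular} optimal allocation, where all stronger users together receive $\sum_{j:\Theta_j>\Theta_k}\eta_j\,q_n$; under equal allocation the interference seen by user $k$ is different, so $\eta_k$ is not the universal lower bound on $p^n_k/q_n$. This does not harm your conclusion, because for the weakest user ($\Theta_k=1$) the interference is always $q_n-p^n_1$ regardless of how the remaining power is split, so $\eta_1=(2^R-1)/2^R$ \emph{is} a genuine lower bound on $p^n_1/q_n$, and that is the only case you actually use. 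If you want the argument to be airtight, simply restrict the necessity claim to the weakest user from the outset.
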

For the special case $|\mathcal{K}_n|=2$, and $R=1$ bps/Hz, we have ${p^n_{1}}^* \approx {p^n_{2}}^* \approx \frac{1}{2} q_n$, meaning that the equal intra-cluster power allocation is nearly optimal.

The inter-cluster power allocation is necessary when $\sum\limits_{n \in \mathcal{N}} P^{\rm{mask}}_n > P^{\rm{max}}$, i.e., there is at least one cluster which is not allowed to operate at its maximum power $P^{\rm{mask}}_n$. In this case, the distributed inter-cluster power allocation leads to violating the cellular power constraint \eqref{Constraint cell power}, since in the distributed power allocation among clusters, constraint \eqref{Constraint mask power} will be active. Alternatively, when $\sum\limits_{n \in \mathcal{N}} P^{\rm{mask}}_n \leq P^{\rm{max}}$, we guarantee that $q^*_n=P^{\rm{mask}}_n,~\forall n \in \mathcal{N}$.
There are a number of works, e.g., \cite{7557079,8114362}, assuming $P^{\rm{mask}}_n=P^{\rm{max}}/N,~\forall n \in \mathcal{N}$, i.e., equal inter-cluster power allocation while maintaining the cellular power constraint \eqref{Constraint cell power}. In this case, $q_n=P^{\rm{max}}/N,~\forall n \in \mathcal{N}$, and the optimal intra-cluster power allocation can be obtained by using Proposition \ref{proposition jointcluster}.
In the following, we investigate the optimality condition for the equal inter-cluster power allocation. 
\begin{proposition}\label{proposition samechannels}
	When $c^n_{k} \approx 0,~\forall n \in \mathcal{N}, k \in \mathcal{K}_n$, the equal inter-cluster power allocation, i.e., $q_n=P^{\rm{max}}/N,~\forall n \in \mathcal{N}$, is optimal if and only if 1) $P^{\rm{max}}/N \in \left[Q^{\rm{min}}_n,P^{\rm{mask}}_n\right],~\forall n \in \mathcal{N}$; 2) $\frac{h^i_{\Phi_i}}{h^j_{\Phi_j}} = \frac{\alpha_j}{\alpha_i},~\forall i,j \in \mathcal{N}$.
\end{proposition}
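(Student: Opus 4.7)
The plan is to work with the equivalent convex OMA problem \eqref{eq1 problem} and apply Remark \ref{remark deviation cn} to eliminate the shift terms. Under the hypothesis $c^n_k \approx 0$ for all $n \in \mathcal{N}, k \in \mathcal{K}_n$, we have $c_n \approx 0$, so $\tilde{q}_n = q_n$, $\tilde{Q}^{\rm{min}}_n = Q^{\rm{min}}_n$, $\tilde{P}^{\rm{mask}}_n = P^{\rm{mask}}_n$, and $\tilde{P}^{\rm{max}} = P^{\rm{max}}$. Problem \eqref{eq1 problem} then reduces to maximizing $\sum_{n \in \mathcal{N}} W_s \log_2(1 + q_n H_n)$ with $H_n = \alpha_n h^n_{\Phi_n}$, subject to $\sum_{n \in \mathcal{N}} q_n = P^{\rm{max}}$ and $q_n \in [Q^{\rm{min}}_n, P^{\rm{mask}}_n]$. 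The objective is strictly concave and the feasible set is convex, so by Slater's condition the KKT conditions are necessary and sufficient for global optimality.

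Next, I would form the Lagrangian with multipliers $\nu$ for the cellular power equality, and $\mu_n, \eta_n \geq 0$ for the lower and upper box constraints, respectively, and write the stationarity condition
\begin{equation*}
\frac{W_s H_n/(\ln 2)}{1 + q_n H_n} - \nu + \mu_n - \eta_n = 0, \quad \forall n \in \mathcal{N},
\end{equation*}
together with complementary slackness.

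For the \emph{if} direction, assume both conditions hold. Condition 1 ensures that $q_n = P^{\rm{max}}/N$ is feasible; furthermore $\sum_n q_n = P^{\rm{max}}$ holds automatically. Assuming the interior case (so $\mu_n = \eta_n = 0$), condition 2 implies $H_n \equiv H$ is a common constant, and therefore the single choice $\nu = \tfrac{W_s H/(\ln 2)}{1 + (P^{\rm{max}}/N) H}$ simultaneously satisfies stationarity for every $n$. Hence $q_n = P^{\rm{max}}/N$ fulfills all KKT conditions and is globally optimal.

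For the \emph{only if} direction, suppose the equal allocation is optimal. Feasibility of $q_n = P^{\rm{max}}/N$ immediately yields condition 1. In the generic interior regime (where $P^{\rm{max}}/N$ lies strictly inside each box so $\mu_n = \eta_n = 0$), the stationarity equation becomes $H_n (W_s/(\ln 2) - \nu P^{\rm{max}}/N) = \nu$ for every $n$, forcing $H_n$ to take the common value $\nu/(W_s/(\ln 2) - \nu P^{\rm{max}}/N)$. Thus $\alpha_n h^n_{\Phi_n} = \alpha_m h^m_{\Phi_m}$ for all $n,m \in \mathcal{N}$, which is exactly condition 2. The main delicate point is the interpretation at the box boundaries: if some $q_n = P^{\rm{max}}/N$ coincides with $Q^{\rm{min}}_n$ or $P^{\rm{mask}}_n$, the corresponding $\mu_n$ or $\eta_n$ can be strictly positive and the equality $H_n = H_m$ may be relaxed to inequalities on the marginal utilities; I would therefore state the result under the generic interior assumption (so that condition 1 is understood with strict inclusion), as this is the case in which the water-filling solution from Alg. \ref{Alg bisection} pins every cluster to the same water level.
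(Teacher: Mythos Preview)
Your proposal is correct and follows essentially the same route as the paper. Both arguments first invoke Remark \ref{remark deviation cn} to collapse the shifted problem \eqref{eq1 problem} to the unshifted one, and then appeal to the water-filling optimality conditions to conclude that equal allocation is optimal precisely when all effective CNRs $H_n=\alpha_n h^n_{\Phi_n}$ coincide; the paper does this by quoting the closed-form \eqref{bisection opt form} directly, whereas you write out the underlying KKT stationarity condition explicitly and even flag the boundary degeneracy that the paper's proof leaves implicit.
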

\begin{proof}
	The equal inter-cluster power allocation should be feasible to problem \eqref{eq1 problem}. According to Proposition \ref{proposition feasiblecluster}, $q_n=P^{\rm{max}}/N,~\forall n \in \mathcal{N}$, is feasible if and only if $P^{\rm{max}}/N \in [Q^{\rm{min}}_n,P^{\rm{mask}}_n],~\forall n \in \mathcal{N}$.
	
	According to \eqref{bisection opt form}, two clusters/virtual OMA users $i,j \in \mathcal{N}$ get the same virtual powers, i.e., $\tilde{q}^*_i = \tilde{q}^*_j$, if and only if $H_i = H_j$. According to \eqref{opt IC M qmin}, for each cluster $n \in \mathcal{N}$, when $c^n_{k} \approx 0,~\forall k \in \mathcal{K}_n$, we have $c_n \approx 0$.
	According to Remark \ref{remark deviation cn}, when $c_n \approx 0,~\forall n \in \mathcal{N}$, we guarantee that $\tilde{q}_n=q_n,~\forall n \in \mathcal{N}$. As a result, for two clusters $i,j \in \mathcal{N}$, we have $q^*_i = q^*_j$, if and only if $H_i = H_j$. By using $H_n=\alpha_n h^n_{\Phi_n},~\forall n \in \mathcal{N}$, defined in \eqref{eq1 problem}, $q^*_i = q^*_j$, if and only if $\frac{h^i_{\Phi_i}}{h^j_{\Phi_j}} = \frac{\alpha_j}{\alpha_i}$. Hence, $\tilde{q}^*_i = \tilde{q}^*_j,~\forall i,j \in \mathcal{N}$, with $\sum\limits_{n \in \mathcal{N}} q^*_n = P^{\rm{max}}$, or equivalently $q^*_n=P^{\rm{max}}/N,~\forall n \in \mathcal{N}$, if and only if $\frac{h^i_{\Phi_i}}{h^j_{\Phi_j}} = \frac{\alpha_j}{\alpha_i},~\forall i,j \in \mathcal{N}$, and the proof is completed.
\end{proof}
According to Proposition \ref{proposition samechannels}, in Hybrid-NOMA, when $c_n \approx 0$, the equal inter-cluster power allocation is optimal if and only if all the virtual OMA users have exactly the same CNRs. These results also hold for FDMA, where $\alpha_n=1,~\forall n \in \mathcal{N}$, $H_n=h^n_{\Phi_n},~\forall n \in \mathcal{N}$, and $c_n=0,~\forall n \in \mathcal{N}$. According to Remark \ref{remark virtualuser} and Proposition \ref{proposition samechannels}, the unique condition $\frac{h^i_{\Phi_i}}{h^j_{\Phi_j}} = \frac{\alpha_j}{\alpha_i},~\forall i,j \in \mathcal{N}$, for the optimality of the equal inter-cluster power allocation states that when the cluster-head users have exactly the same CNRs, i.e., $h^i_{\Phi_i}=h^j_{\Phi_j},~\forall i,j \in \mathcal{N}$, the equal inter-cluster power allocation strategy is optimal if and only if $\alpha_i=\alpha_j,~\forall i,j \in \mathcal{N}$. According to the definition of $\alpha_n$ in \eqref{opt IC M qmin}, one simple case that $\alpha_i \neq \alpha_j$ for some $i,j \in \mathcal{N}$ is considering different minimum rate demands for the users with lower decoding order.
\begin{corollary}
	In contrast to FDMA, the optimality condition of the equal inter-cluster power allocation strategy depends on the individual minimum rate demand of users with lower decoding order. This power allocation strategy can be suboptimal for Hybrid-NOMA even if the clusters have the same order and all the users in different clusters have the same CNRs. Moreover, the CNR of users with lower decoding order does not significantly affect the performance of the equal inter-cluster power allocation strategy.
\end{corollary}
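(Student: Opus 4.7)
The plan is to derive all three assertions directly from the necessary and sufficient optimality condition $\frac{h^i_{\Phi_i}}{h^j_{\Phi_j}}=\frac{\alpha_j}{\alpha_i},~\forall i,j\in\mathcal{N}$, of Proposition~\ref{proposition samechannels}, by carefully tracing which problem parameters each quantity $\alpha_n$ actually depends on. The central observation I would make is that $\alpha_n$, as given by \eqref{opt IC M qmin}, is built solely from the coefficients $\beta^n_k = \frac{2^{R^{\rm{min}}_{k,n}/W_s} - 1}{2^{R^{\rm{min}}_{k,n}/W_s}}$ indexed by the non-cluster-head users $k\in\mathcal{K}_n\setminus\{\Phi_n\}$, so it depends only on their minimum rate demands and on the relative decoding order, and never on the numerical values of their CNRs.

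First I would specialize Proposition~\ref{proposition samechannels} to FDMA by setting $|\mathcal{K}_n|=1$, which makes the index set $\{i\in\mathcal{K}_n : h^n_i < h^n_{\Phi_n}\}$ empty and forces $\alpha_n=1$ for every $n$. The optimality condition then collapses to $h^i_{\Phi_i}=h^j_{\Phi_j}$, involving only CNRs and no rate demands. This already exhibits the contrast with Hybrid-NOMA, where $\alpha_n$ genuinely participates as soon as some cluster has more than one user and therefore drags the minimum rate demands of the weaker users into the condition, which establishes the first claim.

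For the second claim, I would exhibit a minimal instance with two clusters of order two having identical CNR pairs $(h_{\rm weak},h_{\rm strong})$ but distinct minimum rate demands for the weaker users, e.g.\ $R^{\rm{min}}_{\rm weak,1}\neq R^{\rm{min}}_{\rm weak,2}$. Then $h^1_{\Phi_1}=h^2_{\Phi_2}$ while $\beta^1_{\rm weak}\neq\beta^2_{\rm weak}$ delivers $\alpha_1\neq\alpha_2$, so the ratio condition of Proposition~\ref{proposition samechannels} is violated and the equal inter-cluster allocation cannot be optimal. For the third claim, I would simply point out that no CNR $h^n_k$ of a non-cluster-head user $k$ appears in the expression for $\alpha_n$; only the relative ordering (through the index sets $\{j : h^n_j < h^n_k\}$) is used. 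Hence the ratio $\alpha_j/\alpha_i$ is invariant under any perturbation of the weaker users' CNRs that preserves the decoding order, so the optimality of equal inter-cluster allocation is insensitive to their magnitudes. The main subtlety, and the only place that will need a brief remark, is that Proposition~\ref{proposition samechannels} is stated under $c^n_k\approx 0$ via Remark~\ref{remark deviation cn}; since the quantities under discussion — $\alpha_n$ and the cluster-head CNRs — are themselves independent of $c^n_k$, this assumption does not weaken any of the three conclusions.
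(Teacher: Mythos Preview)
Your proposal is correct and follows essentially the same line of reasoning as the paper, which does not give the Corollary a separate formal proof but derives it in the surrounding discussion from Proposition~\ref{proposition samechannels} and the structure of $\alpha_n$ in \eqref{opt IC M qmin}. Your treatment is in fact more explicit than the paper's: you spell out the FDMA specialization, give a concrete two-cluster counterexample, and note that only the decoding order (not the magnitudes) of the weaker users' CNRs enters $\alpha_n$, whereas the paper leaves these points implicit in the sentence preceding the Corollary.
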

For the case that Proposition \ref{proposition samechannels} holds, i.e., $H_i = H_j,~\forall i,j \in \mathcal{N}$ thus $q^*_n=P^{\rm{max}}/N,~\forall n \in \mathcal{N}$, the optimal $\nu^*$ in \eqref{bisection opt form} can be obtained based on the quality $P^{\rm{max}}/N=\frac{W_s/(\ln 2)}{\nu^*} - \frac{1}{H_n}$. Hence, we have 
$\nu^* = \frac{\frac{P^{\rm{max}}}{N} + \frac{1}{H_n}}{W_s/(\ln 2)}$.
In general, for the case that $H_n,~\forall n \in \mathcal{N}$, is significantly large, i.e., high CNR regions of virtual OMA users, the second term $\frac{1}{H_n}$ in \eqref{bisection opt form} tends to zero. In this case, we observe a low heterogeneity of inter-cluster power allocation among clusters, resulting in near-optimal performance for the equal inter-cluster power allocation strategy.

\subsubsection{EE Maximization}
Based on Proposition \ref{proposition intraEE}, we observe that the closed-form expressions of optimal intra-cluster power allocation are also valid for the EE maximization problem \ref{EE problem}. Hence, Remark \ref{remark virtualuser} and Fig. \ref{Fig virtual system} are also valid for the EE maximization problem. Besides, Proposition \ref{proposition bisectionEE} provides a sufficient condition during each Dinkelbach iteration in which the full cellular power consumption not only leads to the maximum SR, but also maximum EE. In other words, if \eqref{EE fullbudget} holds, the full cellular power consumption is energy efficient. The term $\frac{W_s/(\ln 2)}{\lambda} - \frac{1 - c_n h^n_{\Phi_n}}{\alpha_n h^n_{\Phi_n}}$ in \eqref{EE fullbudget} is increasing in $\lambda=\frac{\sum\limits_{n \in \mathcal{N}} \sum\limits_{k \in \mathcal{K}_n} R^n_{k} (\boldsymbol{p}^n)}{\sum\limits_{n \in \mathcal{N}} \sum\limits_{k \in \mathcal{K}_n} p^n_{k} + P_{\rm{C}}}$. The fractional parameter $\lambda$ is a decreasing function of $P_{\rm{C}}$. As a result, increasing $P_{\rm{C}}$ increases the term $\frac{W_s/(\ln 2)}{\lambda} - \frac{1 - c_n h^n_{\Phi_n}}{\alpha_n h^n_{\Phi_n}}$ in \eqref{EE fullbudget}. In other words, \eqref{EE fullbudget} holds when the circuit power consumption of the BS is significantly large. 
\begin{corollary}\label{corollary EEdeservpow}
	In both the SR and EE maximization problems of Hybrid-NOMA with per-symbol minimum rate constraints, in each cluster, only the cluster-head user deserves additional power, and all the other users get power to only maintain their minimal rate demands. Our analysis proves that in the SR maximization problem, the BS operates at its maximum power budget. However, for the EE maximization problem, the BS may operate at lower power depending on the condition in Proposition \ref{proposition bisectionEE}.
\end{corollary}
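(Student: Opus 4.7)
The corollary assembles three facts: (a) in each cluster only $\Phi_n$ receives any ``excess'' power under either objective; (b) SR maximization saturates the cellular power budget; and (c) EE maximization may not. My plan is to derive all three from earlier results rather than redo the underlying optimization from scratch.

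For (a), I would invert the minimum-rate inequality $R^n_k \geq R^{\rm{min}}_{k,n}$ for a non-cluster-head user to the explicit form $p^n_k \geq \tilde\beta^n_k \bigl(\sum_{h^n_j > h^n_k} p^n_j + 1/h^n_k\bigr)$ with $\tilde\beta^n_k = 2^{R^{\rm{min}}_{k,n}/W_s} - 1$, force equality for every $k \in \mathcal{K}_n \setminus \{\Phi_n\}$, and solve the resulting triangular system along increasing CNR; the solution coincides with the closed form \eqref{opt IC i}. This verifies that Proposition~\ref{proposition jointcluster} pins each non-cluster-head user exactly to its minimum-rate floor, while the cluster-head's allocation \eqref{opt IC M} is just the cluster-power residual $q_n - \sum_{k \neq \Phi_n} {p^n_k}^*$ and therefore absorbs any excess. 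Proposition~\ref{proposition intraEE} transports this verbatim to the EE problem, since at fixed $\boldsymbol q$ the intra-cluster subproblem reduces to the SR subproblem.

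For (b), I would invoke the reduced FDMA formulation \eqref{eq1 problem}: each term $\log_2(1 + \tilde q_n H_n)$ is strictly increasing in $\tilde q_n$ on $[\tilde Q^{\rm{min}}_n, \tilde P^{\rm{mask}}_n]$, so any strict slack in the cellular constraint could be used to raise some $\tilde q_n$ while preserving feasibility and strictly increasing the SR, contradicting optimality; equivalently, the water-filling multiplier $\nu^*$ in \eqref{bisection opt form} is strictly positive, and complementary slackness gives $\sum_n q^*_n = P^{\rm{max}}$. For (c) nothing new is required: the denominator of $E(\boldsymbol p)$ grows with total power, so EE is no longer monotone in $\boldsymbol q$ and the full-power point need not be optimal; Proposition~\ref{proposition bisectionEE} already isolates the precise sufficient condition \eqref{EE fullbudget} under which $\partial \hat F / \partial q_n > 0$ over the whole feasible box and the EE and SR optima therefore coincide, while outside that regime Algs.~\ref{Alg mixedWFSub}--\ref{Alg mixedWFBarrier} may terminate at a point with $\sum_n q^*_n < P^{\rm{max}}$. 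The only non-routine ingredient is the inductive verification in (a); that is mechanical but tedious, and is already folded into the proof of Proposition~\ref{proposition jointcluster}, so no additional technical obstacle remains.
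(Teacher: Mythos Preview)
Your proposal is correct and mirrors the paper's own treatment: the corollary is stated there as an immediate summary of Propositions~\ref{proposition jointcluster}, \ref{proposition intraEE}, and \ref{proposition bisectionEE} together with the equality form of the cellular power constraint in \eqref{Constraint cell eq}, without a separate proof environment. Your decomposition into facts (a)--(c) and the justifications you give for each---the active minimum-rate constraints from Proposition~\ref{proposition jointcluster}, their transport to the EE problem via Proposition~\ref{proposition intraEE}, the monotonicity/water-filling argument for full power in SR, and the invocation of Proposition~\ref{proposition bisectionEE} for the EE case---are exactly the ingredients the paper assembles, so no additional comparison is needed.
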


\subsection{Computational Complexity Analysis}
In this subsection, we discuss about the computational complexity order of our proposed Algs. \ref{Alg bisection}-\ref{Alg mixedWFBarrier}. To simplify the complexity analysis, we assume that $|\mathcal{K}_n|=K,~\forall n \in \mathcal{N}$, in this subsection.

Alg. \ref{Alg bisection} belongs to the family of water-filling solutions which is comprehensively discussed in the literature \cite{1381759,1576943,6547819,10.1155/2008/643081,8995606}.
The water-filling algorithms are mainly divided into two categorizes: 1) iterative algorithms, like bisection method, which stops until the error is below some tolerance threshold; 2) Exact algorithms based on hypothesis testing \cite{1381759}. It is difficult to obtain the exact complexity of the bisection method to achieve an $\epsilon$-suboptimal performance, however we numerically observed that the error will be less than $10^{-6}$ mostly within $20$ iterations. The exact algorithms have an exponential worst-case complexity on the order of $2^N$, however it is possible to obtain a linear worst-case complexity of $N$ \cite{1381759,6547819}. This linear complexity can be achieved by properly sorting the so-called sequences which is comprehensively discussed in \cite{1381759,6547819}.
Generally speaking, the number of water-filling iterations increases linearly with the number of subchannels $N$ \cite{1381759,6547819}. In each iteration, we obtain $\tilde{q}^*_n,~\forall n \in \mathcal{N},$ by using \eqref{bisection opt form}, which needs $N$ operations. Therefore, the complexity of Alg. \ref{Alg bisection} is on the order of $N^2$. Note that the complexity of Alg. \ref{Alg bisection} is approximately independent of the number of multiplexed users $|\mathcal{K}_n|,~\forall n \in \mathcal{N}$. This is due to the equivalent transformation of the Hybrid-NOMA problem \eqref{sumrate problem} to its corresponding virtual FDMA problem \eqref{eq1 problem}.
Increasing the number of multiplexed users $|\mathcal{K}_n|$ only increases the complexity of calculating $(Q^{\rm{min}}_n,\alpha_n)$ in the initialization step of Alg. \ref{Alg bisection} which is negligible.
\begin{table*}[tp]
	\caption{Computational Complexity of Solving the Sum-Rate Maximization Problem \eqref{sumrate problem}.}
	\begin{center} \label{table complexity SR}
		\scalebox{0.76}{\begin{tabular}{|c|c|c|c|c|c|c|c|}
				\hline \rowcolor[gray]{0.910}
				\textbf{NOMA-to-OMA Transformation} & \textbf{Alg. \ref{Alg bisection}} & \textbf{Subgradient Method} & \textbf{Barrier Method} \\
				\hline
				\rowcolor[gray]{0.960}
				\textbf{Complexity} & $N^2$ & $C_S(1+N)$ & $\left\lceil\frac{\ln\left(\frac{1}{\epsilon_B.\mu}\right)}{\ln (\mu)} \right\rceil C_N$ \\
				\hline \rowcolor[gray]{0.910}
				\textbf{Pure Methods} & \textbf{Water-Filling Algorithm \cite{8125713}} & \textbf{Subgradient Method} & \textbf{Barrier Method} \\
				\hline
				\rowcolor[gray]{0.960}
				\textbf{Complexity} & $N^2 K$ & $C_S(1+N+2KN)$ & $\left\lceil\frac{\ln\left(\frac{1+N+KN}{\epsilon_B.\mu}\right)}{\ln (\mu)} \right\rceil C_N$ \\
				\hline
		\end{tabular}}
	\end{center}
\end{table*}
\begin{table*}[tp]
	\caption{Computational Complexity of Solving the Energy Efficiency Maximization Problem \eqref{EE problem}.}
	\begin{center} \label{table complexity EE}
		\scalebox{0.76}{\begin{tabular}{|c|c|c|c|c|c|c|c|}
				\hline \rowcolor[gray]{0.910}
				\textbf{NOMA-to-OMA Transformation} & \textbf{Alg. \ref{Alg dinkelbach}-inner Alg. \ref{Alg bisection}} & \textbf{Alg. \ref{Alg dinkelbach}-inner Alg. \ref{Alg mixedWFSub}} & \textbf{Alg. \ref{Alg dinkelbach}-inner Alg. \ref{Alg mixedWFBarrier}} & \textbf{Exploring $P_{\rm{EE}}$-inner Alg. \ref{Alg bisection}} \\
				\hline
				\rowcolor[gray]{0.960}
				\textbf{Complexity} & $N^2$ & $C_F C_S(1+N)$ & $C_F\left\lceil\frac{\ln\left(\frac{1}{\epsilon_B.\mu}\right)}{\ln (\mu)} \right\rceil C_N$ & $\left\lceil\frac{P^{\rm{max}} - \sum\limits_{n \in \mathcal{N}} Q^{\rm{min}}_n}{\delta}\right\rceil N^2$ \\
				\hline \rowcolor[gray]{0.910}
				\textbf{Pure Methods} & \textbf{Alg. \ref{Alg dinkelbach}-inner Water-Filling \cite{8125713}} & \textbf{Alg. \ref{Alg dinkelbach}-inner Subgradient Method \cite{8119791,8448840,9032198}} & \textbf{Alg. \ref{Alg dinkelbach}-inner Barrier Method} & \textbf{Exploring $P_{\rm{EE}}$-inner Water-Filling \cite{8125713}} \\
				\hline
				\rowcolor[gray]{0.960}
				\textbf{Complexity} & $N^2 K$ & $C_F C_S(1+N+2KN)$ & $C_F \left\lceil \frac{\ln \left(\frac{1+N+KN}{\epsilon_B.\mu}\right)}{\ln (\mu)} \right\rceil C_N$ & $\left\lceil\frac{P^{\rm{max}}}{\delta}\right\rceil N^2 K$ \\
				\hline
		\end{tabular}}
	\end{center}
\end{table*}

Alg. \ref{Alg dinkelbach} which is based on the Dinkelbach method converts the original problem \eqref{EE problem} into a sequence of auxiliary problems, indexed by $\lambda$. The overall complexity of Alg. \ref{Alg dinkelbach} mainly depends on both the convergence rate of the subproblems, as well as the computational complexity of each subproblem. By defining $E(\boldsymbol{p})=\frac{f_1(\boldsymbol{p})}{f_2(\boldsymbol{p})}$, 
where $f_1(\boldsymbol{p})=\sum\limits_{n \in \mathcal{N}} \sum\limits_{k \in \mathcal{K}_n} R^n_{k} (\boldsymbol{p}^n)$, and $f_2(\boldsymbol{p})=\sum\limits_{n \in \mathcal{N}} \sum\limits_{k \in \mathcal{K}_n} p^n_{k} + P_{\rm{C}}$, the convergence rate of Alg. \ref{Alg dinkelbach} can be observed by formulating the update rule of the fractional parameter $\lambda$ as
$
\lambda_{(t+1)} = \frac{f_1\left(\boldsymbol{p}^*_{(t)}\right)}{f_2\left(\boldsymbol{p}^*_{(t)}\right)} = \lambda_{(t)} - \frac{f_1\left(\boldsymbol{p}^*_{(t)}\right) - \lambda_{(t)} f_2\left(\boldsymbol{p}^*_{(t)}\right)}{-f_2\left(\boldsymbol{p}^*_{(t)}\right)} = \lambda_{(t)} - \frac{F(\lambda_{(t)})}{F'(\lambda_{(t)})},
$
where $t$ is the iteration index of Alg. \ref{Alg dinkelbach}, and $F(\lambda_{(t)}) = f_1\left(\boldsymbol{p}^*_{(t)}\right) - \lambda_{(t)} f_2\left(\boldsymbol{p}^*_{(t)}\right)$ \cite{Jorswieckfractional}. It can be observed that Alg. \ref{Alg dinkelbach} follows the Newton's method, meaning that the Newton’s method is applied to the concave function $F(\lambda)$. Thus, Alg. \ref{Alg dinkelbach} exhibits a super-linear convergence rate \cite{Jorswieckfractional}. A detailed complexity analysis of the pure Newton's method can be found in Subsection 9.5.3 in \cite{Boydconvex}.
For a general concave function $F(\boldsymbol{x}),~\boldsymbol{x} \in \mathbb{R}^n$, if $F$ increases by at least $\Delta_F$ at each Newton's iteration, $\nabla^2F(\boldsymbol{x}) \leq -m$, and 
$\norm{\nabla^2 F(\boldsymbol{x}) - \nabla^2 F(\boldsymbol{y})}_2 \leq L \norm{\boldsymbol{x} - \boldsymbol{y}}_2,~\forall \boldsymbol{x},\boldsymbol{y} \in \mathbb{R}^n$, 
the number of Newton's iterations to achieve an $\epsilon$-suboptimal solution is bounded above by $C_F=\frac{F(\boldsymbol{x}^*) - F(\boldsymbol{x}_{(0)})}{\Delta_F} + \log_2 \log_2 (\epsilon_0/\epsilon)$, where $\epsilon_0=2m^3/L^2$ \cite{Boydconvex}. For the accuracy around $\epsilon \approx 5.10^{-20} \epsilon_0$, we have $\log_2 \log_2 (\epsilon_0/\epsilon) \approx 6$ \cite{Boydconvex}, thus in this case, the number of Newton's iterations is bounded above by $C_F\approx \frac{F(\boldsymbol{x}^*) - F(\boldsymbol{x}_{(0)})}{\Delta_F} + 6$.

In each iteration of Alg. \ref{Alg dinkelbach}, if Proposition \ref{proposition bisectionEE} holds, we solve \eqref{EEfrac problem} for the given $\lambda$ by using the water-filling Alg. \ref{Alg bisection}, whose overall complexity is $N^2$. 
For the case that Proposition \ref{proposition bisectionEE} does not hold in each Dinkelbach iteration, we solve \eqref{EEfrac problem} for the given $\lambda$ by using the subgradient or barrier methods presented in Algs. \ref{Alg mixedWFSub} and \ref{Alg mixedWFBarrier}, respectively.
The duality gap of the barrier method in Alg. \ref{Alg mixedWFBarrier} after $L$ iterations is $1/(\mu^L t^{0})$, where $t^{0}$ is the initial $t$, and $\mu$ is the stepsize for updating $t$ in the barrier method. Therefore, after exactly $\lceil \frac{\ln \left(\frac{1}{\epsilon_B.\mu}\right)}{\ln (\mu)} \rceil$ barrier iterations, Alg. \ref{Alg mixedWFBarrier} achieves $\epsilon_B$-suboptimal solution \cite{Boydconvex}. In each barrier iteration, we apply the Newton's method. In general, it is difficult to obtain the exact complexity order of the pure Newton's method \cite{Boydconvex}. According to Subsection 11.5.3 in \cite{Boydconvex}, when the self-concordance assumption holds, the total number of Newton's iterations over all the barrier iterations to achieve an $\epsilon_B$-suboptimal solution is bounded above by $\lceil \frac{\ln \left(\frac{1}{\epsilon_B.\mu}\right)}{\ln (\mu)} \rceil \left( \frac{\mu -1 - \ln \mu}{\Delta_F} + \log_2 \log_2 (1/\epsilon_N) \right)$, where $\epsilon_N$ is the tolerance of the Newton's method in each barrier iteration. The complexity of other operations in the centering step of each barrier iteration is negligible. As a result, when Proposition \ref{proposition bisectionEE} does not hold, the overall worst-case complexity of Alg. \ref{Alg dinkelbach} with inner Alg. \ref{Alg mixedWFBarrier} is approximately on the order of $C_F \left(\lceil \frac{\ln \left(\frac{1}{\epsilon_B.\mu}\right)}{\ln (\mu)} \rceil \left( \frac{\mu -1 - \ln \mu}{\Delta_F} + \log_2 \log_2 (1/\epsilon_N) \right)\right)$, where $C_F$ denotes the number of Dinkelbach iterations in Alg. \ref{Alg dinkelbach}. 

The standard subgradient method produces a global optimum, however its exact computational complexity is still unknown in general \cite{MinimizationMethods,boydsubgradient}. It is shown that the subgradient method converges with polynomial complexity in the number of optimization variables and constraints \cite{MinimizationMethods,boydsubgradient}. In each subgradient iteration of Alg. \ref{Alg mixedWFSub}, we need to calculate $\boldsymbol{\tilde{q}}^{(t)}$ in Step 8 which requires $N$ operations. Then, we update the Lagrange multiplier $\nu$ whose complexity order is $1$. Thus, the overall complexity of Alg. \ref{Alg dinkelbach} with inner Alg. \ref{Alg mixedWFSub} is $C_F C_S(N+1)$, where $C_S$ indicates the number of subgradient iterations. 

The computational complexity order of our proposed as well as other existing globally optimal power allocation algorithms for solving the SR and EE maximization problems is summarized in Tables \ref{table complexity SR} and \ref{table complexity EE}, respectively.
In these tables, the term "pure" is referred to the case that we do not apply Propositions \ref{proposition feasiblecluster} and \ref{proposition jointcluster} (thus the equivalent transformation of Hybrid-NOMA to a FDMA system, denoted by "NOMA-to-OMA Transformation") in the convex solvers. The parameters $C_F$ and $C_S$ denote the number of Dinkelbach and subgradient iterations, respectively. Moreover, $C_N$ denotes the number of Newton's iteration in each barrier iteration. The parameter $\delta$ in Table \ref{table complexity EE} indicates the stepsize of exhaustive search for finding $P_{\rm{EE}}$.
In Table \ref{table complexity SR}, the pure water-filling algorithm needs to update $p^n_k,~\forall n \in \mathcal{N},~k \in \mathcal{K}_n$, which requires $NK$ operations\footnote{The pure water-filling algorithm in \cite{8125713} is for uplink MC-NOMA without considering users minimum rate constraints.} \cite{8125713}. Hence, the overall complexity of the pure water-filling algorithm is on the order of $N^2 K$. Therefore, Alg. \ref{Alg bisection} reduces the complexity of the pure water-filling algorithm by $K$ times, where $K$ is the number of multiplexed users in each subchannel. It is also possible to solve problem \eqref{sumrate problem} or its equivalent FDMA problem \eqref{eq1 problem} by using the subgradient or barrier (with inner Newton's algorithm) methods. As can be seen, the equivalent NOMA-to-OMA transformation also reduces the complexity of these solvers. Besides, Alg. \ref{Alg bisection} has the lowest computational complexity compared to the other existing methods. The latter conclusions also hold for the EE maximization problem shown in Table \ref{table complexity EE}. When Proposition \ref{proposition samechannels} holds, we can use Alg. \ref{Alg bisection} with the lowest computational complexity compared to the other existing convex solvers. The superiority of the Dinkelbach algorithm can be observed by comparing it with a greedy search over all the possible power consumption of the BS, denoted by $P_{\rm{EE}}$. Although Proposition \ref{proposition feasiblecluster} can reduce the search area, such that we can obtain the lower-bound of $P_{\rm{EE}}$ as $\sum\limits_{n \in \mathcal{N}} Q^{\rm{min}}_n$ (see \eqref{Qmin func}), as well as reduce the complexity of the pure water-filling algorithm by using Proposition \ref{proposition jointcluster}, the overall complexity of exploring $P_{\rm{EE}}$ is still large, when the stepsize $\delta$ is significantly small.

The numerical experiments show that Alg. \ref{Alg dinkelbach} converges in less than $6$ iterations, meaning that $C_F \approx 6$. In each Dinkelbach iteration, the subgradient method in Alg. \ref{Alg mixedWFSub} converges within $C_S \approx 15$ iterations. Besides, Alg. \ref{Alg mixedWFBarrier} converges within $10$ barrier iterations. For significantly large number of users around $100$ to $200$, the simulation codes in \cite{sourcecode} verify that the convergence time of our proposed algorithms are on the order of milliseconds. Based on our numerical experiments, we observed that the convergence time of the subgradient method in Alg. \ref{Alg mixedWFSub} is less than that of the barrier method in Alg. \ref{Alg mixedWFBarrier}.

\subsection{Subchannel Allocation in MC-NOMA}
The optimal subchannel allocation problem, i.e., finding optimal $\boldsymbol{\rho}=[\rho^n_{k}]$ or equivalently cluster sets $\mathcal{K}_n$, in MC-NOMA is classified as integer nonlinear programming problem. The subchannel allocation is determined on the top of power allocation. Therefore, the exact closed form of inter-cluster power allocation is required for solving the subchannel allocation problem. Although Alg. \ref{Alg bisection} approaches the globally optimal solution with a fast convergence speed, the exact value of $\nu^*$ and subsequently, closed-form of $\boldsymbol{q}^*$ is still unknown in general. A similar issue exists for the water-filling algorithms for the FDMA problems \cite{1381759,1576943,6547819,10.1155/2008/643081,8995606}. The Dinkelbach and subgradient methods also have similar issues, in which the exact value of optimal $\lambda$ and $\nu$ are unknown in general, respectively. The joint optimal user clustering and power allocation is known to be strongly NP-hard \cite{9044770,8422362,7587811}. 
Although the latter problem is strongly NP-hard, the optimal number of clusters or subchannels in FDMA-NOMA can be obtained as follows:
\begin{proposition}\label{proposition optgroup}
	In a $K$-user FDMA-NOMA system with limited number of multiplexed users $U^{\rm{max}}$, the optimal number of clusters is $N^*=\lceil K/U^{\rm{max}}\rceil$.
\end{proposition}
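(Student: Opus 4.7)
The plan is to establish the proposition in two stages. In the first stage, I would derive the necessity $N \geq \lceil K/U^{\rm{max}} \rceil$ from feasibility alone: in FDMA-NOMA each user must belong to exactly one cluster (since $|\mathcal{N}_k| = 1$), and each cluster holds at most $U^{\rm{max}}$ users, so a counting argument gives $N \cdot U^{\rm{max}} \geq K$, equivalently $N \geq \lceil K/U^{\rm{max}} \rceil$. This rules out every smaller value and establishes the lower bound on $N^*$.

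The second stage is to show that the minimum feasible value $\lceil K/U^{\rm{max}} \rceil$ is also optimal. The core intuition is that the per-subchannel bandwidth $W_s = W/N$ shrinks as $N$ grows, while the capacity-achievability of SC-SIC on the SISO Gaussian BC guarantees that packing users into fewer, wider clusters cannot reduce the achievable rates. Concretely, for any feasible $N > \lceil K/U^{\rm{max}}\rceil$, I would compare the optimum of problem \eqref{sumrate problem} (and analogously \eqref{EE problem}) at $N$ clusters with the optimum at $N-1$ clusters. Starting from an optimal $N$-cluster allocation, I would repartition the $K$ users into $N-1$ groups of size at most $U^{\rm{max}}$, which is possible precisely because $N-1 \geq \lceil K/U^{\rm{max}}\rceil$. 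Within each new cluster of bandwidth $W/(N-1) > W/N$, I would apply SC-SIC. Since the BC capacity region realized by SC-SIC contains any FDMA sub-partitioning of the same bandwidth, and the per-subchannel bandwidth has strictly increased, one can construct a feasible power allocation at the same total power that weakly dominates the $N$-cluster per-user rates. An induction on $N$ downward from $N$ to $N^*=\lceil K/U^{\rm{max}}\rceil$ then yields $R^*(N^*) \geq R^*(N)$ for every feasible $N$, and an identical monotonicity argument applies to the fractional EE because both numerator and denominator are handled in a coupled way by the same repartitioning.

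The main obstacle is the repartitioning step, because the CNRs $h^n_{k}$ are subchannel-specific, so reassigning a user from one subchannel to another generally changes its effective CNR and hence the achievable SINR. I would handle this by exploiting the fact that the proposition concerns the \emph{optimal} clustering together with the optimal power allocation, which permits reoptimization of the user-to-subchannel matching under the $(N-1)$-scheme. Combined with the bandwidth expansion $W/(N-1) > W/N$ and the BC-capacity-achievability of SC-SIC within each widened cluster, this guarantees the existence of a feasible $(N-1)$-cluster configuration whose SR (or EE) matches or exceeds that of the $N$-cluster configuration. This closes the monotonicity argument and identifies $N^* = \lceil K/U^{\rm{max}}\rceil$ as the unique optimal choice, as claimed.
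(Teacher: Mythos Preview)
Your proposal is correct and follows essentially the same approach as the paper: both arguments combine the counting lower bound $N \geq \lceil K/U^{\rm{max}}\rceil$ with the capacity-achievability of SC-SIC on the SISO Gaussian BC to conclude that any further FDMA splitting of a cluster can only shrink the achievable rate region, so the minimum feasible number of clusters is optimal. The paper's proof is more terse and does not separately flag the subchannel-dependent CNR issue you raise (it implicitly works in the flat-fading regime used in the simulations), but the logical skeleton is identical.
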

\begin{proof}
	Due to the degradation of SISO Gaussian BCs, it is proved that SC-NOMA is capacity achieving, meaning that the rate region of FDMA/TDMA is a subset of the rate region of SC-NOMA \cite{PHDdiss,10.5555/1146355,NIFbook,915665,1246013}. Hence, for the case that $K < U^{\rm{max}}$, the optimal user clustering is considering all the users in the same cluster, and apply SC-SIC among all the users, i.e., FDMA-NOMA turns into SC-NOMA. Now, consider $K=U^{\rm{max}}+C$, where $1 \leq C \leq U^{\rm{max}}$. In this case, SC-NOMA is infeasible, however, FDMA-NOMA divides $K$ users into two isolated clusters $\mathcal{K}_1$ and $\mathcal{K}_2$ satisfying $|\mathcal{K}_n|\leq U^{\rm{max}},~n=1,2$, due to the existing limitation on the number of multiplexed users. Each cluster set $\mathcal{K}_n,~n=1,2$ is a SISO Gaussian BC whose capacity region can be achieved by using SC-SIC. Hence, further dividing each user group $\mathcal{K}_n,~n=1,2$, based on FDMA/TDMA would result lower achievable rate. The latter result holds for any possible $2$ groups with $1 \leq C \leq U^{\rm{max}}$. Now, consider a general case $M U^{\rm{max}} + 1 \leq K \leq (M+1) U^{\rm{max}}$ with nonnegative integer $M$. In this case, the lowest possible number of isolated clusters is $M+1$. Further imposing FDMA/TDMA to any existing group would result in a suboptimal performance. Accordingly, the optimal number of clusters is exactly $\lceil K/U^{\rm{max}}\rceil$.
\end{proof}
Proposition \ref{proposition optgroup} shows that the achievable rate of FDMA with the highest isolation among users is a subset of the achievable rate of FDMA-NOMA with any given user clustering. Since our globally optimal power allocation algorithms are valid for any given user clustering, the existing suboptimal user clustering algorithms, such as heuristic methods in \cite{7557079,8114362,9276828,8119791,8448840}, and matching-based algorithms in \cite{7982784,7523951} can be applied. Another approach is the framework in \cite{6816086} which is the joint optimization of power and subchannel allocation with the relaxed-and-rounding method. However, the output is still suboptimal without any mathematical performance improvement guarantee. Roughly speaking, there is still no mathematical understanding analysis for performance comparison among the existing suboptimal user clustering algorithms. The optimal user clustering is still unknown, and is considered as a future work.

\section{Extensions and Future Research Directions}\label{section extensions}
Here, we discuss about the possible extensions of our analysis to more general scenarios. For each case, the potential challenges are discussed in details.
\subsection{Users Maximum Rate Constraint}
According to Propositions \ref{proposition jointcluster} and \ref{proposition intraEE}, we conclude that at the optimal point of the SR/EE maximization problems, only the cluster-head users get additional power. In practical systems, the achievable rate of users are also limited by a maximum value due to the discrete modulation and coding schemes \cite{10.5555/1146355,NIFbook}. In the SR/EE maximization of Hybrid-NOMA with significantly large number of subchannels and/or multiplexed users, it merely happens that a cluster-head user's rate within a subchannel exceeds the truncated Shannon's bound. This is due to the fact that 1) The clusters power budget will be typically low, on the order of few Watts, or even mWatts; 2) Mostly, a large portion of the clusters power budget will be allocated to the non-cluster-head users. For the sake of completeness, we discuss about the impact of considering per-subchannel maximum rate constraints in the SR/EE maximization problems.
To keep the generality, let us define $R^{\rm{max}}_{k,n}$ as the individual maximum allowable rate of user $k$ on subchannel $n$. The maximum rate constraint can thus be formulated as  
\begin{equation}\label{constraint max rate}
	R^n_{k} (\boldsymbol{p}^n) \leq R^{\rm{max}}_{k,n},~\forall n \in \mathcal{N}, k \in \mathcal{K}_n.
\end{equation}
By adding \eqref{constraint max rate} to the original SR maximization problem \eqref{sumrate problem}, constraints \eqref{Constraint minrate} and \eqref{constraint max rate} can be combined as
\begin{equation}\label{constraint maxmin rate}
	R^{\rm{min}}_{k,n} \leq R^n_{k} (\boldsymbol{p}^n) \leq R^{\rm{max}}_{k,n},~\forall n \in \mathcal{N}, k \in \mathcal{K}_n.
\end{equation}
Obviously, the minimum rate demands should be chosen such that $R^{\rm{min}}_{k,n} \leq R^{\rm{max}}_{k,n},~\forall n \in \mathcal{N}, k \in \mathcal{K}_n$, otherwise the feasible set of \eqref{constraint maxmin rate} will be empty. According to Proposition \ref{proposition feasiblecluster}, we can guarantee that at the optimal point of the total power minimization problem, $R^n_{k} (\boldsymbol{p^*}^n) = R^{\rm{min}}_{k,n} \leq R^{\rm{max}}_{k,n},~\forall n \in \mathcal{N}, k \in \mathcal{K}_n$, meaning that the maximum rate constraint \eqref{constraint max rate} has no impact on the lower-bound of $\boldsymbol{q}$. Thus, the lower-bound of $q_n$ can be obtained by \eqref{Qmin func}. On the other hand, the upper-bound of $q_n$ can be achieved by solving the per-cluster total power maximization problem (when the cellular power constraint is eliminated). Let us denote $Q^\text{max}_n$ as the power consumption of cluster $n$, where $R^n_{k} (\boldsymbol{p}^n) = R^{\rm{max}}_{k,n},~\forall k \in \mathcal{K}_n$. Similar to Appendix \ref{appendix Lemma feasible}, it can be easily shown that $Q^\text{max}_n$ can be obtained by \eqref{Qmin func} in which $\beta^n_k=2^{(R^{\rm{max}}_{k,n}/W_s)} -1,~\forall n \in \mathcal{N},~k \in \mathcal{K}_n$. In this way, the feasible set of problem \eqref{SCuser problem} with maximum rate constraints in \eqref{constraint max rate} can be characterized as the intersection of $q_n \in \left[Q^{\rm{min}}_n, \min\left\{Q^\text{max}_n,P^{\rm{mask}}_n\right\}\right],~\forall n \in \mathcal{N}$, and cellular power constraint $\sum\limits_{n \in \mathcal{N}} q_n \leq P^{\rm{max}}$. It is straightforward to show that when $\sum\limits_{n \in \mathcal{N}} \min\left\{Q^\text{max}_n,P^{\rm{mask}}_n\right\} \leq P^{\rm{max}}$, the cellular power constraint \eqref{Constraint cell power} will be always fulfilled, thus it can be removed from problems 
\eqref{sumrate problem} and \eqref{EE problem}. In this case, the SR maximization problem \eqref{sumrate problem} can be equivalently divided into $N$ SC-NOMA subproblems, since there is no longer the competition among clusters to get the cellular power budget. Subsequently, at the optimal point of the SR maximization problem, we guarantee that each cluster $n$ achieves its maximum allowable power, i.e., $q^*_n=\min\left\{Q^\text{max}_n,P^{\rm{mask}}_n\right\},~n \in \mathcal{N}$. Hence, the inter-cluster power allocation is required if and only if $\sum\limits_{n \in \mathcal{N}} \min\left\{Q^\text{max}_n,P^{\rm{mask}}_n\right\} > P^{\rm{max}}$.

Consider a simple $2$-user SC-NOMA system with $h_1 < h_2$, thus the optimal decoding order is $2 \to 1$.
The SR maximization problem \eqref{sumrate problem} with per-user maximum rate constraint can be formulated as follows:
\begin{subequations}\label{sumrate problem maxrate}
	\begin{align}\label{obf sumrate problem maxrate}
		\max_{ \boldsymbol{p} \geq 0}\hspace{.0 cm}	
		~~ & R_{1} (\boldsymbol{p}) + R_{2} (\boldsymbol{p})
		\\
		\text{s.t.}~~
		\label{Constraint minrate 2user}
		& R_{k} (\boldsymbol{p}) \geq R^{\rm{min}}_{k},~\forall k=1,2,
		\\
		\label{max rate constraint}
		& R_{k} (\boldsymbol{p}) \leq R^{\rm{max}}_{k},~\forall k=1,2,
		\\
		\label{Constraint cell power 2user}
		& p_{1} + p_{2} \leq \min\{P^{\rm{max}},Q^\text{max}\},
	\end{align}
\end{subequations}
where $Q^\text{max}$ is the maximum power consumption of the BS, due to constraint \eqref{max rate constraint}. The problem \eqref{sumrate problem maxrate} is convex with an affine feasible set. Assume that $R^{\rm{min}}_{k} < R^{\rm{max}}_{k},~k=1,2$. 
At the optimal point, Condition $\textbf{C1}:~p^*_{1} + p^*_{2} = \min\{P^{\rm{max}},Q^\text{max}\}$ always holds.
According to the proof of Proposition \ref{proposition jointcluster}, when \eqref{max rate constraint} for user $2$ is removed from \eqref{sumrate problem maxrate}, at the optimal point, the following properties hold
$$
\textbf{C2}:~p^*_{1} + p^*_{2} = P^{\rm{max}},~~~\textbf{C3}:~R_{1} (\boldsymbol{p}^*) = R^{\rm{min}}_{1}.
$$
In this case, based on Proposition \ref{proposition jointcluster}, the optimal powers can be obtained as
\begin{equation}\label{opt pow 2user}
	p^*_{1} = \beta_1 \left( P^{\rm{max}} + \frac{1}{h_{1}} \right),~~~p^*_{2} = \left( 1-\beta_1 \right) P^{\rm{max}} - \frac{\beta_1}{h_{1}},
\end{equation}
where $\beta_1=\frac{2^{(R^{\rm{min}}_{1}/W_s)}-1} {2^{(R^{\rm{min}}_{1}/W_s)}}$. Constraint \eqref{max rate constraint} for user $2$ can be rewritten as
\begin{equation}\label{maxrate 2}
	p_{2} \leq \left(2^{(R^{\rm{max}}_{2}/W_s)} - 1\right)/h_{2}.
\end{equation}
Hence, the maximum rate constraint of user $2$ is indeed a maximum power consumption constraint for this user. Let us define 
$$M_2=\left(2^{(R^{\rm{max}}_{2}/W_s)} - 1\right)/h_{2}.$$
According to Condition \textbf{C1}, \eqref{opt pow 2user} and \eqref{maxrate 2}, 
the optimal powers with imposing \eqref{max rate constraint} for both the users can be obtained as
\begin{align}\label{opt closedform 2user}
	p^*_{1} = & \min\left\{P^{\rm{max}},Q^\text{max}\right\} - \min\left\{ \left( 1-\beta_1 \right) P^{\rm{max}} - \frac{\beta_1}{h_{1}} , M_2 \right\},
	\nonumber \\
	& p^*_{2} = \min\left\{ \left( 1-\beta_1 \right) P^{\rm{max}} - \frac{\beta_1}{h_{1}} , M_2 \right\}.
\end{align}
Hence, if $\left( 1-\beta_1 \right) P^{\rm{max}} - \frac{\beta_1}{h_{1}} \leq M_2$, we guarantee that $R_{1} (\boldsymbol{p}^*)=R^{\rm{min}}_{1}$, $R^{\rm{min}}_{2} \leq R_{2} (\boldsymbol{p}^*) \leq R^{\rm{max}}_{2}$, and $p^*_{1} + p^*_{2} \leq P^{\rm{max}} < Q^\text{max}$. If $\left( 1-\beta_1 \right) P^{\rm{max}} - \frac{\beta_1}{h_{1}} > M_2$, and $P^{\rm{max}} \leq Q^\text{max}$, we guarantee that $R^{\rm{min}}_{1} \leq R_{1} (\boldsymbol{p}^*) \leq R^{\rm{max}}_{1}$, $R_{2} (\boldsymbol{p}^*) = R^{\rm{max}}_{2}$, and $p^*_{1} + p^*_{2} = P^{\rm{max}} \leq Q^\text{max}$. Finally, if $\left( 1-\beta_1 \right) P^{\rm{max}} - \frac{\beta_1}{h_{1}} > M_2$, and $P^{\rm{max}} > Q^\text{max}$, we guarantee that $R_{1} (\boldsymbol{p}^*) = R^{\rm{max}}_{1}$, $R_{2} (\boldsymbol{p}^*) = R^{\rm{max}}_{2}$, and $p^*_{1} + p^*_{2} = Q^\text{max} < P^{\rm{max}}$.
According to the above, Proposition \ref{proposition jointcluster} holds if and only if $\left( 1-\beta_1 \right) P^{\rm{max}} - \frac{\beta_1}{h_{1}} \leq M_2$. When user $2$ exceeds its maximum rate $R^{\rm{max}}_{2}$, we allocate power to user $2$ until $R_{2} (\boldsymbol{p}^*) = R^{\rm{max}}_{2}$, and the rest of the cellular power will be allocated to user $1$ until it achieves to its maximum rate.
The latter analysis can be generalized to the $K$-user SC-NOMA system. For more details, please see Appendix \ref{appendix max rate constraint}. The analysis in Appendix \ref{appendix max rate constraint} shows that there exists a closed-form of optimal power allocation for the general $K$-user SC-NOMA with per-user minimum and maximum rate constraints. During the power allocation, there exists a special user $i$, where all of the stronger users than user $i$ achieve their maximum rates, and all of the weaker users than user $i$ achieve their minimum rates. 
Due to the space limitations, obtaining the closed-form of optimal powers, and how to define the index of user $i$ for a given power budget is considered as a future work\footnote{In problem \eqref{sumrate problem} without maximum rate constraints, the cluster-head user, whose index is $\Phi_n$, is the special user $i$, thus none of the other multiplexed users deserve additional power. This is the main reason that we define the special notation $\Phi_n$ for the cluster-head user of subchannel $n$ in Subsection \ref{subsection netmodel}.}. After obtaining the closed-form of optimal powers as a function of the clusters power budget $\boldsymbol{q}$ in Hybrid-NOMA with per-subchannel maximum and minimum rate constraints, it might be possible to transform the Hybrid-NOMA problem to a FDMA problem, which can be considered as a future work.

\subsection{Hybrid-NOMA with Per-User Minimum Rate Constraints}
In our work, we considered a Hybrid-NOMA system, where the minimum rate demand of each user on each assigned subchannel is predefined, similar to \cite{7523951,9276828,8119791,8448840,9032198}. This scheme is the generalized model of FDMA-NOMA considered in \cite{7982784,7557079,8114362}. From the optimization perspective, the SR/EE maximization problem for Hybrid-NOMA with predefined minimum rate demand of each user on each assigned subchannel, and FDMA-NOMA has similar structures, and both of them are convex. A more general/complicated case is when we consider a per-user minimum rate constraint over all the assigned subchannels. 
The SR maximization problem for Hybrid NOMA with per-user minimum rate constraint can be formulated as 
\begin{subequations}\label{sumrateHybrid problem}
	\begin{align}\label{obf sumrateHybrid problem}
		\max_{ \boldsymbol{p} \geq 0}\hspace{.0 cm}	
		~~ & \sum\limits_{n \in \mathcal{N}} \sum\limits_{k \in \mathcal{K}_n} R^n_{k} (\boldsymbol{p}^n)
		\\
		\text{s.t.}~~
		& \eqref{Constraint cell power},~\eqref{Constraint mask power}, \nonumber
		\\
		\label{minrate Hybrid}
		& R_{k} (\boldsymbol{p}) \geq R^{\rm{min}}_{k},~\forall k \in \mathcal{K},
	\end{align}
\end{subequations}
where $R_{k} (\boldsymbol{p})= \sum\limits_{n \in \mathcal{N}_k} R^n_{k} (\boldsymbol{p}^n)$ denotes the achievable rate of user $k$ over all the assigned subchannels in $\mathcal{N}_k$. The term $R^n_{k} (\boldsymbol{p}^n)$ for each user $k \in \mathcal{K}_n \setminus \{\Phi_n\}$ is nonconcave in $\boldsymbol{p}^n$, due to the co-channel interference term $\sum\limits_{j \in \mathcal{K}_n, \atop h^n_{j} > h^n_{k}} p^n_{j} h^n_{k}$. Since each two terms $R^i_{k} (\boldsymbol{p}^i)$ and $R^j_{k} (\boldsymbol{p}^j)$ for subchannels $i,j \in \mathcal{N}_k$ includes disjoint set of powers, we can conclude that $R_{k} (\boldsymbol{p})= \sum\limits_{n \in \mathcal{N}_k} R^n_{k} (\boldsymbol{p}^n)$ is nonconcave when $|\mathcal{N}_k|>1$ and $\exists n \in \mathcal{N}_k,~k \neq \Phi_n$, which makes \eqref{minrate Hybrid} nonconcave. It is still unknown how to equivalently transform \eqref{minrate Hybrid} to a convex form. To this end, the globally optimal solution of \eqref{sumrateHybrid problem} with polynomial time complexity is not yet obtained in the literature. One suboptimal solution for \eqref{sumrateHybrid problem} is to approximate each nonconcave rate function $R^n_{k} (\boldsymbol{p}^n)$ to its first order Taylor series, and then apply the sequential programming method \cite{9264208,7954630,8758862}. A suboptimal penalty function method is also used in \cite{8807992}.
Let us define an auxiliary variable $r^n_k$ indicating the minimum rate demand of user $k \in \mathcal{K}_n$ on subchannel $n$ in bps. In this way, problem \eqref{sumrateHybrid problem} can be equivalently transformed to the following joint power and minimum rate allocation problem as
\begin{subequations}\label{sumrateHybrid problem1}
	\begin{align}\label{obf sumrateHybrid problem1}
		\max_{ \boldsymbol{p} \geq 0,~\boldsymbol{r} \geq 0}\hspace{.0 cm}	
		~~ & \sum\limits_{n \in \mathcal{N}} \sum\limits_{k \in \mathcal{K}_n} R^n_{k} (\boldsymbol{p}^n)
		\\
		\text{s.t.}~~
		& \eqref{Constraint cell power},~\eqref{Constraint mask power}, \nonumber
		\\
		\label{rate constraint}
		& R^n_{k} (\boldsymbol{p}^n) \geq r^n_k,~\forall n \in \mathcal{N}, k \in \mathcal{K}_n,
		\\
		\label{minrate epi}
		& \sum\limits_{n \in \mathcal{N}_k} r^n_k = R^{\rm{min}}_{k},~\forall k \in \mathcal{K},
	\end{align}
\end{subequations}
where $\boldsymbol{r}=[r^n_k],~\forall n \in \mathcal{N}, k \in \mathcal{K}_n$. For any given feasible $\boldsymbol{r}$ satisfying constraints in \eqref{minrate epi}, problem \eqref{sumrateHybrid problem1} can be equivalently transformed to the convex problem \eqref{sumrate problem} with minimum rate demands $r^n_k,~\forall n \in \mathcal{N}, k \in \mathcal{K}_n$.
Hence, our analysis and important theoretical insights hold for any given $\boldsymbol{r}$ in the SR/EE maximization problem of Hybrid-NOMA with per-user minimum rate constraints. According to the above, the only challenge which is not yet solved is how to find $\boldsymbol{r}^*$ in \eqref{sumrateHybrid problem1} or equivalently distribute $R^{\rm{min}}_{k}$ over the subchannels in $\mathcal{N}_k$.
\begin{corollary}\label{corol Hybrid}
	In Hybrid-NOMA with per-user minimum rate demands over all the assigned subchannels, if user $k \in \mathcal{K}$ is a non-cluster-head user in all the assigned subchannels, e.g., a cell-edge user, at the optimal point of SR/EE maximization, it gets power to only maintain its minimum rate demand $R^{\rm{min}}_{k}$, meaning that $R_{k} (\boldsymbol{p}^*)=\sum\limits_{n \in \mathcal{N}_k} R^n_{k} (\boldsymbol{p^*}^n)=R^{\rm{min}}_{k}$.
\end{corollary}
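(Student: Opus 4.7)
The plan is to reduce the per-user minimum-rate problem to the per-subchannel minimum-rate problem already analyzed, by exploiting the epigraph-style reformulation \eqref{sumrateHybrid problem1}. Specifically, I would note that the constraint $\sum_{n\in\mathcal{N}_k} r^n_k = R^{\rm{min}}_{k}$, together with \eqref{rate constraint}, makes the pair $(\boldsymbol{p},\boldsymbol{r})$ jointly feasible if and only if $\boldsymbol{p}$ is feasible for the per-subchannel problem with demands $r^n_k$. Hence at the global optimum $(\boldsymbol{p}^*,\boldsymbol{r}^*)$, the inner slice in $\boldsymbol{p}$ (with $\boldsymbol{r}=\boldsymbol{r}^*$ fixed) must itself be optimal for the per-subchannel SR maximization problem \eqref{sumrate problem} with rate demands $r^{n*}_{k}$; an analogous statement for the EE problem follows from Proposition \ref{proposition intraEE}, which shows that the intra-cluster structure is insensitive to the fractional parameter $\lambda$.

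Next, I would invoke Proposition \ref{proposition jointcluster} (and Proposition \ref{proposition intraEE} for EE) applied to that inner slice. These propositions say that within each cluster $n$, only the cluster-head $\Phi_n$ receives power beyond what is needed to meet its minimum rate demand; every non-cluster-head user $j\in\mathcal{K}_n\setminus\{\Phi_n\}$ satisfies $R^n_j(\boldsymbol{p^*}^n)=r^{n*}_{j}$ exactly. By hypothesis, user $k$ satisfies $k\neq\Phi_n$ for every $n\in\mathcal{N}_k$, so this equality applies on each of its assigned subchannels.

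Summing over $n\in\mathcal{N}_k$ and using \eqref{minrate epi} then yields
\begin{equation*}
R_k(\boldsymbol{p}^*)=\sum_{n\in\mathcal{N}_k} R^n_k(\boldsymbol{p^*}^n)=\sum_{n\in\mathcal{N}_k} r^{n*}_{k}=R^{\rm{min}}_{k},
\end{equation*}
which is the desired claim. The only place where care is needed is the reduction step: one must verify that for fixed $\boldsymbol{r}=\boldsymbol{r}^*$ the problem in $\boldsymbol{p}$ truly coincides with \eqref{sumrate problem}/\eqref{EE problem}, so that Propositions \ref{proposition jointcluster} and \ref{proposition intraEE} can be legitimately invoked; this follows because fixing $\boldsymbol{r}$ makes \eqref{minrate epi} inactive and \eqref{rate constraint} reduces to the per-subchannel minimum rate constraint \eqref{Constraint minrate}. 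I expect no further obstacle, since all remaining steps are bookkeeping based on results already proved earlier in the paper.
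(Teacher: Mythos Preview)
Your proposal is correct and follows essentially the same approach as the paper: the paper sets up the reformulation \eqref{sumrateHybrid problem1}, observes that for any fixed feasible $\boldsymbol{r}$ the problem reduces to \eqref{sumrate problem} with per-subchannel demands $r^n_k$, and states the corollary as an immediate consequence of Propositions \ref{proposition jointcluster} and \ref{proposition intraEE}. You have simply written out the inner-slice argument and the summation over $\mathcal{N}_k$ more explicitly than the paper does, but the logical route is the same.
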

Accordingly, each user $k \in \mathcal{K}$ deserves additional power if and only if it is a cluster-head user in at least one of the assigned subchannels.
As a result, when the minimum rate demand of users are zero, in both the SR and EE maximization problems, only the cluster-head users get positive power, thus Hybrid-NOMA will be identical to OFDMA (also see Lemma 8 in \cite{7587811}). These results show that in both the SR and EE maximization problems of Hybrid-NOMA, there exists a critical fairness issue among users' achievable rate which is discussed in the following subsection.

\subsection{Users' Rate Fairness}
According to \eqref{power sameminrate i} and \eqref{power sameminrate M}, we observe that in the SR/EE maximization problems, a large portion of the clusters power budget will be allocated to the users with lower decoding order when all the multiplexed users have the same minimum rate demands within a cluster. It states that in contrast to FDMA, NOMA \textit{usually} allocates more power to the weaker users when all the multiplexed users have the same minimum rate demands. This result shows that NOMA provides users fairness in terms of power allocation. However, according to Corollaries \ref{corollary EEdeservpow} and \ref{corol Hybrid}, we observe that this users' power fairness does not necessarily lead to the users' rate fairness, since only one user in each cluster gets additional rate. Accordingly, substantial works are required to guarantee users' rate fairness. There exist many fairness schemes which are recently considered for SC/MC-NOMA, as proportional fairness \cite{9044770,8737495,7982784,7954630,7812683,7587811,7560605}, max-min fairness \cite{7982784}, and etc. In the following, we first discuss about the advantages/challenges of the proportional fairness scheme, where our objective is to tune the users achievable rate at the optimal point by maximizing the weighted SR of users. Then, we propose a new fairness scheme which is a mixture of proportional fairness and users weighted minimum rate demands.

\subsubsection{Proportional Fairness}\label{subsection pf}
In proportional fairness, we aim at maximizing the weighted SR of users formulated by $\sum\limits_{n \in \mathcal{N}} \sum\limits_{k \in \mathcal{K}_n} \omega_k R^n_{k} (\boldsymbol{p}^n)$, where $\omega_k$ is the weight of user $k \in \mathcal{K}$, that is a constant, and is determined on the top of resource allocation. The weighted SR maximization problem can thus be formulated by
\begin{equation}\label{WSR MC-NOMA problem}
	\max_{ \boldsymbol{p} \geq 0}\hspace{.0 cm}	
	~~ \sum\limits_{n \in \mathcal{N}} \sum\limits_{k \in \mathcal{K}_n} \omega_k R^n_{k} (\boldsymbol{p}^n),~~~~~\text{s.t.}~~\eqref{Constraint minrate}\text{-}\eqref{Constraint mask power}.
\end{equation}
The feasible region of problem \eqref{WSR MC-NOMA problem} can be characterized by using Proposition \ref{proposition feasiblecluster}.
For each cluster $n$, it can be shown that if $\omega_i \geq \omega_j,~\forall i,j \in \mathcal{K}_n,~h^n_i \geq h^n_j$, the weighted SR function $\sum\limits_{k \in \mathcal{K}_n} \omega_k R^n_{k} (\boldsymbol{p}^n)$ is negative definite. In this case, the globally optimal powers can be obtained by using Proposition \ref{proposition jointcluster}, meaning that the weights $\omega_k,~\forall k \in \mathcal{K}_n$ do on affect the optimal intra-cluster power allocation policy, thus users achievable rate. Moreover, Alg. \ref{Alg bisection} finds the globally optimal solution of problem \eqref{WSR MC-NOMA problem}, such that based on \eqref{bisection opt form}, the optimal $\tilde{q}^*_n$ can be obtained as
\begin{equation}\label{bisection opt WSR}
	\tilde{q}^*_n=
	\begin{cases}
		\frac{\omega_{\Phi_n} W_s/(\ln 2)}{\nu^*} - \frac{1}{H_n}, &\quad \left(\frac{W_s/(\ln 2)}{\nu^*} - \frac{1}{H_n}\right) \in [\tilde{Q}^{\rm{min}}_n,\tilde{P}^{\rm{mask}}_n], \\
		0, &\quad \text{otherwise}. \\ 
	\end{cases}
\end{equation}
The closed-form expression \eqref{bisection opt WSR} states that when $\omega_i \geq \omega_j,~\forall n \in \mathcal{N}, i,j \in \mathcal{K}_n,~h^n_i \geq h^n_j$, we can only tune the fairness among cluster-head users. It corresponds to tuning the fairness among clusters/virtual OMA users defined in Remark \ref{remark virtualuser}.
To tune fairness among the multiplexed users within each cluster in the proportional fairness scheme, we need to assign more weights to the weaker users. For the case that $\exists n \in \mathcal{N}, i \neq j \in \mathcal{K}_n,~\omega_i < \omega_j,~h^n_i \geq h^n_j$, the weighted SR function $\sum\limits_{n \in \mathcal{N}} \sum\limits_{k \in \mathcal{K}_n} \omega_k R^n_{k} (\boldsymbol{p}^n)$ could be nonconcave, which makes problem \eqref{WSR MC-NOMA problem} nonconvex \cite{7587811}. In this regard, the strong duality in \eqref{WSR MC-NOMA problem} does not hold, thus there exists a certain duality gap in the Lagrange dual method \cite{7587811}. Although there are some interesting approximation analysis for the weighted SR function \cite{9044770}, the globally optimal solution of problem \eqref{WSR MC-NOMA problem} for the case that $\exists n \in \mathcal{N}, i \neq j \in \mathcal{K}_n,~\omega_i < \omega_j,~h^n_i \geq h^n_j$, is still an open problem. In this case, one suboptimal solution is to apply the well-known sequential programming method \cite{7954630}. 

\subsubsection{Mixed Weighted Sum-Rate/Weighted Minimum Rate Fairness}\label{subsection PFMRF}
In contrast to FDMA, proportional fairness in SC/MC-NOMA leads to a nonconvex problem in general, which greatly increases the complexity of finding the globally optimal power allocation. Another issue in proportional fairness is properly determining users weights prior to resource allocation. It is still unknown how to properly choose the users weight in order to achieve the desired users data rates after the optimal power allocation optimization which is important to guarantee users rate fairness. According to \eqref{bisection opt form}, we can conclude that in FDMA (with $\alpha_n=1$, $H_n=h^n_{\Phi_n}$, and $c_n=0$, for each $n \in \mathcal{N}$), the users minimum rate demand merely impacts on the optimal power allocation policy, thus users achievable rate. It means that tuning the minimum rate demand of users in FDMA merely impacts on the users data rate at the optimal point. In contrast to FDMA, we observe that the non-cluster-head users minimum rate demands highly affect the optimal intra-cluster power allocation of SC/MC-NOMA formulated in Proposition \ref{proposition jointcluster}. In particular, we observe that all the non-cluster-head users achieve their predefined minimum rate demands on each assigned subchannel at the optimal point of the SR/EE maximization problems. Hence, by properly increasing the target minimum rate demands of the non-cluster-head users, we not only guarantee the multiplexed users rate fairness, but also the exact achievable rate of the non-cluster-head users on each subchannel before power allocation optimization.

Let us define $\Lambda_{k,n}$ as the weight of the minimum rate demand of user $k$ on subchannel $n$. In our proposed fairness scheme, we define the minimum rate demand of each non-cluster-head user $k \in \mathcal{K}_n\setminus\{\Phi_n\}$ as $S^{\rm{min}}_{k,n} = \Lambda_{k,n} R^{\rm{min}}_{k,n}$ with $\Lambda_{k,n} \geq 1$. Based on Remark \ref{remark virtualuser}, the minimum rate demand of the cluster-head user $\Phi_n,~\forall n \in \mathcal{N}$, merely impacts on the optimal power allocation formulated in Proposition \ref{proposition jointcluster}, thus the cluster-head users achievable rate. To this end, we set $\Lambda_{\Phi_n,n} = 1,~\forall n \in \mathcal{N}$. By using the fact that each cluster-head user acts as an OMA user (see the paragraph after Remark \ref{remark virtualuser}), we apply the proportional fairness scheme among the cluster-head users in which we define $\omega_{\Phi_n}$ as the weight of the cluster-head user $\Phi_n$. Finally, the power allocation problem for the mixed weighted SR/weighted minimum rate fairness can be formulated as
\begin{subequations}\label{MSWMR problem}
	\begin{align}\label{obf MSWMR problem}
		\max_{ \boldsymbol{p} \geq 0}\hspace{.0 cm}	
		~~ & \sum\limits_{n \in \mathcal{N}} \sum\limits_{k \in \mathcal{K}_n\setminus\{\Phi_n\}} R^n_{k} (\boldsymbol{p}^n) + \omega_{\Phi_n} R^n_{\Phi_n} (\boldsymbol{p}^n)
		\\
		\text{s.t.}~~&\eqref{Constraint cell power},~\eqref{Constraint mask power}, \nonumber
		\\
		\label{Constraint WMR}
		& R^n_{k} (\boldsymbol{p}^n) \geq S^{\rm{min}}_{k,n},~\forall n \in \mathcal{N}, k \in \mathcal{K}_n.
	\end{align}
\end{subequations}
According to the discussions in Subsection \ref{subsection pf}, it is straightforward to show that the objective function \eqref{obf MSWMR problem} is strictly concave if we set $\omega_{\Phi_n} \geq 1,~\forall n \in \mathcal{N}$. For any given $\omega_{\Phi_n} \geq 1,~\forall n \in \mathcal{N}$, the intra-cluster optimal powers of problem \eqref{MSWMR problem} can be obtained by using Proposition \ref{proposition jointcluster}, in which we substitute $R^{\rm{min}}_{k,n}$ with $S^{\rm{min}}_{k,n}$. In this fairness scheme, $\omega_{\Phi_n},~\forall n \in \mathcal{N}$, are chosen to only tune the fairness among cluster-head users. Thus, we can set $\omega_{\Phi_n} \geq 1,~\forall n \in \mathcal{N}$, such that the fairness of non-cluster-head users is guaranteed by parameter $\Lambda_{k,n}$ in $S^{\rm{min}}_{k,n}=\Lambda_{k,n} R^{\rm{min}}_{k,n}$ in constraint \eqref{Constraint WMR}. In summary, the fairness parameters in problem \eqref{MSWMR problem} satisfy $\Lambda_{k,n} \geq 1,~\forall n \in \mathcal{N},~k \in \mathcal{K}_n\setminus\{\Phi_n\}$, $\Lambda_{k,\Phi_n} = 1,~\forall n \in \mathcal{N}$, and $\omega_{\Phi_n} \geq 1,~\forall n \in \mathcal{N}$. Note that in the objective function \eqref{obf MSWMR problem}, the weight of each non-cluster-head user within each cluster is one. The feasible region of problem \eqref{MSWMR problem} can be characterized by using Proposition \ref{proposition feasiblecluster}, in which we substitute $R^{\rm{min}}_{k,n}$ with $S^{\rm{min}}_{k,n}$. Finally, the water-filling Alg. \ref{Alg bisection} can be applied to find the globally optimal solution of \eqref{MSWMR problem} in which the optimal $\tilde{q}^*_n$ is given by \eqref{bisection opt WSR}. 

It can be shown that similar to proportional fairness, by properly choosing the fairness parameters $\Lambda_{k,n} \geq 1,~\forall n \in \mathcal{N},~k \in \mathcal{K}_n\setminus\{\Phi_n\}$, and $\omega_{\Phi_n} \geq 1,~\forall n \in \mathcal{N}$, our proposed fairness scheme can also achieve any feasible desired rates for all the users in Hybrid-NOMA, which is important to guarantee any users rate fairness level. Similar to the proportional fairness, it is still difficult to properly assign the weight of the cluster-head users in our proposed fairness scheme denoted by $\omega_{\Phi_n} \geq 1,~\forall n \in \mathcal{N}$. Another challenge is properly setting $\Lambda_{k,n}$ of each non-cluster-head user $k \in \mathcal{K}_n\setminus\{\Phi_n\}$ prior to resource allocation optimization. This is because, the parameter $S^{\rm{min}}_{k,n},~\forall k \in \mathcal{K}_n\setminus\{\Phi_n\}$, significantly increases $Q^{\rm{min}}_n$ in \eqref{Qmin func}. Hence, significantly large $\Lambda_{k,n}$ for user $k$ may lead to empty feasible region for each subchannel $n \in \mathcal{N}_k,~k \neq \Phi_n$ (user $k$ is not cluster-head). It is worth noting that for any given $\Lambda_{k,n}$, Corollary \ref{corol feasible region} is useful to immediately verify whether the feasible region is empty or not. One interesting topic is how to achieve a preferable/absolute users rate fairness by properly choosing the fairness parameters $\Lambda_{k,n} \geq 1,~\forall n \in \mathcal{N},~k \in \mathcal{K}_n\setminus\{\Phi_n\}$, and $\omega_{\Phi_n} \geq 1,~\forall n \in \mathcal{N}$, in our proposed fairness scheme, which brings new theoretical insights on the fundamental relations between our proposed and the well-known proportional/max-min rate fairness schemes.

\subsection{Imperfect Channel State Information}
Unfortunately, it is difficult to acquire the perfect CSI of users, due to the existence of channel estimation errors, feedback delay, and quantization error. In NOMA with imperfect CSI, the imperfect CSI may lead to incorrect user ordering for SIC within a cluster resulting in outage \cite{7934461}, namely SIC outage. By employing the stochastic method, the CNR of user $k \in \mathcal{K}$ on subchannel $n \in \mathcal{N}$ can be modeled as $h^n_k=\hat{h}^n_k + e^n_k$, where $e^n_k \sim \mathcal{CN} \left(0,\sigma^2_e\right)$ and $\hat{h}^n_k \sim \mathcal{CN} \left(0,1-\sigma^2_e\right)$ denote the estimation error normalized by noise and estimated CNR, respectively. Assume that the estimated CNR $\hat{h}^n_k$ and normalized estimation error $e^n_k$ are uncorrelated \cite{8119791,7934461,9032198}. In each cluster $n$, by performing user ordering based on $\hat{h}^n_k,~\forall k \in \mathcal{K}_n$, the SIC outage occurs if and only if there exists at least on user pair $i,j \in \mathcal{K}_n,~\hat{h}^n_i > \hat{h}^n_j$, while $h^n_i < h^n_j$. 
Assume that $e^n_k \in [L^n_k,U^n_k],~\forall n \in \mathcal{N},~k \in \mathcal{K}_n$. The SIC outage is thus zero if and only if 
$\min\limits_{e^n_i \in [L^n_i,U^n_i]}h^n_i \geq \max\limits_{e^n_j \in [L^n_j,U^n_j]}h^n_j,~\forall n \in \mathcal{N},~k \in \mathcal{K}_n,~\hat{h}^n_i > \hat{h}^n_j.$
In the latter condition, $\min\limits_{e^n_i \in [L^n_i,U^n_i]}h^n_i= \hat{h}^n_i + L^n_i$, and $\max\limits_{e^n_j \in [L^n_j,U^n_j]}h^n_j= \hat{h}^n_j + U^n_j$. Therefore, the SIC outage is zero if and only if 
\begin{equation}\label{zero SIC outage}
	\hat{h}^n_i + L^n_i \geq \hat{h}^n_j + U^n_j,~\forall n \in \mathcal{N},~k \in \mathcal{K}_n,~\hat{h}^n_i > \hat{h}^n_j.
\end{equation}
The condition \eqref{zero SIC outage} states that with imperfect CSI, there exists an additional outage, called SIC outage in SC/MC-NOMA, if for a multiplexed user pair, the best case of the CNR of the weaker user is greater than the worst case of the CNR of the stronger user. The SIC outage depends on the region of normalized estimation errors and estimated CNRs. Thus, the SIC outage cannot be tuned by means of power allocation optimization. The latter result is due to the fact that the SIC decoding order of users in SISO Gaussian BCs is independent of power allocation. The SIC outage probability of the $2$-user SC-NOMA system is analyzed in \cite{7934461}.
Although when the condition in \eqref{zero SIC outage} is not fulfilled, we cannot achieve the zero-SIC outage by means of power allocation, the zero-SIC outage can be achieved by the user clustering of MC-NOMA, or in general, subchannel allocation. For example, when the lower-bound $L^n_i$ and upper-bound $U^n_j$ in \eqref{zero SIC outage} are available, we are able to impose the condition in \eqref{zero SIC outage} as a necessary constraint in user clustering problem to achieve the zero-SIC outage which increases the robustness of MC-NOMA. The impact of imposing the condition in \eqref{zero SIC outage} in user clustering can be considered as a future work.

The work in \cite{8119791} provided new analyses for the EE maximization problem of Hybrid-NOMA with imperfect CSI, when the large-scale fading factors are slowly varying, thus they can be estimated perfectly at the BS.
According to Subsection III in \cite{8119791}, it is straightforward to show that our analysis is also valid for Hybrid-NOMA with imperfect CSI, and considering per-symbol maximum outage probability and minimum rate constraints.
It is worth noting that the imperfect CSI merely impacts on the intra-cluster power allocation, due to the high insensitivity of optimal intra-cluster power allocation to the users exact CNRs for the SR/EE maximization problems discussed in Subsection \ref{subsection insights sum-rate}. One important future direction of our work is to evaluate the optimality and robustness of the approximated closed-form of optimal powers in \eqref{opt IC i approx} and \eqref{opt IC M approx} for the imperfect CSI scenarios.

\subsection{Admission Control}
One important application of NOMA is to support massive connectivity in the 5G networks, e.g., IoT use-cases \cite{9311149}. When the number of users and/or their minimum rate demands increases, the parameter $Q^{\rm{min}}_n$ increases leading to tightening the feasible region of the formulated optimization problems characterized by Proposition \ref{proposition feasiblecluster}. As a result, for significantly large number of users and/or their minimum rate demands, the feasible region will be empty, and subsequently, the problems will be infeasible. As such, the network cannot support all of the users simultaneously, thus an admission control policy is necessary to support the maximum possible number of users/transmitted symbols on subchannels. There are few works addressing the admission control for the SC/MC-NOMA systems \cite{7974731,8365765,8301007,8995591}. The globally optimal admission control policy for the general SC/MC-NOMA systems with individual minimum rate demands is still an open problem. To admit more desired symbols in Hybrid-NOMA while reducing the cellular power consumption, one suboptimal solution is to first calculate the power consumption of each user on each subchannel given by \eqref{optimal minpow Scell}. Then, eliminate the subchannel (thus transmitted symbol) for the user which consumes the highest power. After that, recalculate \eqref{optimal minpow Scell} for the updated $\mathcal{K}_n$. The latter steps will be continued until Corollary \ref{corol feasible region} is fulfilled. One future work can be how to incorporate the closed-form of optimal powers in Propositions \ref{proposition feasiblecluster} and \ref{proposition jointcluster} in the admission control policy to admit more users while minimizing the cellular power consumption, or maximizing the admitted users sum-rate, respectively.

\subsection{Reconfigurable Intelligent Surfaces-aided NOMA}
The NOMA technology has been recently integrated with RISs \cite{9475160,9122596}. In RIS-assisted NOMA, the joint power and phase shift allocation is shown to be necessary to achieve the optimal solution of the SR maximization problem \cite{9475160,9122596}. Unfortunately, the optimal joint power and phase shift optimization is intractable, thus many recent works applied the alternate optimization, where we find the optimal powers/phase shifts when the other is given.
In general, for any given phase shifts, the RIS-NOMA system, such as the considered model in \cite{9365004}, can be equivalently transformed to a NOMA system with users equivalent channel gains. In this way, it is straightforward to show that all the analysis of power allocation for the SR/EE maximization problems of the pure SC/MC-NOMA system are also valid for an RIS-NOMA system with the given phase shifts, thus the users corresponding equivalent channel gains. For example, the closed-form of optimal powers for the RIS-assisted NOMA system in \cite{9365004} can be obtained by using Proposition \ref{proposition jointcluster} with $N=1$. From \eqref{opt IC i approx} and \eqref{opt IC M approx}, it can be concluded that in the high-SINR regions of an RIS-NOMA system, the optimal powers are insensitive to the equivalent channel gains, thus phase shifts. Therefore, we expect that the alternate optimization approaches a near-optimal solution with a fast convergence speed in the high-SINR regions of an RIS-NOMA system. The extension of our analysis to an RIS-assisted MC-NOMA system can be considered as a future work.

\subsection{Long-Term Resource Allocation}
Similar to most of the related works, we assume a dynamic resource allocation framework, where the allocated powers to the users will be readopted every time slot based on the arrival set of active users, and instantaneous CSI. It is shown that the short-term designs may lead to inferior system performance in a long-term perspective \cite{7982783}. There are a number of works that addressed the long-term resource allocation optimization in NOMA, e.g., \cite{7982783,7845657,8318569}. In \cite{7982783}, the authors developed the well-known Lyapunov optimization framework to convert the long-term sum-rate maximization problem of SC-NOMA with long-term
average and short-term peak power constraints, and per-user maximum rate constraints into a series of online "weighted-sum-rate minus weighted-total power consumption" maximization problem in each time slot. The latter rusting problem can be classified as the power allocation problem for SC-NOMA with proportional fairness.
Although there has been some efforts in \cite{7982783} to further reduce the searching space of optimal power allocation, the closed-form expression of optimal power allocation for the long-term optimization framework in \cite{7982783} is still an open problem. The analysis will be more complicated if we consider the Hybrid-NOMA scheme with per-user/symbol minimum rate constraints, and optimal inter-cluster power allocation, which is still an open problem, and can be considered as a future work.

In \cite{7845657}, the long-term optimization is addressed by properly choosing the users weights in the proportional fairness scheme. In particular, the proportional fairness scheduler keeps track of the average rate of each user in the past time slots with limited length, and reflect these average rates to the users weights. A similar framework can also be applied to our proposed mixed weighted SR/weighted minimum rate fairness scheme in Subsection \ref{subsection PFMRF}, where the fairness parameters $\Lambda_{k,n} \geq 1,~\forall n \in \mathcal{N},~k \in \mathcal{K}_n\setminus\{\Phi_n\}$, and $\omega_{\Phi_n} \geq 1,~\forall n \in \mathcal{N}$, are chosen in \eqref{MSWMR problem} based on the average users rate in the past time slots, which can be considered as a future work.

\section{Simulation Results}\label{Section simulation}
In this section, we evaluate the performance of SC-NOMA, FDMA-NOMA (FD-NOMA) with different $U^{\rm{max}}$, and FDMA for different performance metrics as outage probability, BSs minimum power consumption to satisfy users minimum rate demands, maximum users SR, and maximum system EE. To reflect the randomness impact, we apply the Monte-Carlo simulations \cite{7587811,8422362,9044770,7557079,8114362,7982784,7523951,9276828,8119791,8448840,9032198} by averaging over $50,000$ channel realizations.
The outage probability is calculated by dividing the number of infeasible solutions determined according to Corollary \ref{corol feasible region}, by total number of channel realizations. According to Proposition \ref{proposition feasiblecluster}, the minimum BS's power consumption can be obtained by $P_{\rm{min}}=\sum\limits_{n \in \mathcal{N}} Q^{\rm{min}}_n$. All the algorithms in Table \ref{table complexity SR} can globally solve the SR maximization problem, however with different computational complexities. For our simulations, we select Alg. \ref{Alg bisection} with the lowest complexity compared to the others. Moreover, all the mentioned algorithms in Table \ref{table complexity EE} can optimally solve the EE maximization problem with different computational complexities. For our simulations, we select Alg. \ref{Alg dinkelbach} with inner Alg. \ref{Alg mixedWFSub} which has the lowest complexity compared to the others. Since SC-NOMA and FDMA are special cases of FDMA-NOMA, our selected algorithms are modified to optimally solve these problems.
The simulation settings are shown in Table \ref{Table parameters}.
\begin{table}[tp]
	\centering
	\caption{System Parameters}
	\begin{adjustbox}{width=\columnwidth,center}
		\begin{tabular}{c c}
			\hline
			\textbf{Parameter} & \textbf{Value} \\
			\hline
			BS maximum transmit power $(P^{\rm{max}})$ & 46 dBm \\
			Circuit power consumption $(P_{\rm{C}})$ & 30 dBm \\
			Coverage of BS & Circular with radii of $500$ m  \\
			Wireless bandwidth $(W)$ & 5 MHz \\
			Number of users $(K)$ & $\{5,10,15,\dots,60\}$  \\
			User distribution model & Uniform distribution \\
			$U^{\rm{max}}$ in NOMA  & $\{2,4,6\}$ \\
			Minimum distance of users to BS & 20 m  \\
			Distance-depended path loss & $128.1 + 37.6 \log_{10} (d)$ dB, where $d$ is in Km  \\
			Lognormal shadowing standard deviation & $8$ dB  \\
			Small-scale fading & Rayleigh flat fading \\
			AWGN power density & -174 dBm/Hz \\
			Minimum rate demand of each user $(R^{\rm{min}}_{k})$  & $\{0.25,0.5,0.75,1,\dots,5\}$ Mbps \\
			\hline
		\end{tabular}
		\label{Table parameters}
	\end{adjustbox}
\end{table}
Without loss of generality, we set $P^{\rm{mask}}_n=P^{\rm{max}},~\forall n \in \mathcal{N}$.
In our simulations, we apply a fast suboptimal user clustering method\footnote{Since our globally optimal power allocation algorithms are valid for any given cluster sets, the existing suboptimal user clustering algorithms, such as heuristic methods in \cite{7557079,8114362,9276828,8119791,8448840}, and  matching-based algorithms in \cite{7982784,7523951} can also be applied.} for the flat fading channels of FD-NOMA presented in Alg. \ref{Alg cluster}. 
In this method, we first obtain $N=\lceil K/U^{\rm{max}}\rceil$ according to Proposition \ref{proposition optgroup}. The ranking vector $\mathbf{R}=[R_k],~\forall k \in \mathcal{K}$, is the vector of the ranking of users CNR, in which $R_k \in \{1,\dots,K\},~\forall k \in \mathcal{K}$, such that $R_k > R_{k'}$ if $h^n_{k} > h^n_{k'}$.
\begin{algorithm}[tp]
	\caption{Suboptimal User Clustering for FD-NOMA.} \label{Alg cluster}
	\begin{algorithmic}[1]
		\STATE Compute the number of clusters as $N=\lceil K/U^{\rm{max}}\rceil$.
		\STATE Initialize $\rho^n_{k}=0,~\forall n \in \mathcal{N},~k \in \mathcal{K}_n$, $n=0$, and ranking vector $\mathbf{R}=[R_k],~\forall k \in \mathcal{K}$.
		\WHILE {$\norm{\mathbf{R}} > 0$}
		\STATE Find $k^* = \argmax\limits_{k \in \mathcal{K}}~\mathbf{R}$.
		\\
		\STATE Set $n:=n+1$.
		\\
		\STATE\textbf{if}~~$n>N$~~\textbf{then}
		\\
		\STATE~~~~~Set $n=1$.
		\\
		\STATE\textbf{end if}
		\STATE Set $\rho^n_{k^*}=1$, and $R_{k^*}=0$.
		\ENDWHILE
	\end{algorithmic}
\end{algorithm}
In Alg. \ref{Alg cluster}, the first $N$ users with the highest CNRs are assigned to different clusters. The rest of the users with lower decoding orders are distributed over the subchannels based on their CNRs.
The subchannel allocation of FDMA in flat fading channels is straightforward, since any subchannel-to-user allocation is optimal.
The source code of the simulations including a user guide is available in \cite{sourcecode}. In the following, the term '$X$-NOMA' is referred to FD-NOMA with $U^{\rm{max}}=X$.

\subsection{System Outage Probability Performance}
The impact of minimum rate demands and number of users on the system outage probability of different multiple access techniques is shown in Fig. \ref{Figoutage}.
\begin{figure*}
	\centering
	\subfigure[Outage probability vs. users minimum rate demand for $K=10$.]{
		\includegraphics[scale=0.35]{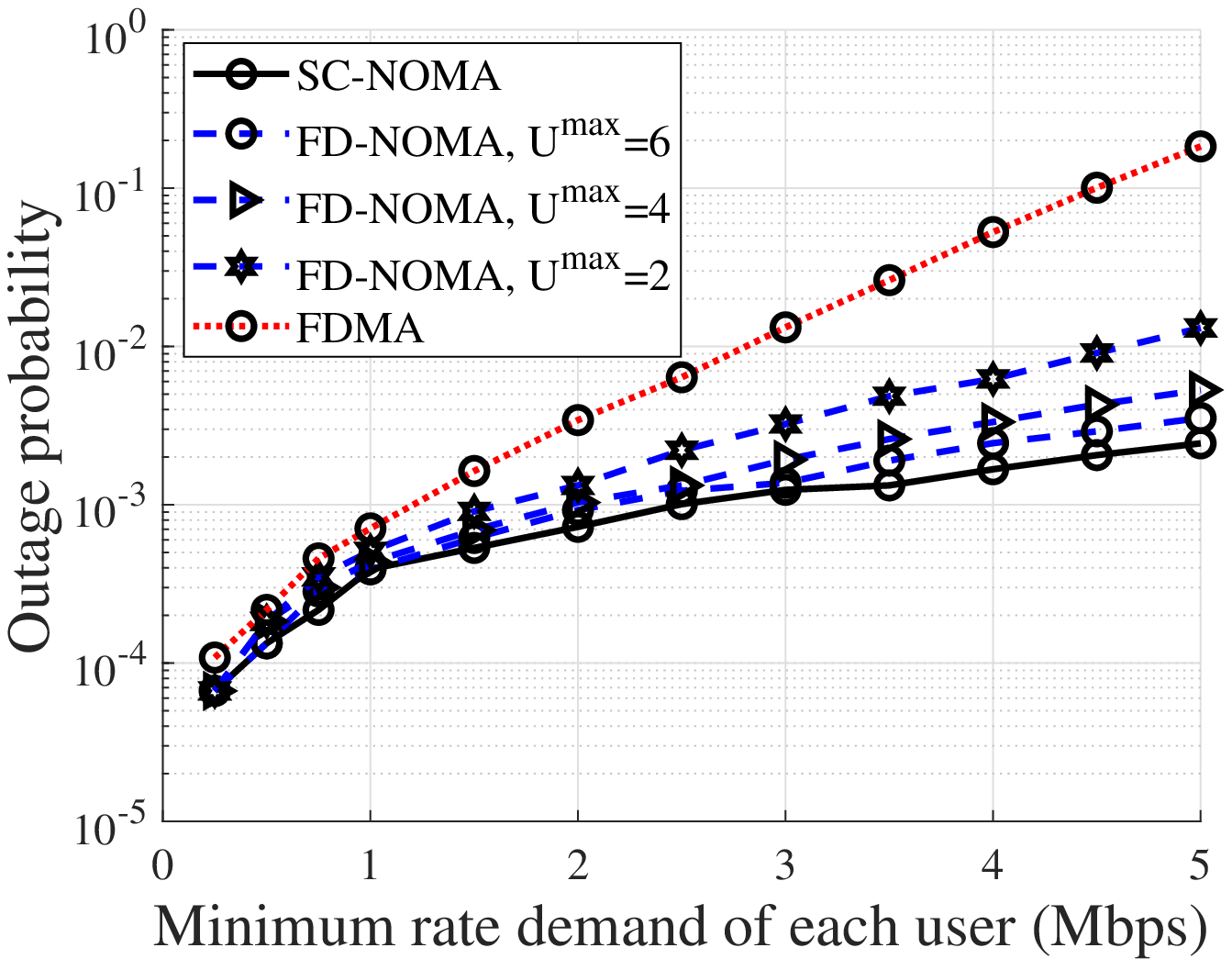}
		\label{Fig_outage_minrate1}
	}\hfill
	\subfigure[Outage probability vs. users minimum rate demand for $K=30$.]{
		\includegraphics[scale=0.35]{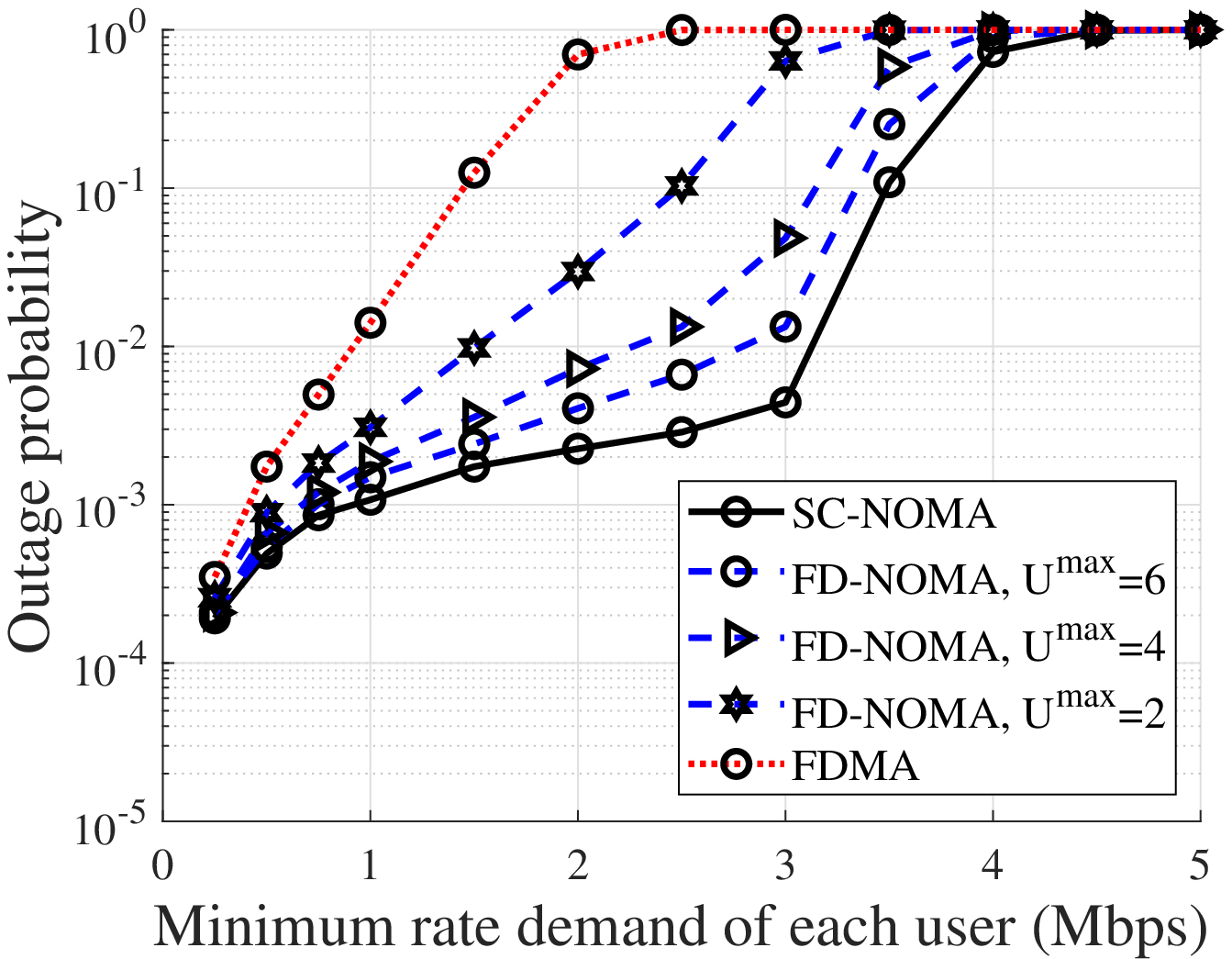}
		\label{Fig_outage_minrate2}
	}\hfill
	\subfigure[Outage probability vs. users minimum rate demand for $K=50$.]{
		\includegraphics[scale=0.35]{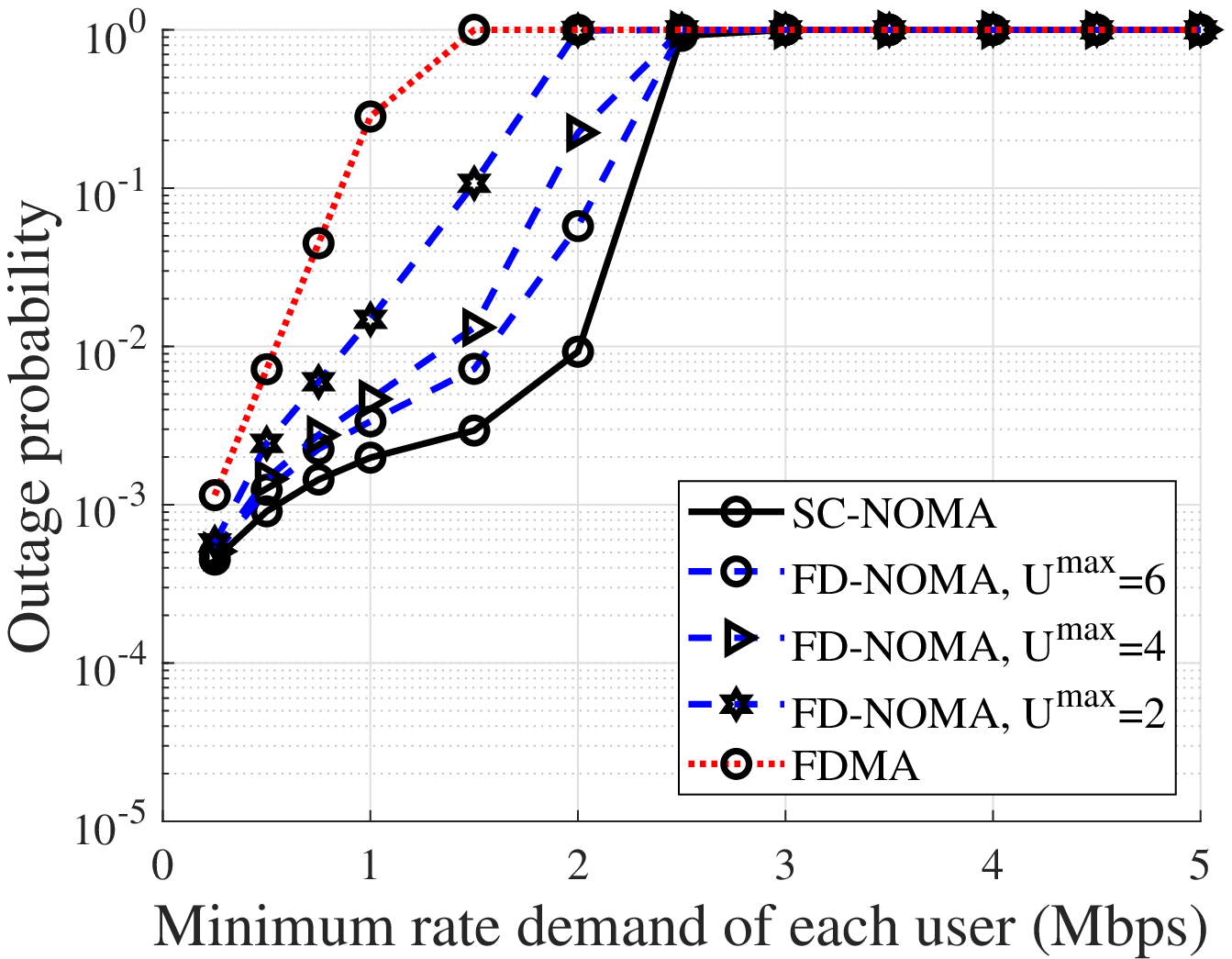}
		\label{Fig_outage_minrate3}
	}\hfill
	\subfigure[Outage probability vs. number of users for $R^{\rm{min}}_{k}=1.5~\text{Mbps},\forall k \in \mathcal{K}$.]{
		\includegraphics[scale=0.35]{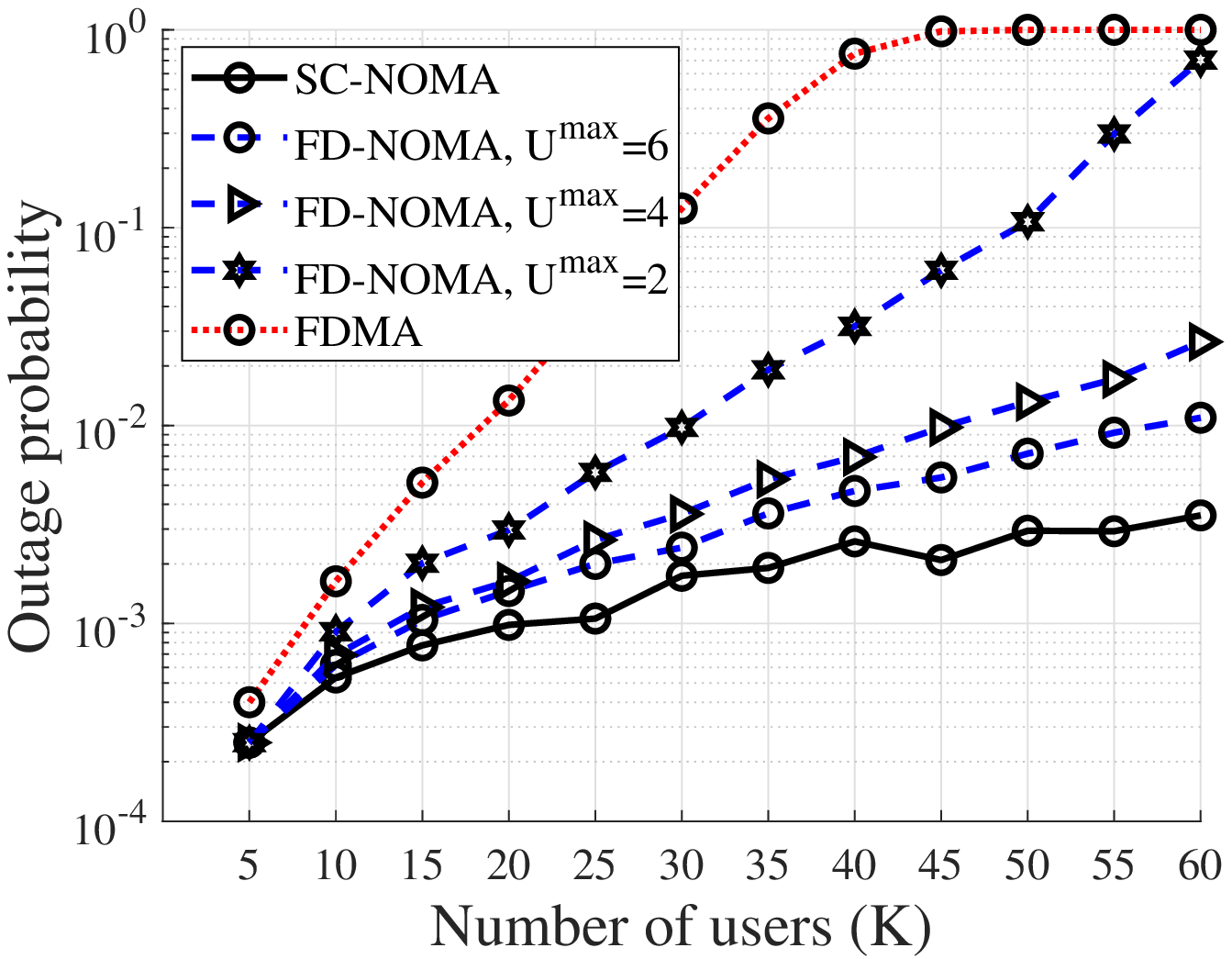}
		\label{Fig_outage_usernum1}
	}\hfill
	\subfigure[Outage probability vs. number of users for $R^{\rm{min}}_{k}=3~\text{Mbps},\forall k \in \mathcal{K}$.]{
		\includegraphics[scale=0.35]{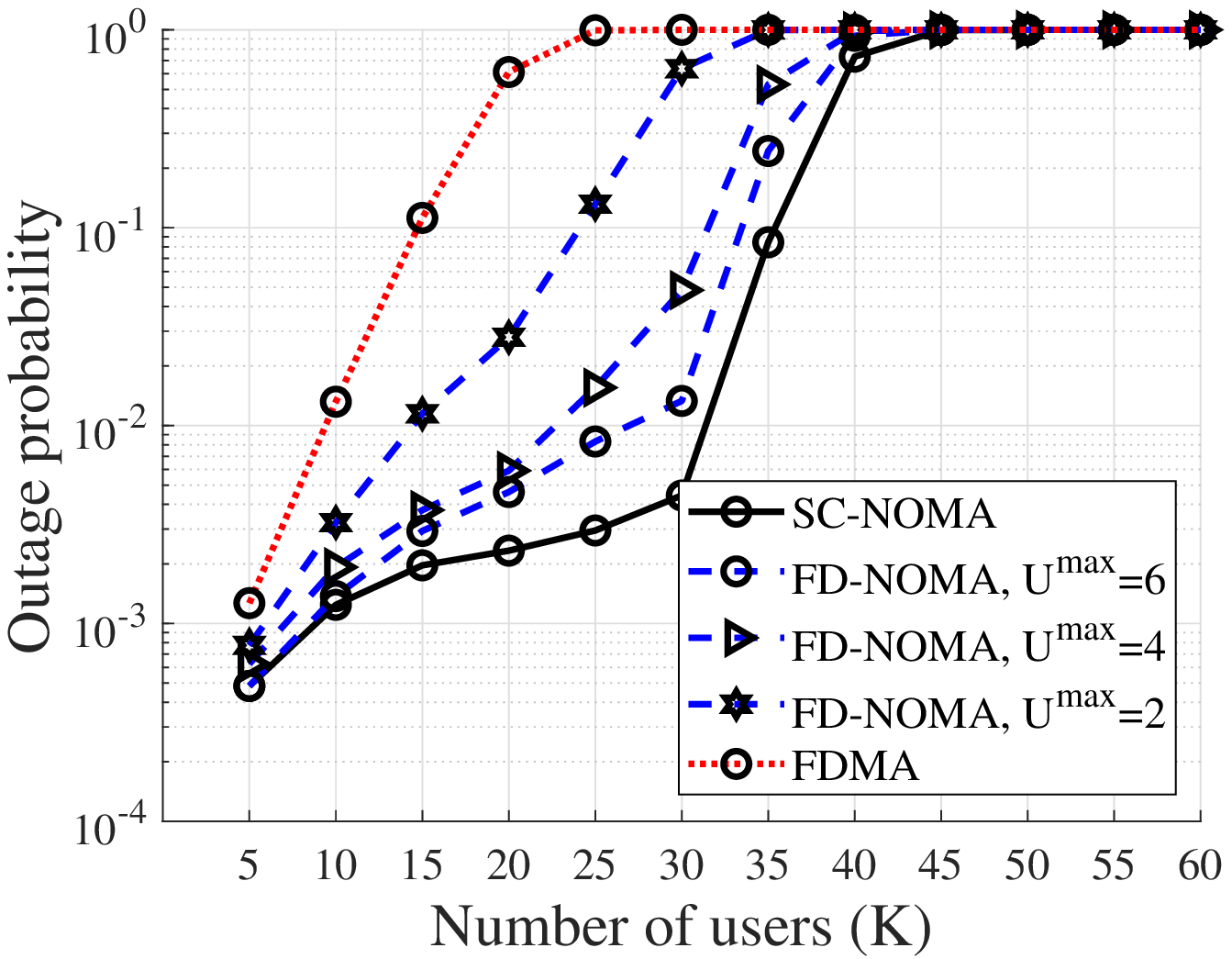}
		\label{Fig_outage_usernum2}
	}\hfill
	\subfigure[Outage probability vs. number of users for $R^{\rm{min}}_{k}=4.5~\text{Mbps},\forall k \in \mathcal{K}$.]{
		\includegraphics[scale=0.35]{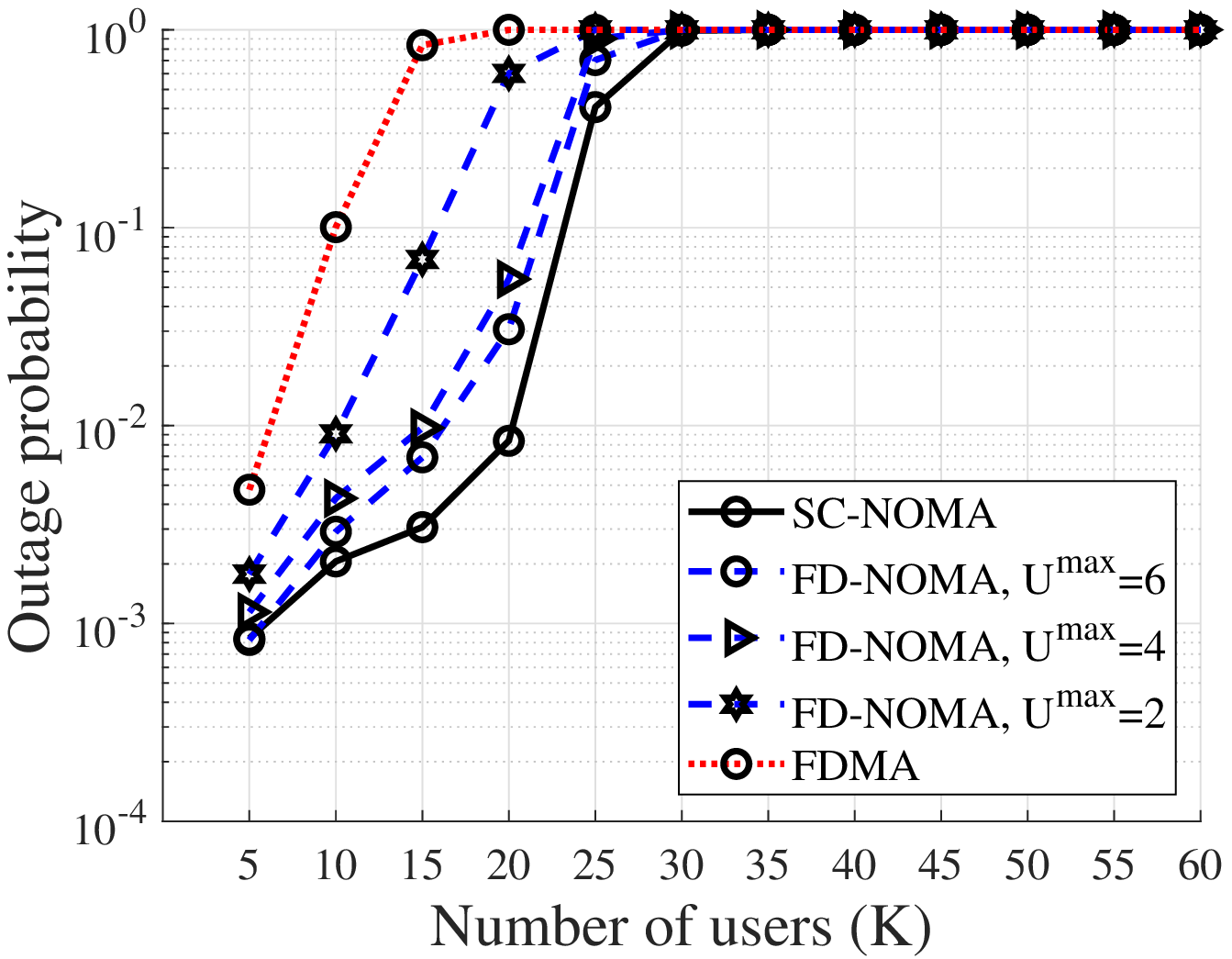}
		\label{Fig_outage_usernum3}
	}\hfill
	\caption
	{Impact of the minimum rate demand and number of users on the outage probability of SC-NOMA, FD-NOMA, and FDMA.}
	\label{Figoutage}
\end{figure*} 
According to \eqref{Qmin func}, $Q^{\rm{min}}_n$ is increasing in $R^{\rm{min}}_{k}$. For quite small $R^{\rm{min}}_{k}$ and/or $K$, the performance gap between different multiple access techniques is low. For larger $R^{\rm{min}}_{k}$ and/or $K$, we observe a significant performance gap between FDMA and $X$-NOMA ($X\geq 2$), and also between $2$-NOMA and $4$-NOMA. Moreover, it can be observed that the performance gap between $4$-NOMA and $6$-NOMA is low. Finally, for quite large $R^{\rm{min}}_{k}$ and/or $K$, the outage probability of all these techniques tends to $1$. In summary, the outage probability follows: $\text{outage}(\text{SC-NOMA}) < \text{outage}(6\text{-NOMA}) \approx \text{outage}(4\text{-NOMA}) < \text{outage}(2\text{-NOMA}) \ll \text{outage}(\text{FDMA})$.

\subsection{Average Minimum BS's Power Consumption Performance}
The impact of minimum rate demands and number of users on average total power consumption of different multiple access techniques is shown in Fig. \ref{FigPM}.
\begin{figure*}
	\centering
	\subfigure[Average total power consumption vs. users minimum rate demand for $K=10$.]{
		\includegraphics[scale=0.35]{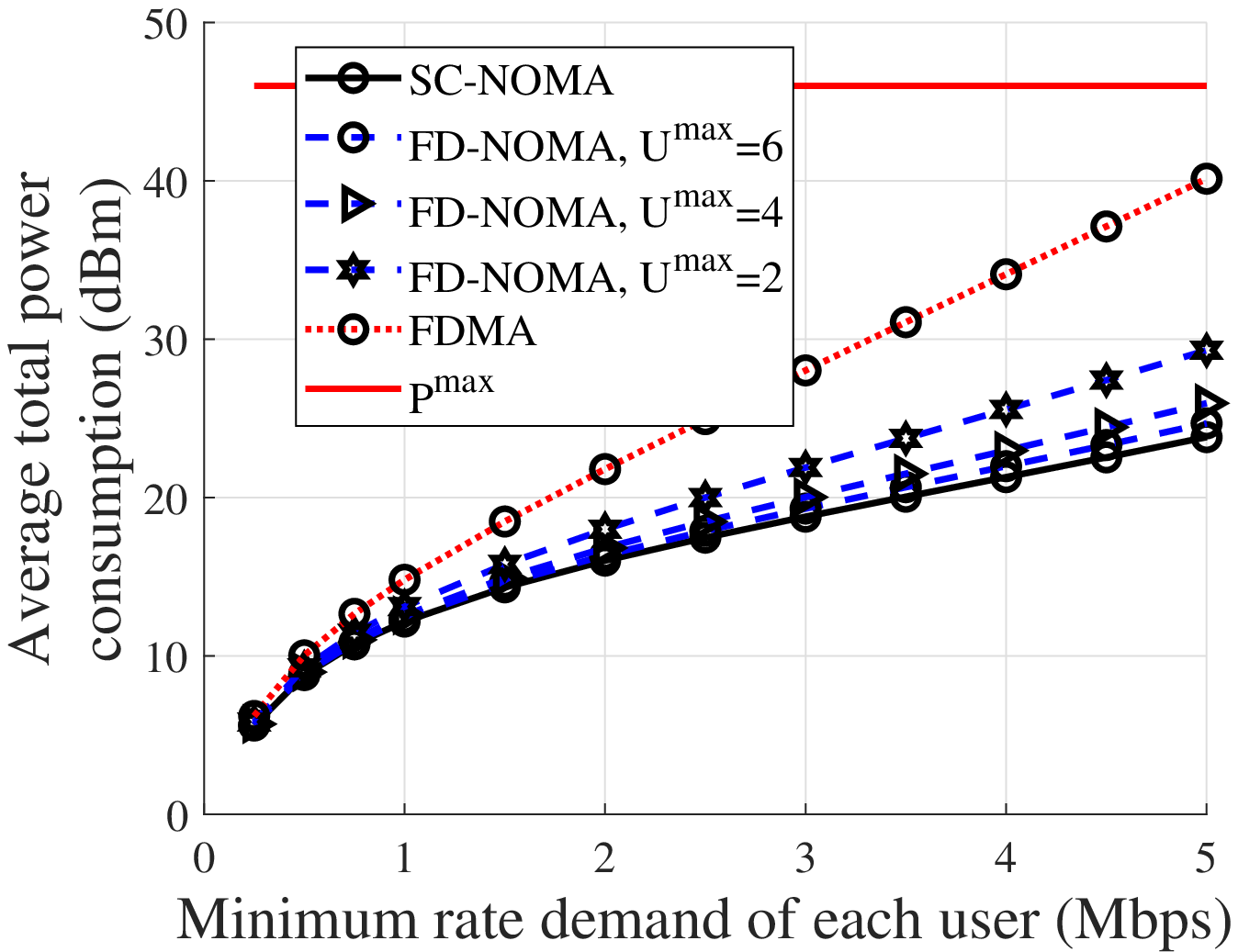}
		\label{Fig_PM_minrate1}
	}\hfill
	\subfigure[Average total power consumption vs. users minimum rate demand for $K=30$.]{
		\includegraphics[scale=0.35]{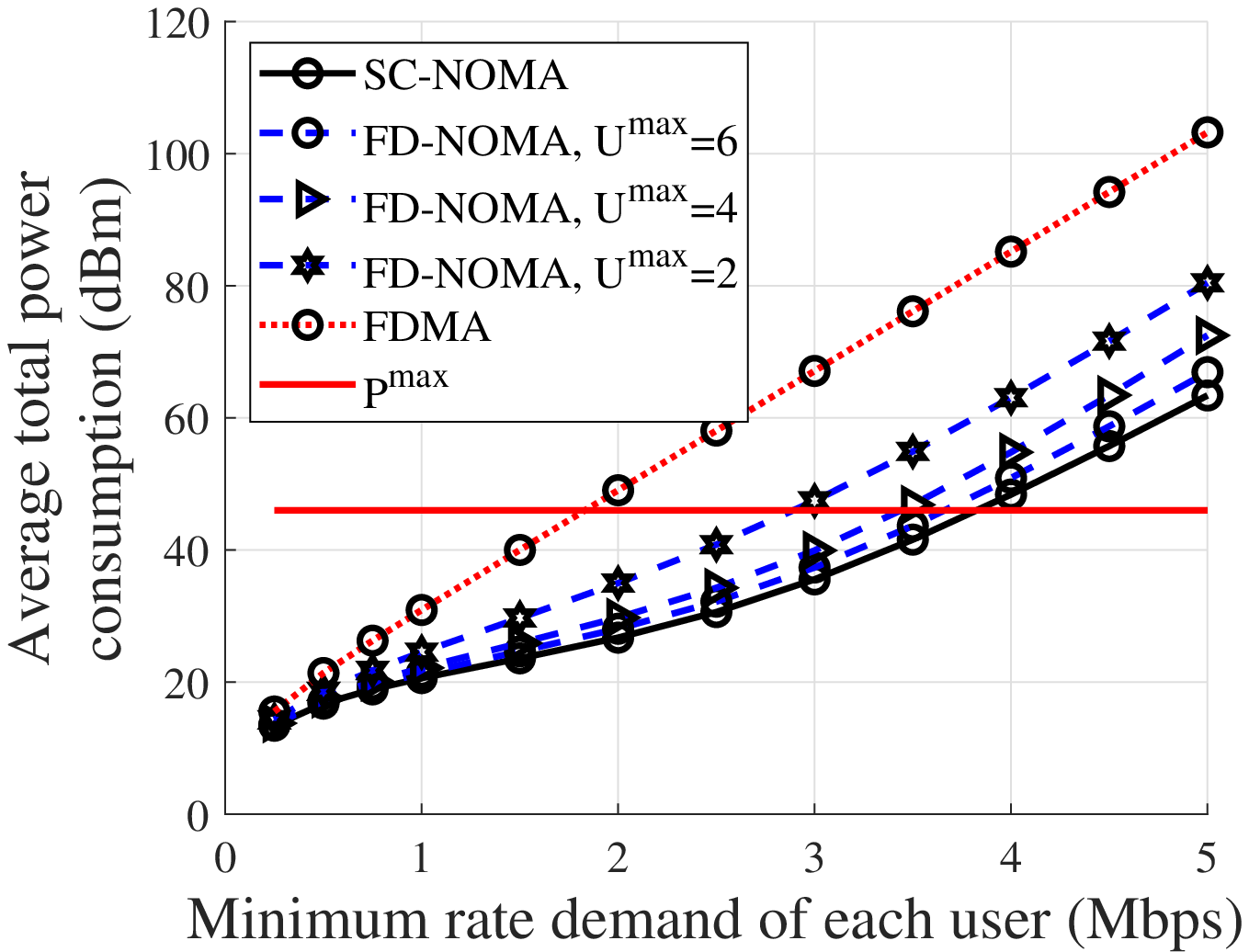}
		\label{Fig_PM_minrate2}
	}\hfill
	\subfigure[Average total power consumption vs. users minimum rate demand for $K=50$.]{
		\includegraphics[scale=0.35]{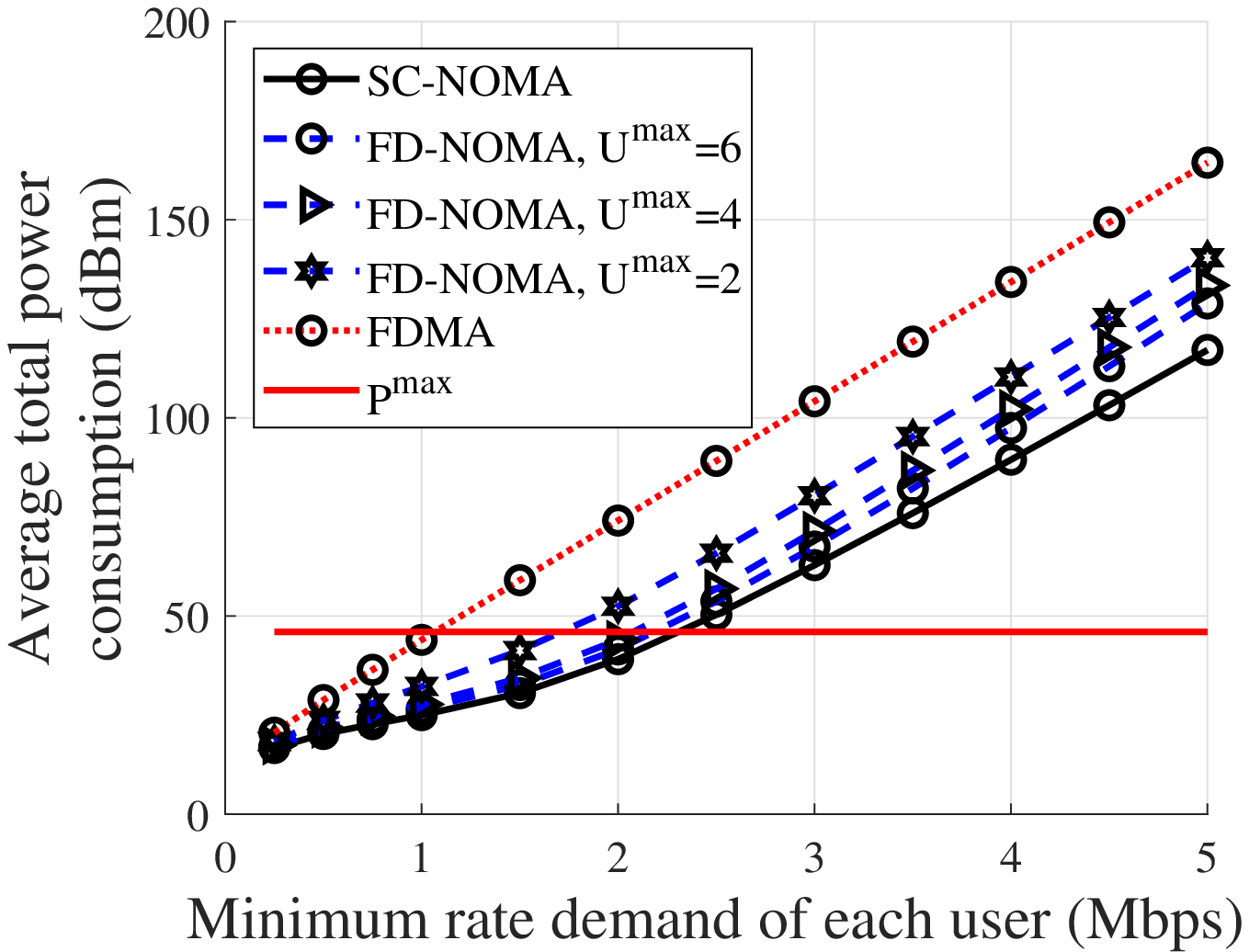}
		\label{Fig_PM_minrate3}
	}\hfill
	\subfigure[Average total power consumption vs. number of users for $R^{\rm{min}}_{k}=1.5~\text{Mbps},\forall k \in \mathcal{K}$.]{
		\includegraphics[scale=0.35]{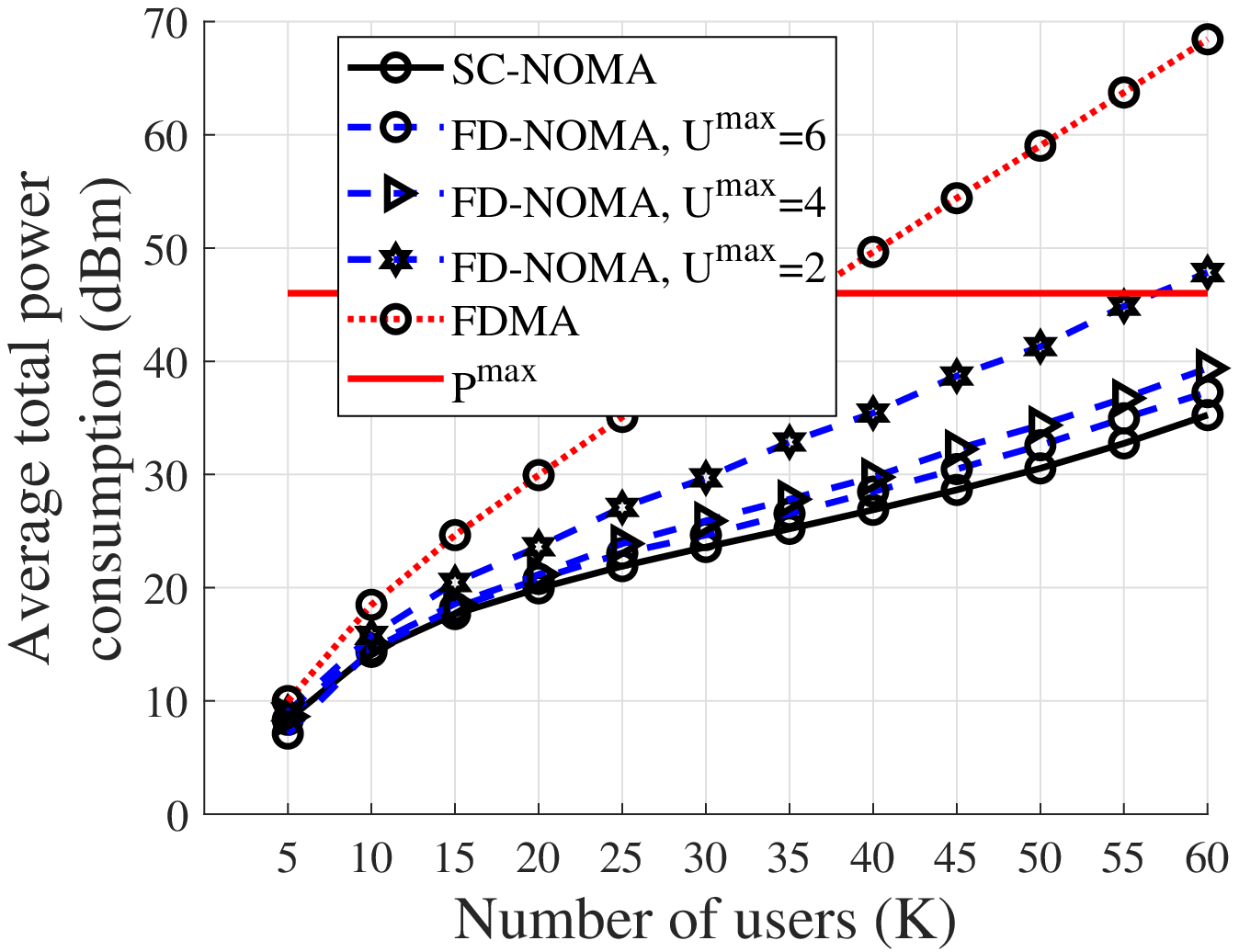}
		\label{Fig_PM_usernum1}
	}\hfill
	\subfigure[Average total power consumption vs. number of users for $R^{\rm{min}}_{k}=3~\text{Mbps},\forall k \in \mathcal{K}$.]{
		\includegraphics[scale=0.35]{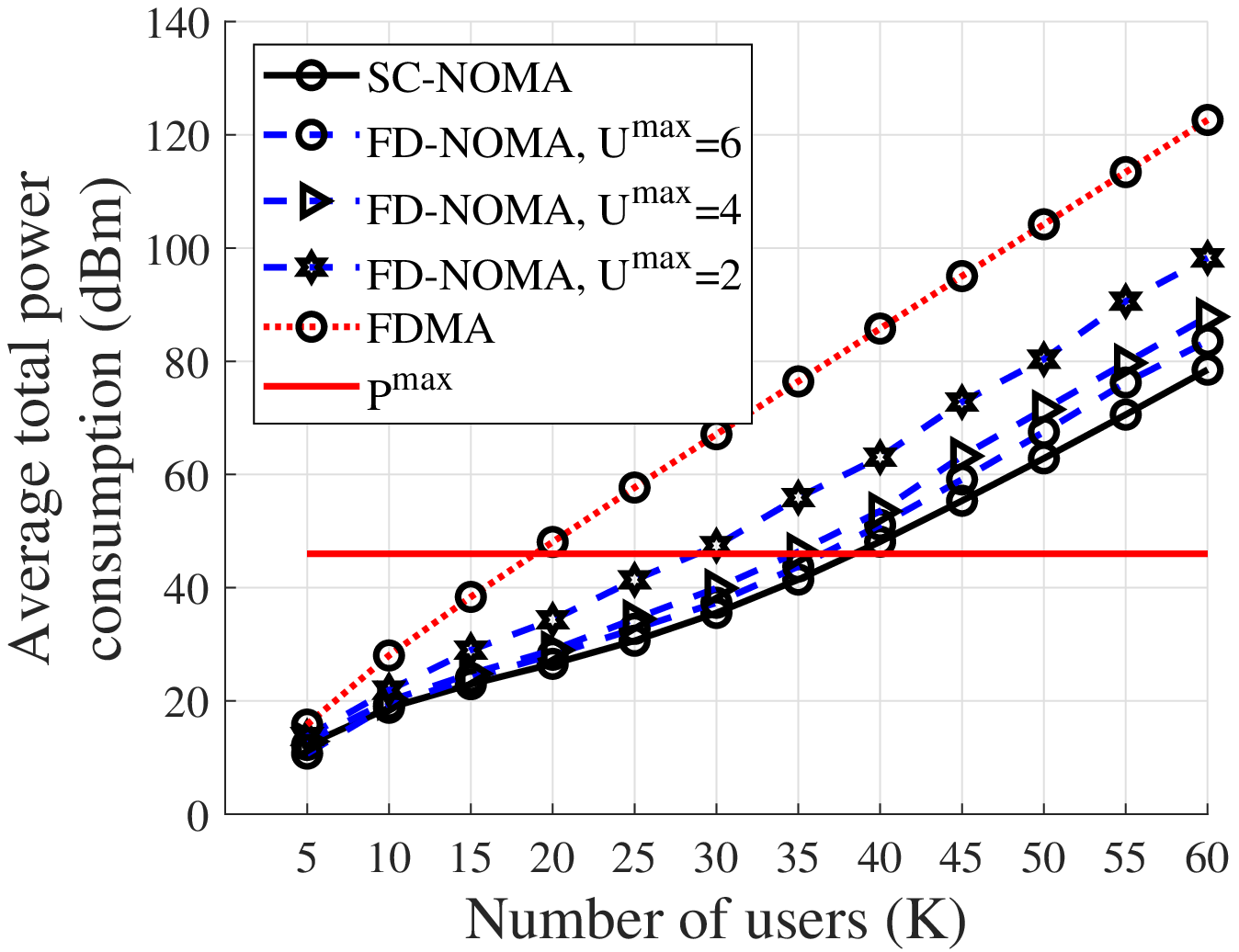}
		\label{Fig_PM_usernum2}
	}\hfill
	\subfigure[Average total power consumption vs. number of users for $R^{\rm{min}}_{k}=4.5~\text{Mbps},\forall k \in \mathcal{K}$.]{
		\includegraphics[scale=0.35]{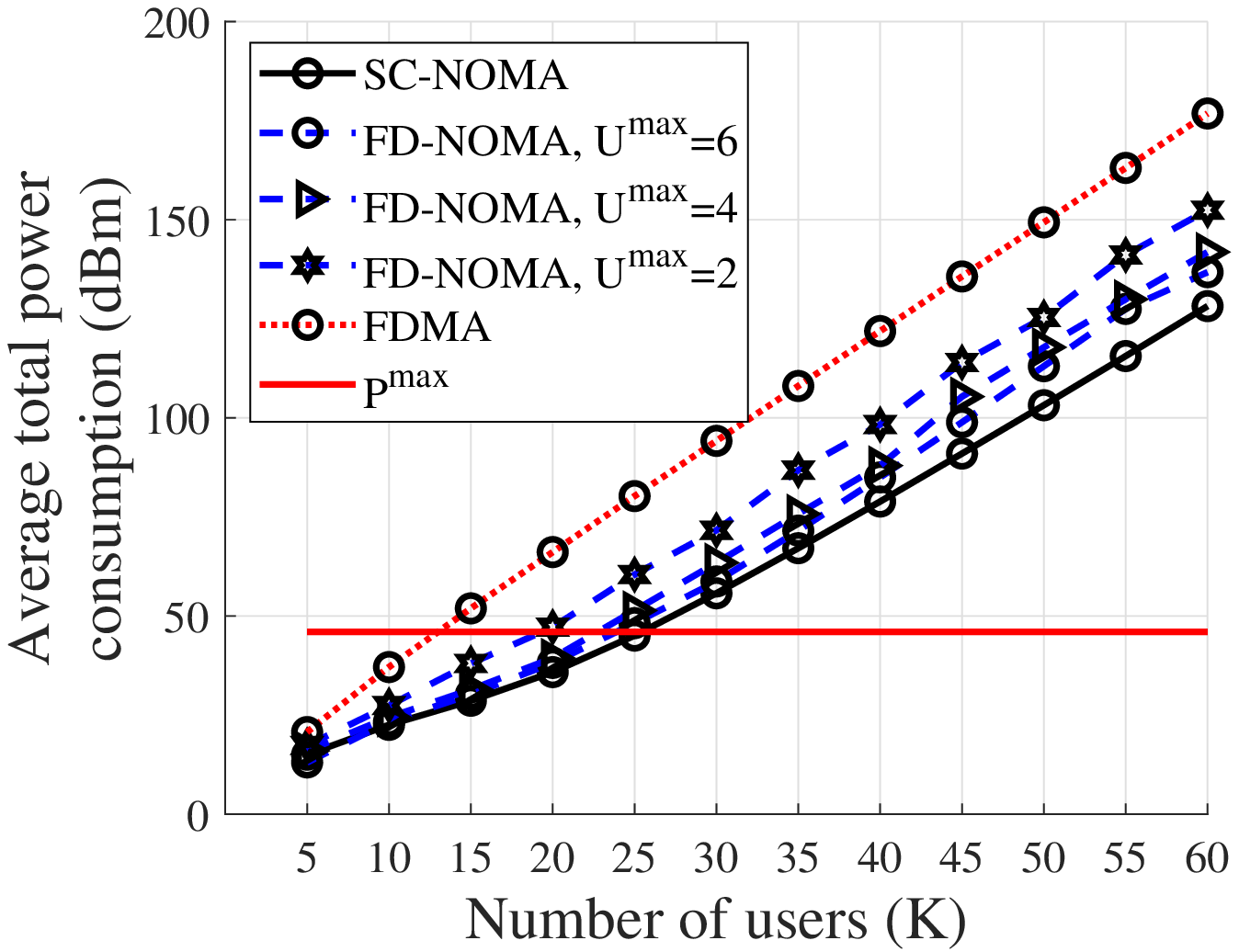}
		\label{Fig_PM_usernum3}
	}\hfill
	\caption
	{Impact of the minimum rate demand and number of users on the average total power consumption of SC-NOMA, FD-NOMA, and FDMA.}
	\label{FigPM}
\end{figure*}
As can be seen, there exists a significant performance gap between FDMA and FD-NOMA for larger $R^{\rm{min}}_{k}$ and/or $K$. However, the performance gap between $X$-NOMA and $(X+1)$-NOMA is highly decreasing for $X \geq 4$. The latter performance gaps are highly increasing in $R^{\rm{min}}_{k}$ and $K$.

\subsection{Average Users Sum-Rate Performance}
The impact of minimum rate demands and number of users on the average SR of different multiple access techniques is shown in Fig. \ref{FigSR}. For the case that outage occurs, the SR is set to zero.
\begin{figure*}
	\centering
	\subfigure[Average sum-rate vs. users minimum rate demand for $K=10$.]{
		\includegraphics[scale=0.35]{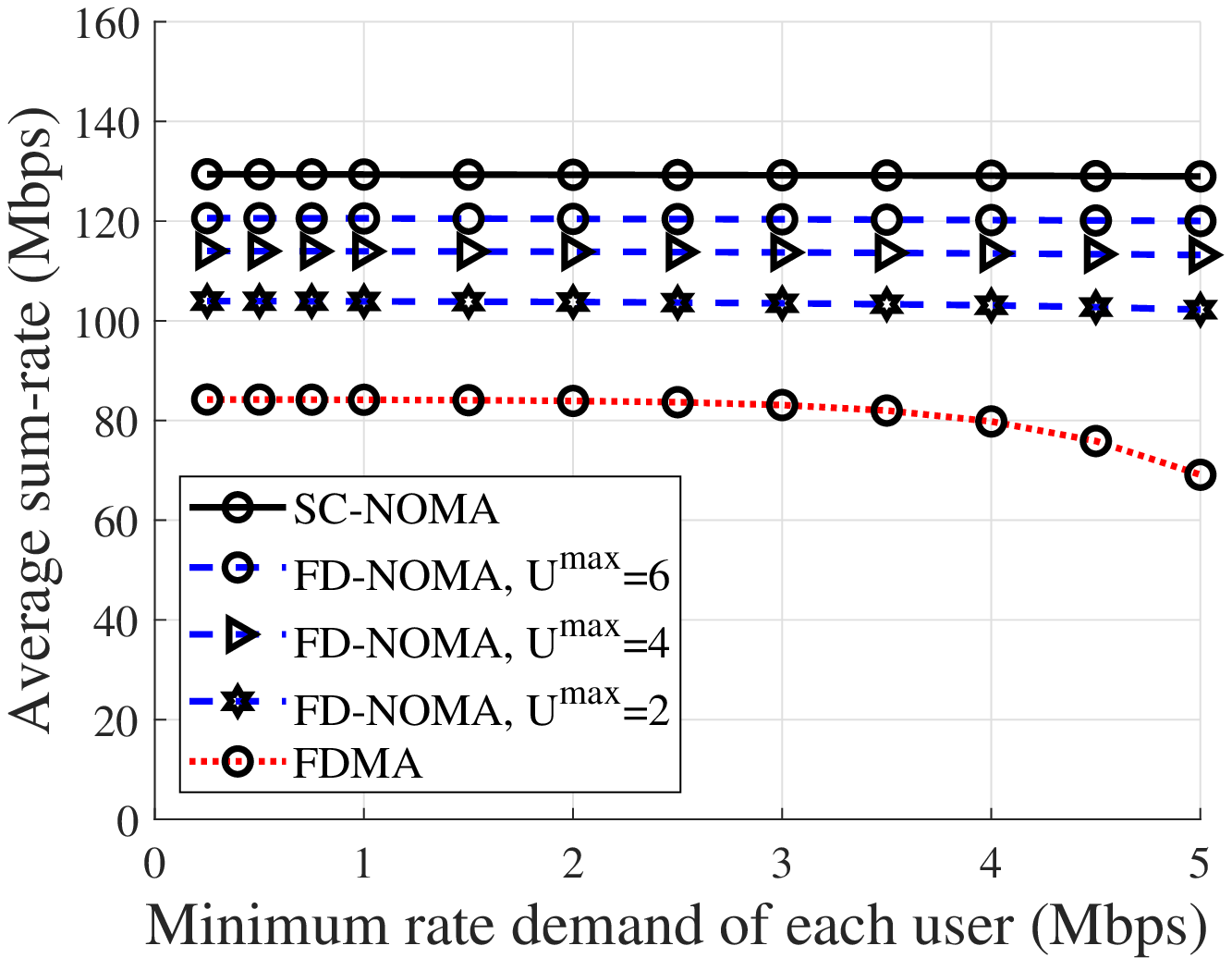}
		\label{Fig_SR_minrate1}
	}\hfill
	\subfigure[Average sum-rate vs. users minimum rate demand for $K=30$.]{
		\includegraphics[scale=0.35]{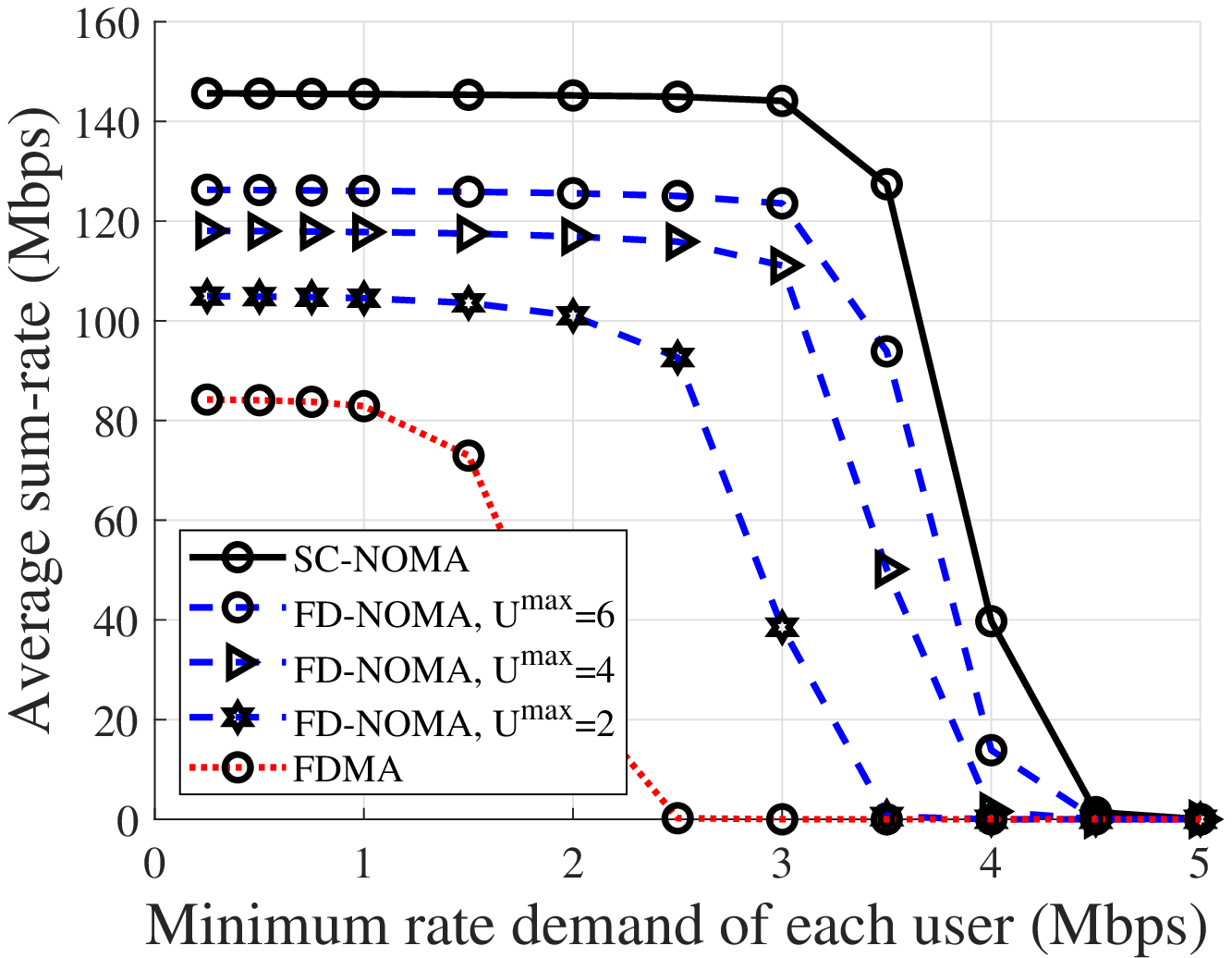}
		\label{Fig_SR_minrate2}
	}\hfill
	\subfigure[Average sum-rate vs. users minimum rate demand for $K=50$.]{
		\includegraphics[scale=0.35]{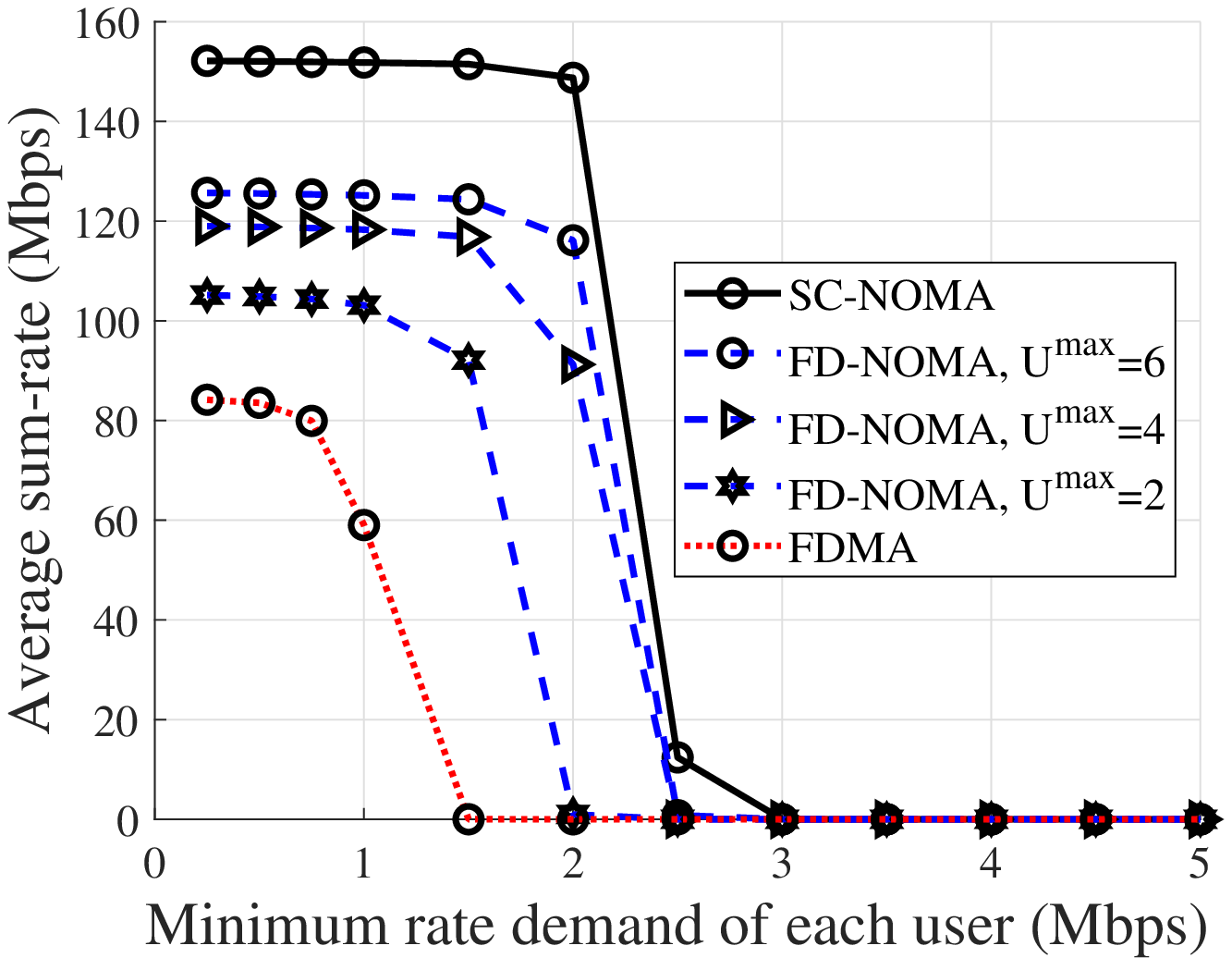}
		\label{Fig_SR_minrate3}
	}\hfill
	\subfigure[Average sum-rate vs. number of users for $R^{\rm{min}}_{k}=1.5~\text{Mbps},\forall k \in \mathcal{K}$.]{
		\includegraphics[scale=0.35]{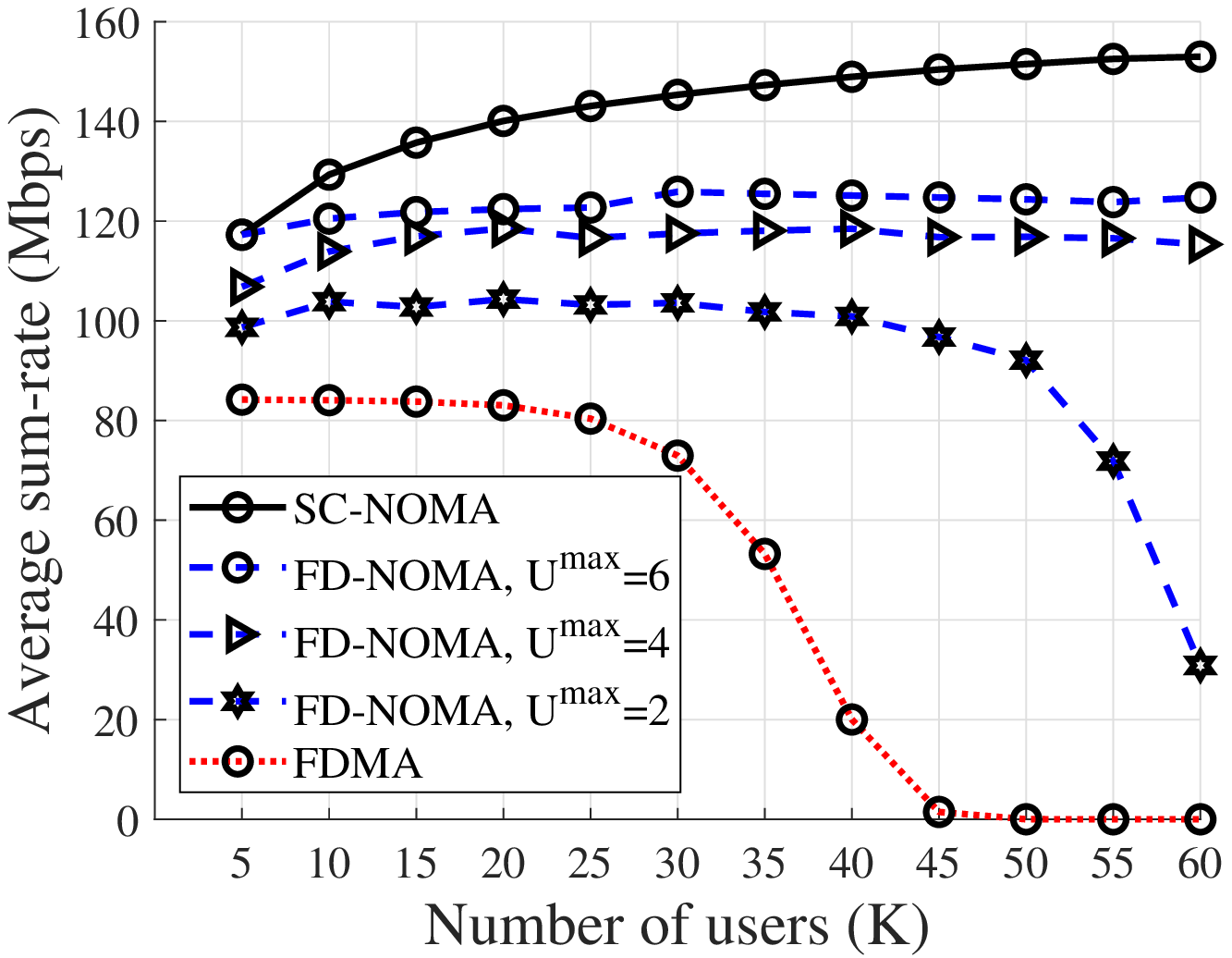}
		\label{Fig_SR_usernum1}
	}\hfill
	\subfigure[Average sum-rate vs. number of users for $R^{\rm{min}}_{k}=3~\text{Mbps},\forall k \in \mathcal{K}$.]{
		\includegraphics[scale=0.35]{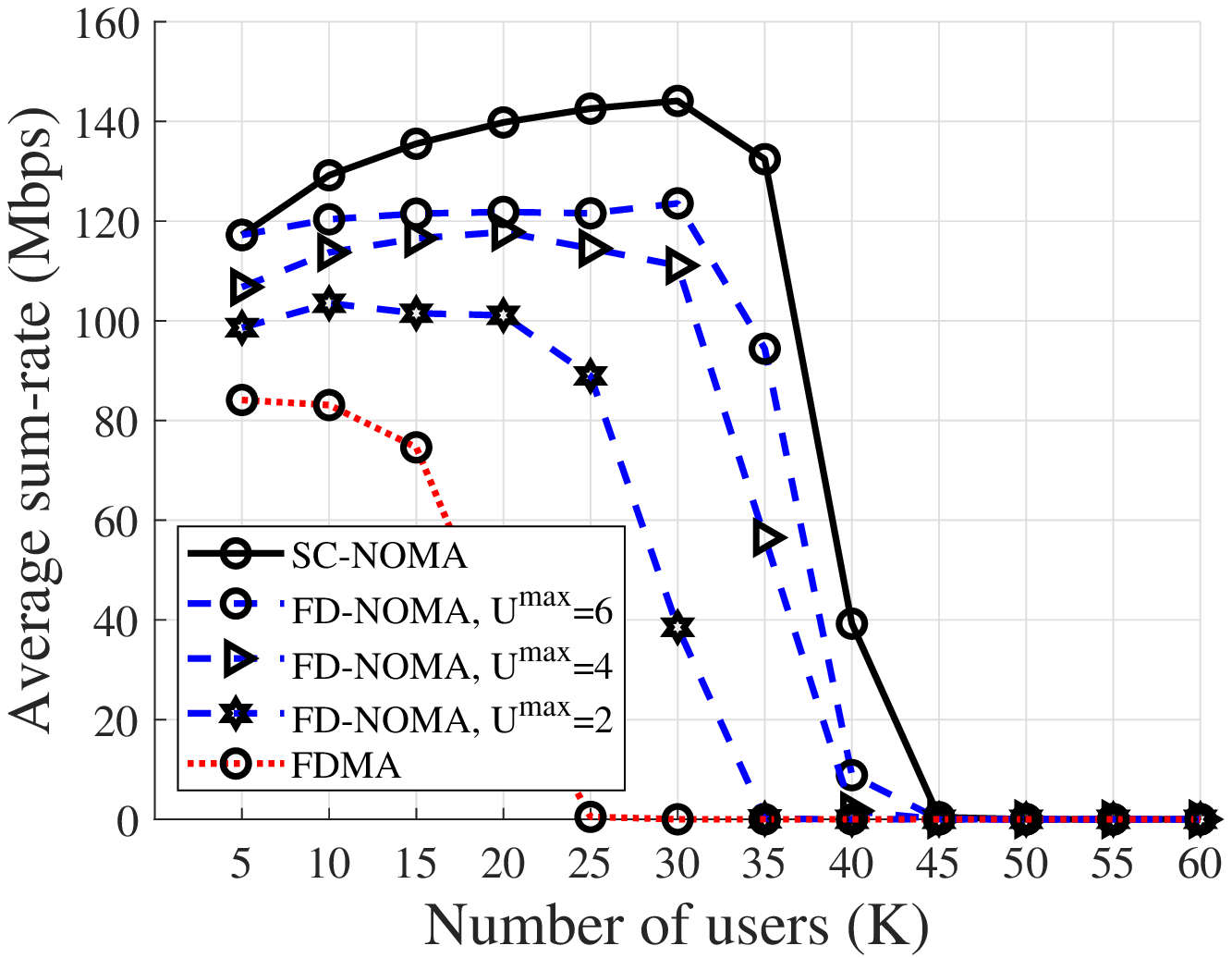}
		\label{Fig_SR_usernum2}
	}\hfill
	\subfigure[Average sum-rate vs. number of users for $R^{\rm{min}}_{k}=4.5~\text{Mbps},\forall k \in \mathcal{K}$.]{
		\includegraphics[scale=0.35]{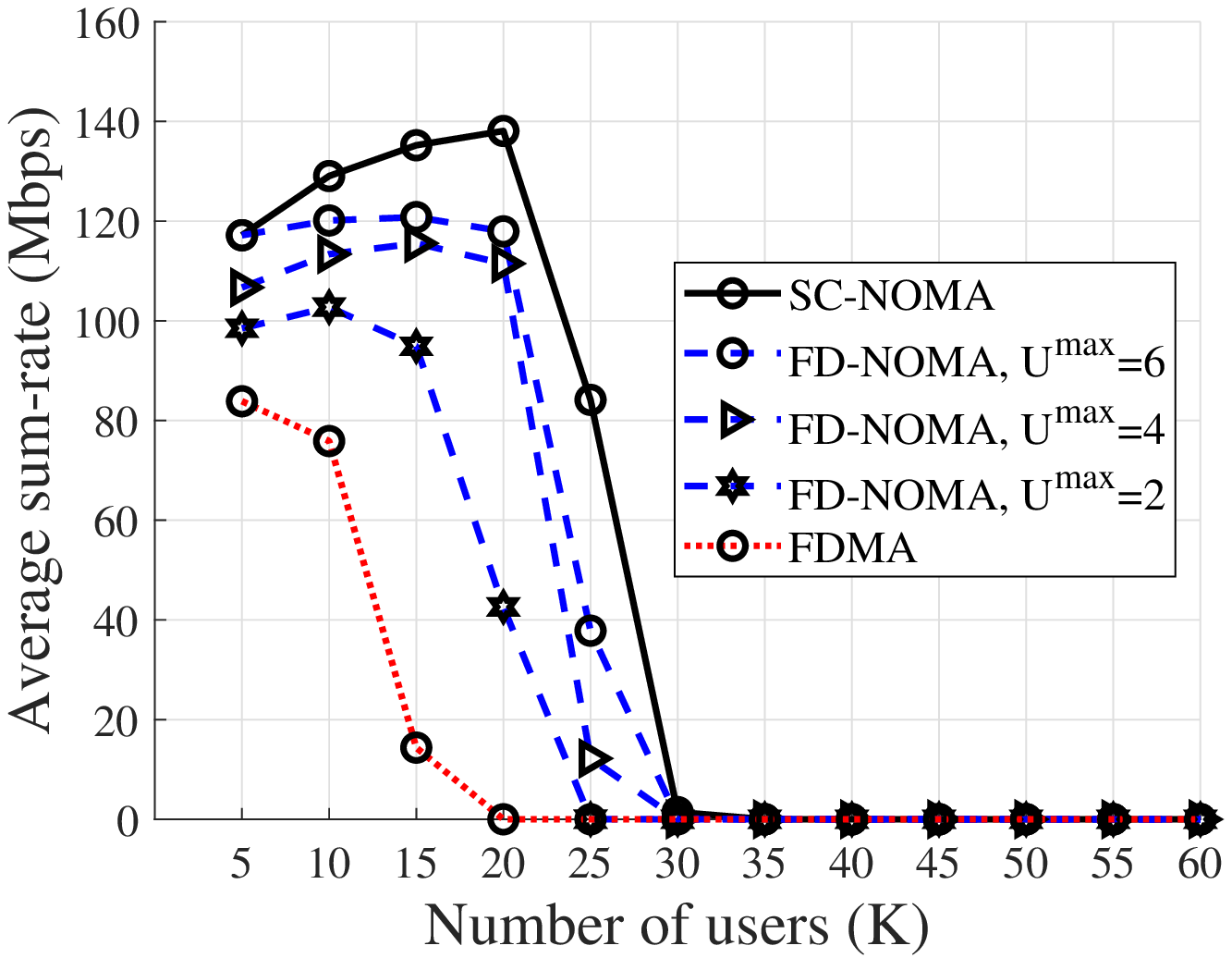}
		\label{Fig_SR_usernum3}
	}\hfill
	\caption
	{Impact of the minimum rate demand and number of users on the average sum-rate of SC-NOMA, FD-NOMA, and FDMA.}
	\label{FigSR}
\end{figure*}
The results in Figs. \ref{Fig_SR_minrate1}-\ref{Fig_SR_minrate3} show that the SR of users is highly insensitive to the minimum rate demands when $R^{\rm{min}}_{k}$ and $K$ are significantly low, specifically for SC-NOMA and FD-NOMA. For significantly high $R^{\rm{min}}_{k}$ and/or $K$, we observe that the average SR decreases, due to increasing the outage probability shown in Fig. \ref{Fig_outage_minrate1}-\ref{Fig_outage_minrate3}. Besides, Figs. \ref{Fig_SR_usernum1}-\ref{Fig_SR_usernum3} show that SC-NOMA well exploits the multiuser diversity, specifically for lower $R^{\rm{min}}_{k}$. In summary, the SR follows: $\text{SR}(\text{SC-NOMA}) > \text{SR}(6\text{-NOMA}) \approx \text{SR}(4\text{-NOMA}) > \text{SR}(2\text{-NOMA}) \gg \text{SR}(\text{FDMA})$.

\subsection{Average System Energy Efficiency Performance}
The impact of minimum rate demands and number of users on the average system EE of different multiple access techniques is shown in Fig. \ref{FigEE}.
\begin{figure*}
	\centering
	\subfigure[Average EE vs. users minimum rate demand for $K=10$.]{
		\includegraphics[scale=0.35]{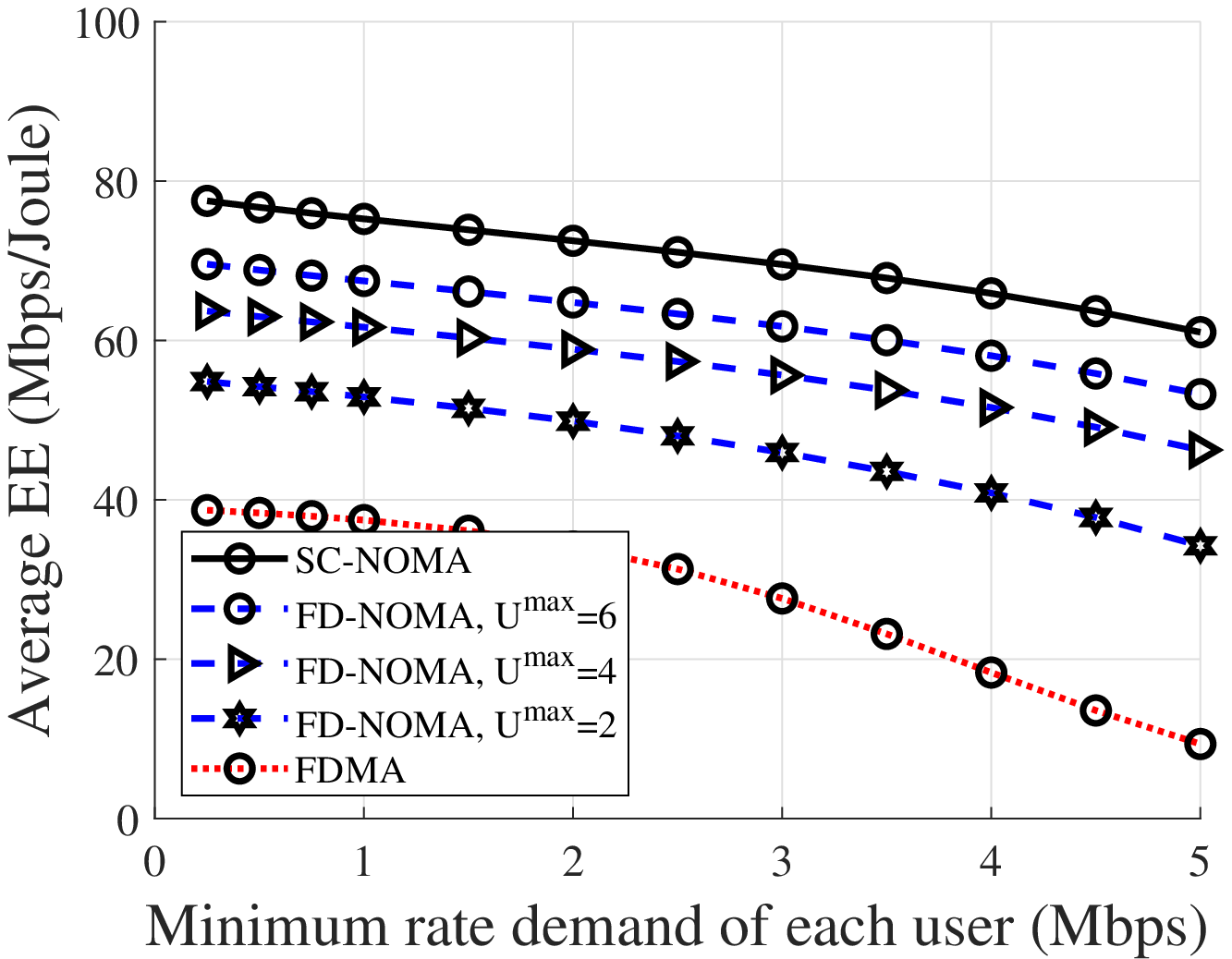}
		\label{Fig_EESYSTEM_minrate1}
	}\hfill
	\subfigure[Average EE vs. users minimum rate demand for $K=30$.]{
		\includegraphics[scale=0.35]{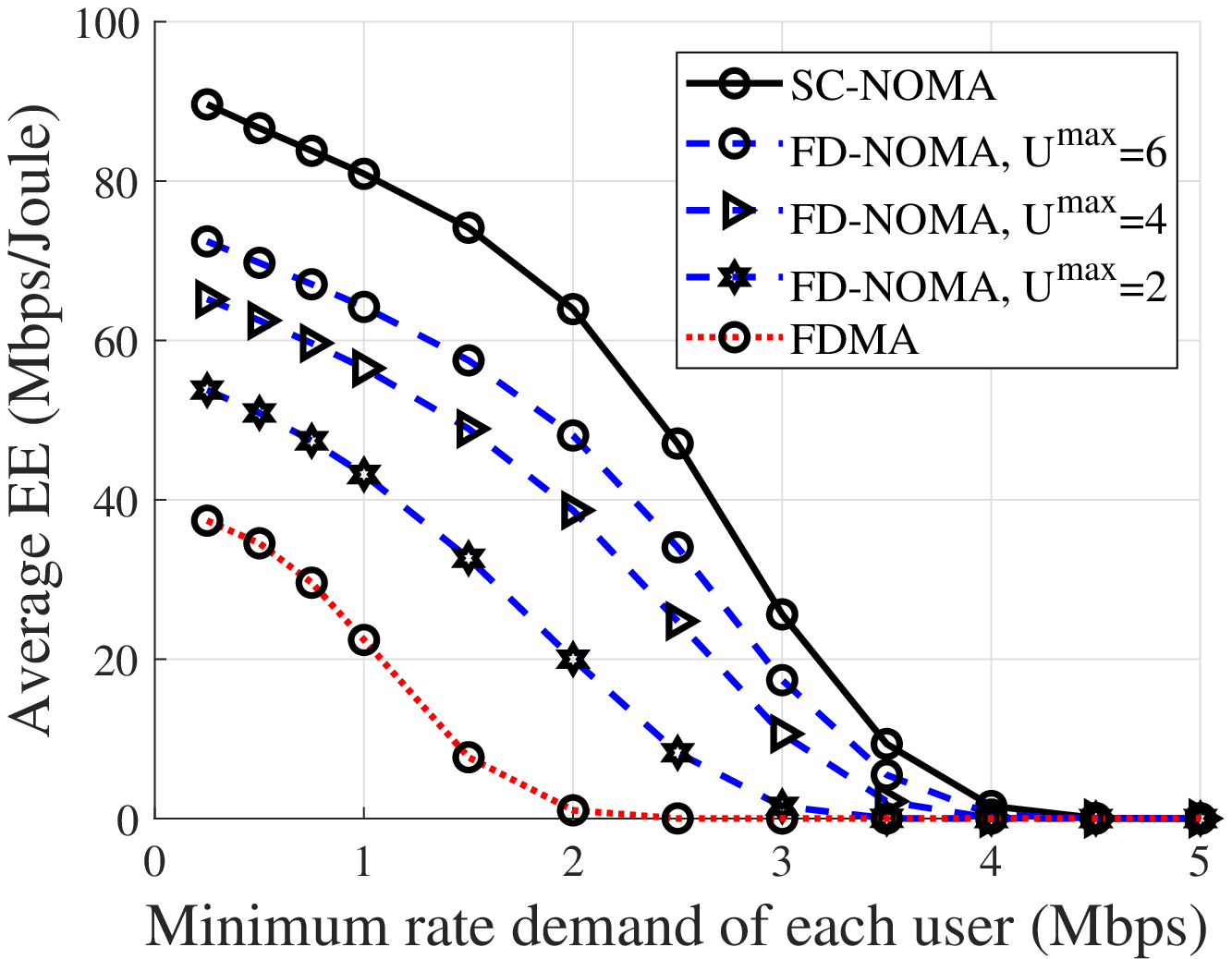}
		\label{Fig_EESYSTEM_minrate2}
	}\hfill
	\subfigure[Average EE vs. users minimum rate demand for $K=50$.]{
		\includegraphics[scale=0.35]{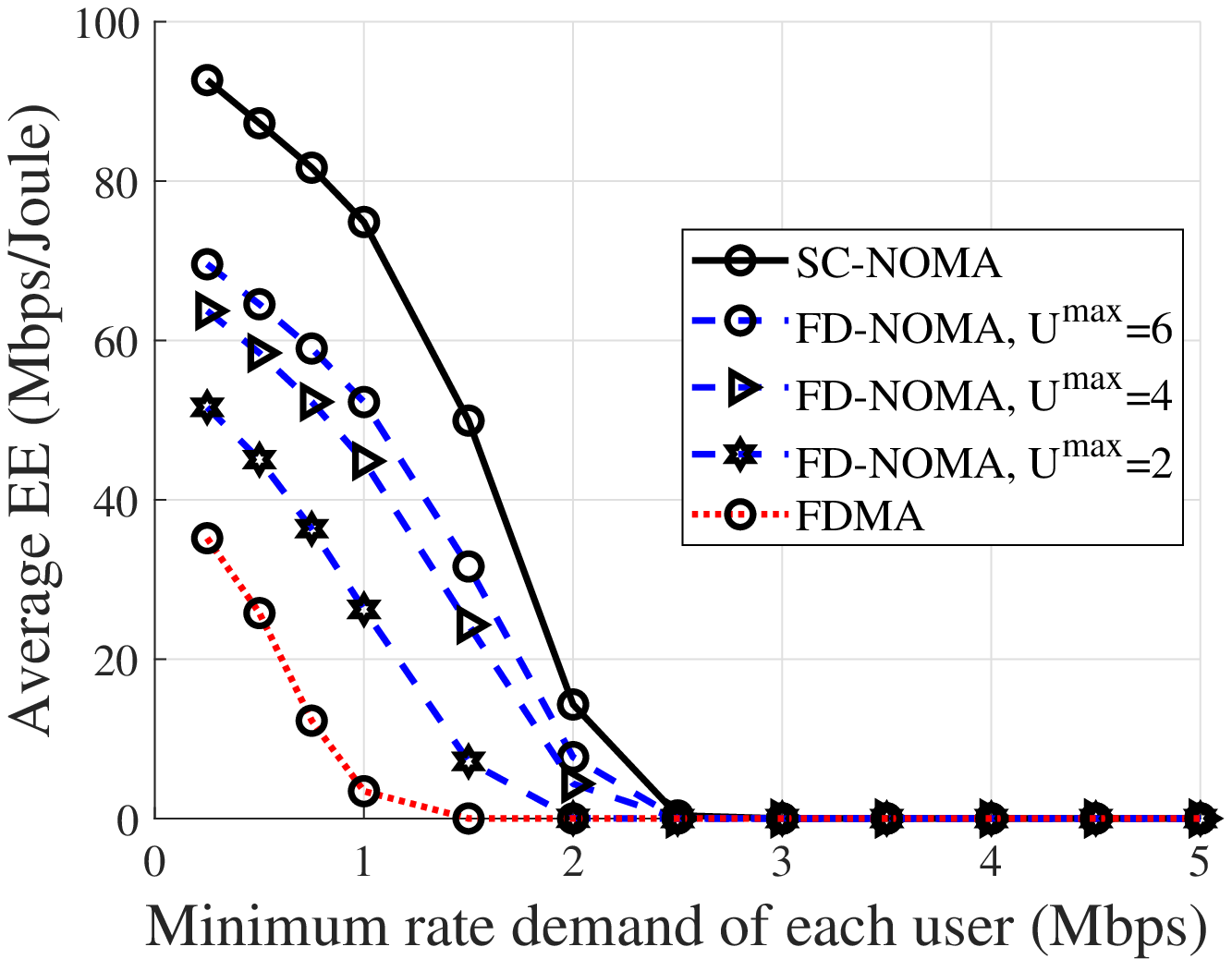}
		\label{Fig_EESYSTEM_minrate3}
	}\hfill
	\subfigure[Average EE vs. number of users for $R^{\rm{min}}_{k}=1.5~\text{Mbps},\forall k \in \mathcal{K}$.]{
		\includegraphics[scale=0.35]{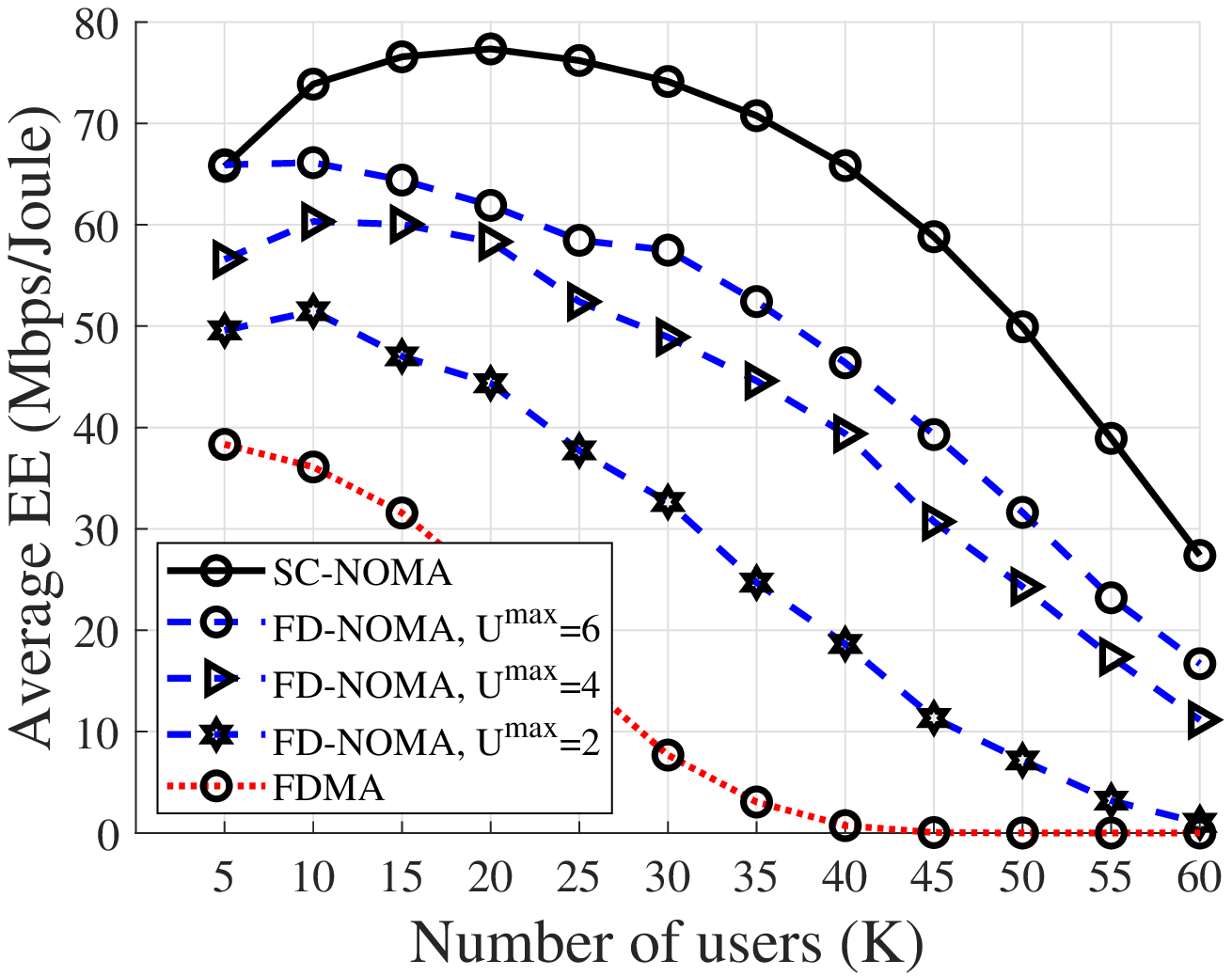}
		\label{Fig_EESYSTEM_usernum1}
	}\hfill
	\subfigure[Average EE vs. number of users for $R^{\rm{min}}_{k}=3~\text{Mbps},\forall k \in \mathcal{K}$.]{
		\includegraphics[scale=0.35]{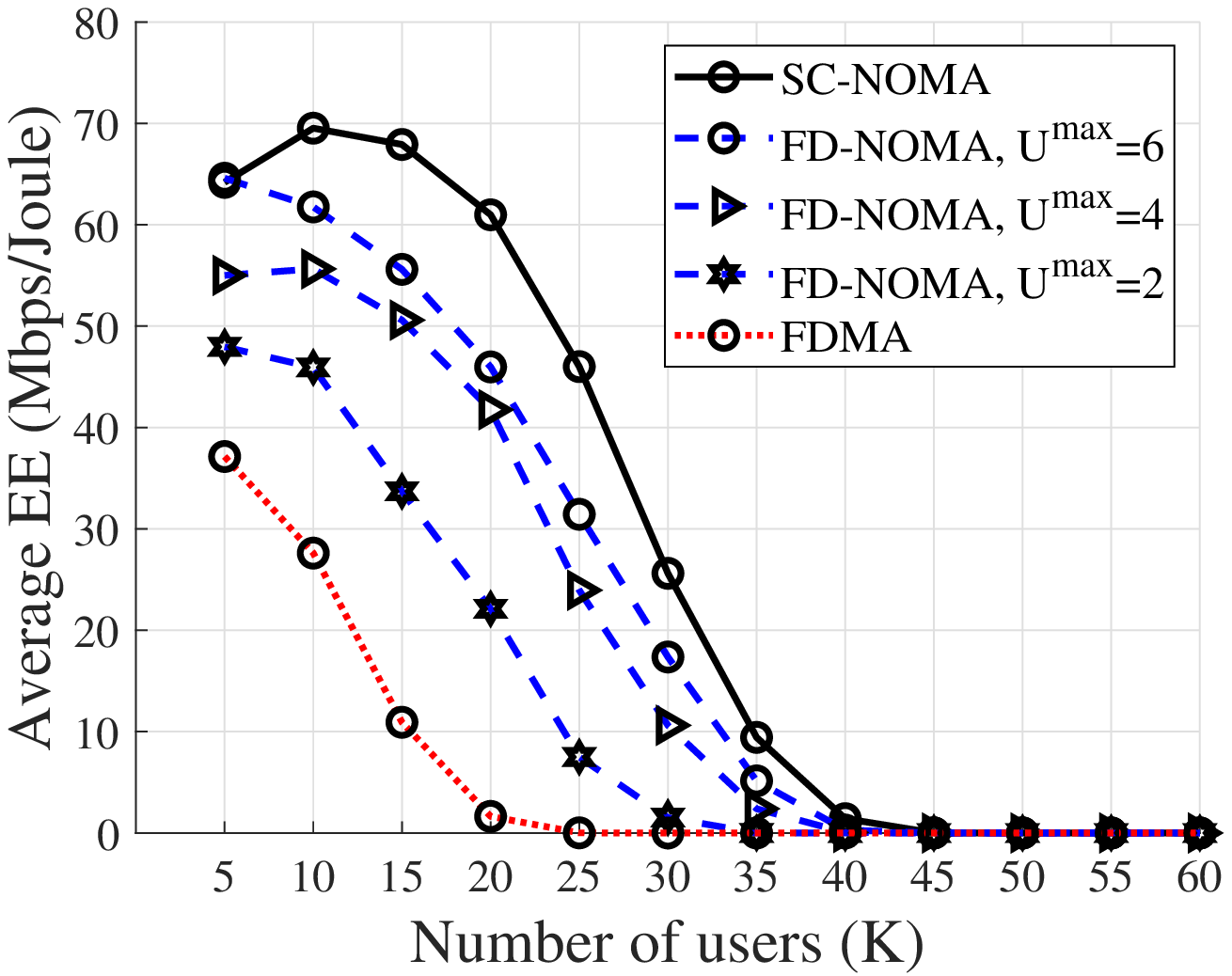}
		\label{Fig_EESYSTEM_usernum2}
	}\hfill
	\subfigure[Average EE vs. number of users for $R^{\rm{min}}_{k}=4.5~\text{Mbps},\forall k \in \mathcal{K}$.]{
		\includegraphics[scale=0.35]{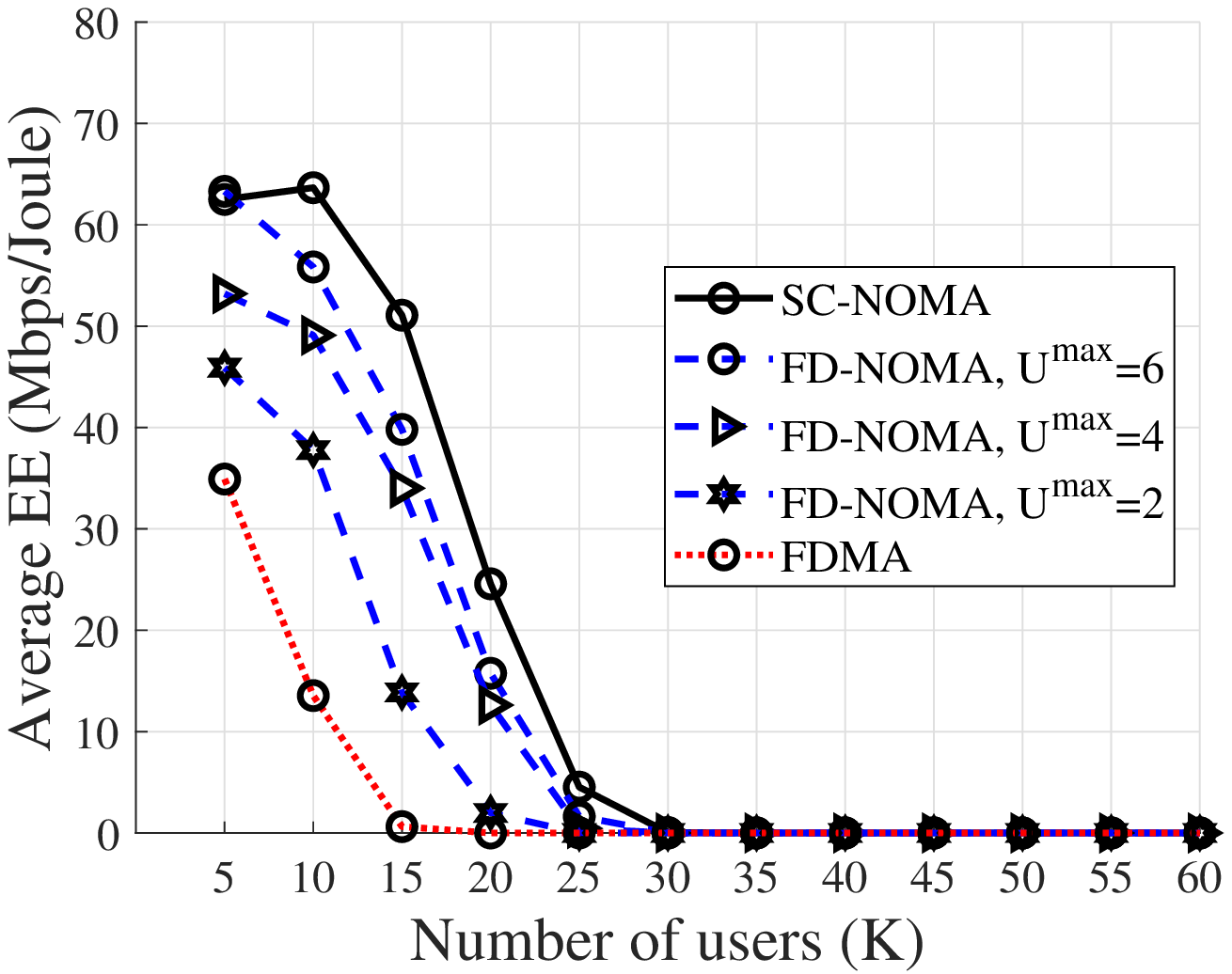}
		\label{Fig_EESYSTEM_usernum3}
	}\hfill
	\caption
	{Impact of the minimum rate demand and number of users on the average system EE of SC-NOMA, FD-NOMA, and FDMA.}
	\label{FigEE}
\end{figure*}
From Figs. \ref{Fig_EESYSTEM_minrate1}-\ref{Fig_EESYSTEM_minrate3}, we observe that the system EE is affected by $R^{\rm{min}}_{k}$ although the users SR are approximately insensitive to $R^{\rm{min}}_{k}$ shown in Figs. \ref{Fig_SR_minrate1}-\ref{Fig_SR_minrate3}. The main reason that EE is more affected by $R^{\rm{min}}_{k}$ compared to SR is the high sensitivity level of total power consumption to $R^{\rm{min}}_{k}$ shown in Figs. \ref{Fig_PM_minrate1}-\ref{Fig_PM_minrate3}. The impact of total power consumption on EE is highly affected by the circuit power consumption. It can be shown that when $P_{\rm{C}}$ increases, the system EE will be more insensitive to $R^{\rm{min}}_{k}$. From Figs. \ref{Fig_EESYSTEM_usernum1}-\ref{Fig_EESYSTEM_usernum3}, we observe that the system EE under minimum rate demands is increasing with $K$, when $K$ is small enough. In this situation, the system exploits the multiuser diversity, specifically for SC-NOMA. For significantly large $K$, the EE is decreasing with $K$ due to the existing minimum rate demands which highly affects the total power consumption. Following the results of Figs. \ref{FigPM} and \ref{FigSR}, the average system EE follows: $\text{EE}(\text{SC-NOMA}) > \text{EE}(6\text{-NOMA}) \approx \text{EE}(4\text{-NOMA}) > \text{EE}(2\text{-NOMA}) \gg \text{EE}(\text{FDMA})$.

\section{Concluding Remarks}\label{Section conclusion}
In this work, we addressed the problem of finding globally optimal power allocation algorithms to minimize the BSs power consumption, and maximize SR/EE of the general multiuser downlink single-cell Hybrid-NOMA systems. For these objectives, we showed that Hybrid-NOMA with $N$ clusters can be equivalently transformed to $N$-user virtual FDMA system, where the effective CNR of each virtual OMA user is obtained in closed form. In this transformation, we exploited the closed-form of optimal powers among multiplexed users within each cluster to further reduce the dimension of our problem as well as increase the accuracy of the iterative convex solvers. In particular, we showed that the feasible region of power allocation in NOMA can be defined as the intersection of closed boxes along with cellular power constraint. Then, we proposed a fast water-filling algorithm for the SR maximization problem, as well as fast iterative algorithms for the EE maximization problem based on the Dinkelbach algorithm with inner Lagrange dual with subgradient method/barrier algorithm with inner Newton's method. The complexity of our proposed algorithms are also analyzed.
The possible extensions of our analysis to more general cases with their corresponding new challenges are discussed in the paper.
Numerical assessments show that there exist a considerable performance gap in terms of outage probability, BSs power consumption, users SR, and system EE between FDMA and $2$-NOMA as well as between $2$-NOMA and $4$-NOMA. 
Moreover, we observed that the performance gaps between $X$-NOMA and $(X+1)$-NOMA highly decrease for $X \geq 4$, meaning that when $X \geq 4$, multiplexing more users merely improves the system performance.

\appendices

\section{Proof of Proposition \ref{proposition feasiblecluster}.}\label{appendix Lemma feasible}
The feasibility of \eqref{SCuser problem} can be determined by solving the power minimization problem as
\begin{equation}\label{min SCuser2 problem}
	\min_{\boldsymbol{p} \geq 0, \boldsymbol{q} \geq 0}~\sum\limits_{n \in \mathcal{N}} q_n~~~~~~\text{s.t.}~\eqref{Constraint minrate SCuser}\text{-}\eqref{Constraint mask q}.
\end{equation}
The problem \eqref{min SCuser2 problem} is also convex with affine objective function and constraints. Accordingly, the weak Slater's condition implies strong duality, thus \eqref{min SCuser2 problem} can be optimally solved by using the Lagrange dual method.
For SC-NOMA, in Appendix C of \cite{9583874}, we proved that the maximum power budget does not have any effect on the optimal powers obtained in the power minimization problem when the feasible region is nonempty. Accordingly, problem \eqref{min SCuser2 problem} can be decoupled into $N$ SC-NOMA power minimization subproblems when the feasible region of problem \eqref{min SCuser2 problem} is nonempty. The total power minimization of $M$-user downlink SC-NOMA is solved in Appendix C of \cite{9583874}. For convenience, consider cluster $n$ with $|\mathcal{K}_n|=K$ users whose CNRs are sorted as $h^n_1 < h^n_2 < \dots < h^n_K$ with optimal decoding order $K \to K-1 \to \dots \to 1$. As is proved in \cite{9583874}, at the optimal point $\boldsymbol{p^*}^n$, all the multiplexed users in $\mathcal{K}_n$ get power to only maintain their individual minimal rate demands, meaning that
\begin{equation*}\label{opt powmin SC}
	W_s \log_2\left( 1+ \frac{{p^n_k}^* h^n_k}{1+\sum\limits_{j=k+1}^{K} {p^n_j}^* h^n_k}  \right)= R^{\rm{min}}_{k,n},~\forall k=1,\dots,K.
\end{equation*}
The optimal power of each user $k \in \mathcal{K}_n$ (in Watts) can thus be obtained by
\begin{equation*}\label{opt totminpow SC}
	{p^n_k}^*=T^n_k \left( 1 + \sum\limits_{j=k+1}^{K} {p^n_j}^* h^n_k \right),~\forall k=1,\dots,K,
\end{equation*}
where $T^n_k=\frac{2^{(R^{\rm{min}}_{k,n}/W_s)} -1}{h^n_k},~\forall k=1,\dots,K$.
Let us rewrite the latter equation as
\begin{equation*}
	{p^n_k}^*=\beta^n_k \left(\frac{1}{h^n_k} + \sum\limits_{j=k+1}^{K} {p^n_j}^* \right),~\forall k=1,\dots,K,
\end{equation*}
where $\beta^n_k=2^{(R^{\rm{min}}_{k,n}/W_s)} -1,~\forall k=1,\dots,K$. The latter equation can be rewritten as
\begin{align*}
	{p^n_k}^*&=\beta^n_k \left(\frac{1}{h^n_k} + \sum\limits_{j=k+1}^{K} {p^n_j}^* \right)
	\\
	&=\beta^n_k \left(\frac{1}{h^n_k} + {p^n_{k+1}}^* + \sum\limits_{j=k+2}^{K} {p^n_j}^* \right)
	\\
	&=\beta^n_k \left(\frac{1}{h^n_k} + \beta^n_{k+1} \left(\frac{1}{h^n_{k+1}} + \sum\limits_{j=k+2}^{K} {p^n_j}^* \right) + \sum\limits_{j=k+2}^{K} {p^n_j}^* \right)
	\\
	&=\beta^n_k \left((1+\beta^n_{k+1}) \sum\limits_{j=k+2}^{K} {p^n_j}^* + \frac{1}{h^n_k} + \frac{\beta^n_{k+1}}{h^n_{k+1}} \right)
	\\
	&=\beta^n_k \left((1+\beta^n_{k+1}) \left({p^n_{k+2}}^* + \sum\limits_{j=k+3}^{K} {p^n_j}^*\right) + \frac{1}{h^n_k} + \frac{\beta^n_{k+1}}{h^n_{k+1}} \right)
	\\
	&=\beta^n_k \Bigg((1+\beta^n_{k+1}) \left( \beta^n_{k+2} \left(\frac{1}{h^n_{k+2}} + \sum\limits_{j=k+3}^{K} {p^n_j}^* \right) + \sum\limits_{j=k+3}^{K} {p^n_j}^*\right) 
	\\
	&~~~~+\frac{1}{h^n_k} + \frac{\beta^n_{k+1}}{h^n_{k+1}} \Bigg)
	\\
	&=\beta^n_k \Bigg((1+\beta^n_{k+1}) (1+\beta^n_{k+2}) \sum\limits_{j=k+3}^{K} {p^n_j}^* + \frac{1}{h^n_k} + \frac{\beta^n_{k+1}}{h^n_{k+1}} \\
	&~~~~+ \frac{\beta^n_{k+2}(1+\beta^n_{k+1})}{h^n_{k+2}} \Bigg)
	\\
	&\vdots
	\\
	&=\beta^n_k \bigg( (1+\beta^n_{k+1}) (1+\beta^n_{k+2})\dots (1+\beta^n_{K}) + \frac{1}{h^n_k} + \frac{\beta^n_{k+1}}{h^n_{k+1}}  
	\\
	&~~+ \frac{\beta^n_{k+2}(1+\beta^n_{k+1})}{h^n_{k+2}} + \dots + \frac{\beta^n_{K}(1+\beta^n_{K-1})\dots (1+\beta^n_{k+1})}{h^n_{K}} \bigg).
\end{align*}
As a result, we have
\begin{multline*}
	{p^n_k}^*=\beta^n_{k} \left(\prod\limits_{j=k+1}^{K} \left(1+\beta^n_{j}\right) +\frac{1}{h^n_k}+\sum\limits_{j=k+1}^{K} \frac{ \beta^n_{j} \prod\limits_{i=k+1}^{j-1} \left(1+\beta^n_{i}\right)}{h^n_j}\right), \\
	\forall k=1,\dots,K.
\end{multline*}
The latter equation can be rewritten as
\begin{multline}\label{optimal minpow Scell}
	\hspace{-0.3cm}{p^n_{k}}^*=\beta^n_{k} \left(\prod\limits_{j \in \mathcal{K}_n \atop h^n_{j} > h^n_{k}} \left(1+\beta^n_{j}\right) + \frac{1}{h_k}
	+ \sum\limits_{j \in \mathcal{K}_n \atop h^n_{j} > h^n_{k}} \frac{ \beta^n_{j} \prod\limits_{l \in \mathcal{K}_n \atop h^n_{k} < h^n_{l} < h^n_{j}} \left(1+\beta^n_{l}\right)}{h_j}\right) \\ \forall n \in \mathcal{N},~k \in \mathcal{K}_n.
\end{multline}
Therefore, the minimum power consumption of cluster $n$ is given by
\begin{multline}\label{optimal Qmin}
	Q^{\rm{min}}_n=\sum\limits_{k \in \mathcal{K}_n} {p^n_{k}}^*= \sum\limits_{k \in \mathcal{K}_n} \beta^n_{k} \Bigg(\prod\limits_{j \in \mathcal{K}_n \atop h^n_{j} > h^n_{k}} \left(1+\beta^n_{j}\right) +\frac{1}{h_k}+ \\
	\sum\limits_{j \in \mathcal{K}_n \atop h^n_{j} > h^n_{k}} \frac{ \beta^n_{j} \prod\limits_{l \in \mathcal{K}_n \atop h^n_{k} < h^n_{l} < h^n_{j}} \left(1+\beta^n_{l}\right)}{h_j}\Bigg), \forall n \in \mathcal{N}.
\end{multline}
The parameter $Q^{\rm{min}}_n$ is indeed the minimum power consumption of cluster $n$ to satisfy the minimum rate demand of users in $\mathcal{K}_n$. As a result, the feasible region of $q_n$ is lower-bounded by $Q^{\rm{min}}_n$ for each $n \in \mathcal{N}$. Accordingly, constraints \eqref{Constraint minrate SCuser} and \eqref{Constraint mask q} can be combined as $q_n \in \left[Q^{\rm{min}}_n,P^{\rm{mask}}_n\right],~\forall n \in \mathcal{N}$, which is the intersection of affine closed boxes, so is convex. According to \eqref{Constraint cell q}, we guarantee that any $q_n \in \left[Q^{\rm{min}}_n,P^{\rm{mask}}_n\right],~\forall n \in \mathcal{N}$, satisfying \eqref{Constraint cell q} is feasible, and the proof is completed.

\section{Proof of Proposition \ref{proposition jointcluster}.}\label{appendix closedform pow sum-rate}
For any given feasible $\boldsymbol{q}$, \eqref{Constraint cell q} and \eqref{Constraint mask q} can be removed from \eqref{SCuser problem}.
Then, problem \eqref{SCuser problem} can be equivalently divided into $N$ SC-NOMA subproblems. For each subproblem $n$, we find the intra-cluster power allocation among multiplexed users in $\mathcal{K}_n$ in closed-form.  According to the KKT optimality conditions analysis in Appendix B of \cite{9583874}, it is proved that at the optimal point of SC-NOMA with CNR-based decoding order, only the cluster-head user gets additional power, and all the other users get power to only maintain their minimal rate demands.

For convenience, consider cluster $n \in \mathcal{N}$ with $|\mathcal{K}_n|=K$ users whose CNRs are sorted as $h^n_1 < h^n_2 < \dots < h^n_K$ with optimal decoding order $K \to K-1 \to \dots \to 1$. According to Appendix B of \cite{9583874}, the optimal powers in $\boldsymbol{p^*}^n$ satisfy
\begin{equation}\label{opt poweri sumrate}
	\log_2\left( 1+\frac{{p^n_k}^* h^n_k}{1+\sum\limits_{j=k+1}^{K} {p^n_j}^* h^n_k} \right)= R^{\rm{min}}_{k,n},~\forall k=1,\dots,K-1,
\end{equation}
and
\begin{equation}\label{opt powerM sumrate}
	{p^n_K}^*=q_n-\sum\limits_{k=1}^{K-1} {p^n_k}^*.
\end{equation}
Let us rewrite \eqref{opt poweri sumrate} as
\begin{equation*}
	{p^n_k}^*=\frac{T^n_k \left( 1 + \left( q_n-\sum\limits_{j=1}^{k-1} {p^n_j}^*\right) h^n_k \right)}{1+T^n_k h^n_k},~\forall k=1,\dots,K-1,
\end{equation*}
where $T^n_k=\frac{2^{R^{\rm{min}}_{k,n}} -1}{h^n_k},~\forall k=1,\dots,K-1$.
To find a closed-form expression for $\boldsymbol{p^*}^n$, we rewire the latter equation as
\begin{equation*}
	{p^n_k}^*=\beta^n_{k} \left(q_n - \sum\limits_{j=1}^{k-1} {p^n_j}^* + \frac{1}{h^n_k} \right),~\forall k=1,\dots,K-1,
\end{equation*}
where $\beta^n_{k}=\frac{2^{R^{\rm{min}}_{k,n}} -1}{2^{R^{\rm{min}}_{k,n}}},~\forall k=1,\dots,K-1$.
The latter equation can also be rewritten as
\begin{align*}
	{p^n_k}^*&=\beta^n_{k} \left(q_n - {p^n_{k-1}}^* - \sum\limits_{j=1}^{k-2} {p^n_j}^* + \frac{1}{h^n_k}  \right)
	\\
	&
	\\
	&=\beta^n_{k} \bigg(q_n - \beta^n_{k-1} \left(q_n - \sum\limits_{j=1}^{k-2} {p^n_j}^* + \frac{1}{h^n_{k-1}} \right) - \sum\limits_{j=1}^{k-2} {p^n_j}^* + \frac{1}{h^n_k}  \bigg)
	\\
	&=\beta^n_{k} \bigg( \left(1-\beta^n_{k-1}\right) q_n - \left(1-\beta^n_{k-1}\right) 
	\sum\limits_{j=1}^{k-2} {p^n_j}^* + \frac{1}{h^n_k} - \frac{\beta^n_{k-1}}{h^n_{k-1}} \bigg)
	\\
	&\vdots
	\\
	=&\beta^n_{k} \bigg( \left(1-\beta^n_{k-1}\right) \left(1-\beta^n_{k-2}\right)\dots \left(1-\beta^n_{1}\right) q_n + \frac{1}{h^n_k} - \frac{\beta^n_{k-1}}{h^n_{k-1}} + \dots
	\\
	&- \frac{\left(1-\beta^n_{k-1}\right) \beta^n_{k-2}}{h^n_2}-\frac{\left(1-\beta^n_{k-1}\right) \left(1-\beta^n_{k-2}\right) \dots \left(1-\beta^n_{2}\right) \beta^n_{1}}{h^n_1} \bigg).
\end{align*}
According to the above, we have
\begin{multline}\label{power i closed sumrate series}
	{p^n_k}^*=\beta^n_{k} \left( \prod\limits_{j=1}^{k-1} \left(1-\beta^n_{j}\right) q_n
	+\frac{1}{h^n_k}-
	\sum\limits_{j=1}^{k-1} \frac{\beta^n_{j} \prod\limits_{i=j+1}^{k-1} \left(1-\beta^n_{i}\right)} {h^n_j} \right), \\
	\forall k=1,\dots,K-1.
\end{multline}
The optimal powers in \eqref{power i closed sumrate series} can be reformulated as
\begin{multline*}
	{p^n_{k}}^*=\left(\beta^n_{k} \prod\limits_{j \in \mathcal{K}_n \atop h^n_{j} < h^n_{k}} \left(1-\beta^n_{j}\right)\right) q_n + c^n_{k},~\forall n \in \mathcal{N},~ k \in \mathcal{K}_n\setminus\{\Phi_n\},
\end{multline*}
where $\beta^n_{k} = \frac{2^{(R^{\rm{min}}_{k,n}/W_s)}-1} {2^{(R^{\rm{min}}_{k,n}/W_s)}},~ \forall n \in \mathcal{N},~k \in \mathcal{K}_n$, 
$c^n_{k}=\beta^n_{k} \left( \frac{1}{h^n_{k}} - \sum\limits_{j \in \mathcal{K}_n \atop h^n_{j} < h^n_{k}} \frac{\prod\limits_{l \in \mathcal{K}_n \atop h^n_{j} < h^n_{l} < h^n_{k}} \left(1-\beta^n_{l}\right) \beta^n_{j}} {h^n_{j}}\right),~ \forall n \in \mathcal{N},~k \in \mathcal{K}_n$. Subsequently, based on \eqref{opt powerM sumrate}, the optimal power of the cluster-head users can be formulated by
\begin{multline*}
	{p^n_{\Phi_n}}^*= \left(1 - \sum\limits_{i \in \mathcal{K}_n \atop h^n_{i} < h^n_{\Phi_n}} \beta^n_{i} \prod\limits_{j \in \mathcal{K}_n \atop h^n_{j} < h^n_{i}} \left(1-\beta^n_{j}\right)\right) q_n - \sum\limits_{i \in \mathcal{K}_n \atop h^n_{i} < h^n_{\Phi_n}} c^n_{i},\\
	\forall n \in \mathcal{N}.
\end{multline*}

\section{Water-Filling Algorithm for Solving \eqref{eq1 problem}.}\label{appendix bisection sumrate}
The Lagrange function of \eqref{eq1 problem} is given by
\begin{equation}
	L(\boldsymbol{\tilde{q}},\nu)=\sum\limits_{n \in \mathcal{N}} W_s \log_2\left( 1 + \tilde{q}_n H_n \right) + \nu \left( \tilde{P}^{\rm{max}} - \sum\limits_{n \in \mathcal{N}} \tilde{q}_n \right),
\end{equation}
where $\nu$ is the Lagrange multiplier for the cellular power constraint \eqref{Constraint 1}, and $q_n \in [\tilde{Q}^{\rm{min}}_n,\tilde{P}^{\rm{mask}}_n],~\forall n \in \mathcal{N}$.
The Lagrange dual function is
\begin{multline}
	g(\nu)=
	\sup\limits_{\boldsymbol{\tilde{q}} \in \mathcal{P}} L(\boldsymbol{\tilde{q}},\nu)=
	\\
	\sup\limits_{\boldsymbol{\tilde{q}} \in \mathcal{P}} \Bigg\{ \sum\limits_{n \in \mathcal{N}} W_s \log_2\left( 1 + \tilde{q}_n H_n \right)	+ \nu \left( \tilde{P}^{\rm{max}} - \sum\limits_{n \in \mathcal{N}} \tilde{q}_n \right)
	\Bigg\},
\end{multline}
where $\mathcal{P}$ is the feasible set of problem \eqref{eq1 problem}. The Lagrange dual problem is formulated by
\begin{equation}\label{ppadual}
	\min\limits_{\nu}~g(\nu),~~~~~~\textrm{s.t.}~\nu \in \mathbb{R}.
\end{equation}
Assume that $\nu^*$ is the dual optimal. Moreover, $\boldsymbol{\tilde{q}}^*=[\tilde{q}^*_n], \forall n \in \mathcal{N}$, is primal. The KKT conditions are listed below
\begin{multline*}
	\text{C1}: q_n \in [\tilde{Q}^{\rm{min}}_n,\tilde{P}^{\rm{mask}}_n],~\forall n \in \mathcal{N},~~~~
	\text{C2}:\tilde{P}^{\rm{max}} - \sum\limits_{n \in \mathcal{N}} \tilde{q}^*_n = 0,\\
	\text{C3}:\nabla_{\boldsymbol{\tilde{q}}^*} L(\boldsymbol{\tilde{q}}^*,\nu^*) = 0.
\end{multline*}
Condition C3 can be rewritten as
$\frac{W_s H_n/\ln 2}{1 + \tilde{q}^*_n H_n} - \nu^* = 0,~\forall n \in \mathcal{N}$. 
Summing-up, for each $n \in \mathcal{N}$, we have
\begin{equation}\label{bisection opt}
	\tilde{q}^*_n=
	\begin{cases}
		\frac{W_s/(\ln 2)}{\nu^*} - \frac{1}{H_n}, &\quad \left(\frac{W_s/(\ln 2)}{\nu^*} - \frac{1}{H_n}\right) \in [\tilde{Q}^{\rm{min}}_n,\tilde{P}^{\rm{mask}}_n];\\
		0, &\quad \text{otherwise}. \\ 
	\end{cases}
\end{equation}
To ease of convenience, we reformulate \eqref{bisection opt} as
$\tilde{q}^*_n = \max \left\{\tilde{Q}^{\rm{min}}_n, \min \left\{ \left(\frac{W_s/(\ln 2)}{\nu^*} - \frac{1}{H_n}\right) , \tilde{P}^{\rm{mask}}_n\right\} \right\}$. 
By substituting $\tilde{q}^*_n$ to the cellular power constraint \eqref{Constraint 1}, we have 
\begin{equation}\label{bisection constraint}
	\sum\limits_{n \in \mathcal{N}} \max \left\{\tilde{Q}^{\rm{min}}_n, \min \left\{ \left(\frac{W_s/(\ln 2)}{\nu^*} - \frac{1}{H_n}\right) , \tilde{P}^{\rm{mask}}_n\right\} \right\} = \tilde{P}^{\rm{max}}.
\end{equation}
The left-hand side is a piecewise-linear increasing function of $\frac{W_s/(\ln 2)}{\nu^*}$ with breakpoints at $\frac{1}{H_n},~\forall n \in \mathcal{N}$, so the equation has a unique solution which is readily determined. To find optimal $\nu^*$, we first initialize tolerance $\epsilon$, lower-bound $\nu_l$ and upper-bound $\nu_h$. 
The lower-bound $\nu_l$ should satisfy $\sum\limits_{n \in \mathcal{N}} \max \left\{\tilde{Q}^{\rm{min}}_n, \min \left\{ \left(\frac{W_s/(\ln 2)}{\nu_l} - \frac{1}{H_n}\right) , \tilde{P}^{\rm{mask}}_n\right\} \right\} > \tilde{P}^{\rm{max}}$, and the upper-bound $\nu_h$ should satisfy $\sum\limits_{n \in \mathcal{N}} \max \left\{\tilde{Q}^{\rm{min}}_n, \min \left\{ \left(\frac{W_s/(\ln 2)}{\nu_h} - \frac{1}{H_n}\right) , \tilde{P}^{\rm{mask}}_n\right\} \right\} < \tilde{P}^{\rm{max}}$. 
After the initialization step, we apply the bisection method to find $\nu^*$ presented in Alg. \ref{Alg bisection}.

\section{Proof of the Optimality of Alg. \ref{Alg dinkelbach}.}\label{appendix converge Dinkelbach}
Let us define the EE function as $E(\boldsymbol{p})=\frac{f_1(\boldsymbol{p})}{f_2(\boldsymbol{p})},~\forall \boldsymbol{p} \in \mathcal{P}$, 
where $f_1(\boldsymbol{p})=\sum\limits_{n \in \mathcal{N}} \sum\limits_{k \in \mathcal{K}_n} R^n_{k} (\boldsymbol{p}^n)$, $f_2(\boldsymbol{p})=\sum\limits_{n \in \mathcal{N}} \sum\limits_{k \in \mathcal{K}_n} p^n_{k} + P_{\rm{C}}$, and $\mathcal{P}$ denotes the feasible set of problem \eqref{EE problem}. In this formulation, $f_1(\boldsymbol{p})$ is concave, and $f_2(\boldsymbol{p})$ is affine, so is convex. Moreover, both $f_1$ and $f_2$ are differentiable. The feasible set $\mathcal{P}$ can be characterized by using Proposition \ref{proposition feasiblecluster} which is shown to be affine, so is convex.  
For any non-empty $\mathcal{P}$ (which can be determined by Corollary \ref{corol feasible region}), the objective function $E(\boldsymbol{p})=\frac{f_1(\boldsymbol{p})}{f_2(\boldsymbol{p})}$ is pseudoconcave, implying that any stationary point is indeed global maximum and the KKT conditions are sufficient if a constraint qualification is fulfilled \cite{6213038,Jorswieckfractional}. Therefore, the globally optimal solution of problem \eqref{EE problem} can be obtained by using convex optimization algorithms \cite{6213038,Jorswieckfractional}. In particular, \eqref{EE problem} can be equivalently transformed to the following problem as
$$
\max_{ \boldsymbol{p} \in \mathcal{P}, \lambda \in \mathbb{R}}~\lambda
~~~~~~~~~\text{s.t.}~\frac{f_1(\boldsymbol{p})}{f_2(\boldsymbol{p})} - \lambda \geq 0,
$$
which can be rewritten as
$$
\max_{ \boldsymbol{p} \in \mathcal{P}, \lambda \in \mathbb{R}}~\lambda
~~~~~~~~~\text{s.t.}~f_1(\boldsymbol{p}) - \lambda f_2(\boldsymbol{p}) \geq 0.
$$
It can be proved that solving the latter problem is equivalent to finding the root of the following nonlinear function \cite{6213038}
$$
F(\lambda)=\max\limits_{ \boldsymbol{p} \in \mathcal{P}} f_1(\boldsymbol{p}) - \lambda f_2(\boldsymbol{p}),
$$
so the condition for the global optimality is
$$
F(\lambda^*)=\max\limits_{ \boldsymbol{p} \in \mathcal{P}} f_1(\boldsymbol{p}) - \lambda^* f_2(\boldsymbol{p}) = 0.
$$
Various methods can find the root of $F(\lambda)$, such as the Dinkelbach algorithm \cite{Dinkelbach} which is based on the Newton’s method.
For more details, please see Proposition 3.2 in \cite{Jorswieckfractional}.

\section{Lagrange Dual with Subgradient Method for Solving \eqref{EEeq problem} .}\label{appendix subgradient}
The Lagrange function of \eqref{EEeq problem} is formulated by
\begin{multline}
	L(\boldsymbol{\tilde{q}},\nu)=\sum\limits_{n \in \mathcal{N}} W_s \log_2\left( 1 + \tilde{q}_n H_n \right) - \lambda \left(\sum\limits_{n \in \mathcal{N}} \tilde{q}_n\right) + \\
	\nu \left( \tilde{P}^{\rm{max}} - \sum\limits_{n \in \mathcal{N}} \tilde{q}_n \right),
\end{multline}
where $\nu$ is the Lagrange multiplier for the cellular power constraint \eqref{ConstraintEEeq 1}, and $q_n \in [\tilde{Q}^{\rm{min}}_n,\tilde{P}^{\rm{mask}}_n],~\forall n \in \mathcal{N}$.
The dual function is given by
\begin{multline}
	g(\nu)=
	\sup\limits_{\boldsymbol{\tilde{q}} \in \mathcal{P}} L(\boldsymbol{\tilde{q}},\nu)=\sup\limits_{\boldsymbol{\tilde{q}} \in \mathcal{P}} \Bigg\{ \sum\limits_{n \in \mathcal{N}} W_s \log_2\left( 1 + \tilde{q}_n H_n \right) -
	\\
	\lambda \left(\sum\limits_{n \in \mathcal{N}} \tilde{q}_n\right)	+ \nu \left( \tilde{P}^{\rm{max}} - \sum\limits_{n \in \mathcal{N}} \tilde{q}_n \right)
	\Bigg\},
\end{multline}
where $\mathcal{P}$ is the feasible domain of problem \eqref{eq1 problem}. The Lagrange dual problem is formulated by
\begin{equation}\label{dualEE}
	\min\limits_{\nu}~g(\nu),~~~~~~\textrm{s.t.}~\nu \in \mathbb{R}.
\end{equation}
The optimal $\boldsymbol{\tilde{q}}^*$ can be obtained by $\nabla_{\boldsymbol{\tilde{q}}} L(\boldsymbol{\tilde{q}},\nu) = 0$. Then, we have
\begin{equation}\label{optqsubgrad}
	\tilde{q}^*_n= \left[\frac{W_s/(\ln 2) }{\lambda + \nu^*} - \frac{1}{H_n}\right]_{\tilde{Q}^{\rm{min}}_n}^{\tilde{P}^{\rm{mask}}_n},~n \in \mathcal{N},
\end{equation}
where $\nu^*$ is the dual optimal, which can be obtained by using the subgradient method \cite{Boydconvex}. In this algorithm, we iteratively update $\nu$ such that at iteration $(t+1)$ 
\begin{equation}\label{eqsubg}
	\nu^{(t+1)}=\left[\nu^{(t)} - \epsilon_\text{s} \left( \tilde{P}^{\rm{max}} - \sum\limits_{n \in \mathcal{N}} \tilde{q}^{(t)}_n \right)\right]^+,
\end{equation}
where $\nu^{(t)}$ is the Lagrange multiplier $\nu$ at iteration $t$, and $\tilde{q}^{(t)}_n$ is the optimal solution obtained by \eqref{optqsubgrad} at iteration $t$. Moreover, $\epsilon_\text{s} > 0$ is the step size tuning the accuracy of the algorithm \cite{nonlprgrmbook}. The iterations are repeated until the convergence is achieved. It is verified that the subgradient method will converge to the globally optimal solution after few iterations \cite{nonlprgrmbook}.

\section{Barrier Algorithm with Inner Newton's Method for Solving \eqref{EEeq problem}.}\label{appendix barrier}
Let us reformulate \eqref{EEeq problem} as the following standard convex problem 
\begin{subequations}\label{EEeq problem eq1} 
	\begin{align}\label{obf EEeq problem eq1}
		\min_{\boldsymbol{\tilde{q}}}
		~~ & f_0 (\boldsymbol{\tilde{q}}) = - \sum\limits_{n \in \mathcal{N}} W_s \log_2\left( 1 + \tilde{q}_n H_n \right) + \lambda
		\left(\sum\limits_{n \in \mathcal{N}} \tilde{q}_n\right)
		\\
		\text{s.t.}~~
		\label{Constraint eq1}
		& f_1 (\boldsymbol{\tilde{q}}) = \sum\limits_{n \in \mathcal{N}} \tilde{q}_n - \tilde{P}^{\rm{max}} \leq 0,
		\\
		\label{Constraint eq2}
		& q_n \in [\tilde{Q}^{\rm{min}}_n,\tilde{P}^{\rm{mask}}_n],~\forall n \in \mathcal{N}.
	\end{align}
\end{subequations}
Then, we approximate \eqref{EEeq problem eq1} to an unconstrained minimization problem as 
\begin{equation}\label{EEeq problem eq2}
	\min_{\boldsymbol{\tilde{q}}}~~
	U(\boldsymbol{\tilde{q}})= t f_0(\boldsymbol{\tilde{q}}) + \phi(\boldsymbol{\tilde{q}}),
\end{equation}
in which $$\phi(\boldsymbol{\tilde{q}}) = - \log\left( - f_1(\boldsymbol{\tilde{q}}) \right),$$
such that the domain of $\phi$ is
$$
\textbf{dom}~\phi=\{ q_n \in [\tilde{Q}^{\rm{min}}_n,\tilde{P}^{\rm{mask}}_n],~\forall n \in \mathcal{N} | f_1(\boldsymbol{\tilde{q}}) < 0 \},$$
and $t \gg 1$ is a positive real constant. The problem \eqref{EEeq problem eq2} is convex since $t f_0(\boldsymbol{\tilde{q}})$ and $\phi(\boldsymbol{\tilde{q}})$ are convex. In each barrier iteration, we solve \eqref{EEeq problem eq2} by using the Newton's method. 
The gradient of $U(\boldsymbol{\boldsymbol{\tilde{q}}})$ is formulated by
$$\nabla_{\boldsymbol{\tilde{q}}} U(\boldsymbol{\tilde{q}})=t \nabla_{\boldsymbol{\tilde{q}}} f_0(\boldsymbol{\tilde{q}}) + \nabla_{\boldsymbol{\tilde{q}}} \phi(\boldsymbol{\tilde{q}}),$$
where $\nabla_{\boldsymbol{\tilde{q}}} f_0(\boldsymbol{\tilde{q}}) = \left[\frac{\partial f_0}{\partial {\tilde{q}}_n}\right],~\forall n \in \mathcal{N}$, in which 
$$
\frac{\partial f_0(\boldsymbol{\tilde{q}})}{\partial {\tilde{q}}_n} = - \frac{W_s H_n}{\ln(2) \left( 1+\tilde{q}_n H_n \right)} + \lambda,~\forall n \in \mathcal{N}.
$$
In addition, $\nabla_{\boldsymbol{\tilde{q}}} \phi(\boldsymbol{\tilde{q}}) = \left[\frac{\partial \phi(\boldsymbol{\tilde{q}})}{\partial {\tilde{q}}_n}\right],~\forall n \in \mathcal{N}$, in which 
$
\frac{\partial \phi(\boldsymbol{\tilde{q}})}{\partial \tilde{q}_n} = -\frac{\frac{\partial f_1(\boldsymbol{\tilde{q}})} {\partial \tilde{q}_n}}{f_1(\boldsymbol{\boldsymbol{\tilde{q}}})},~\forall n \in \mathcal{N},
$
such that $\frac{\partial f_1(\boldsymbol{\tilde{q}})} {\partial \tilde{q}_n}=1,~\forall n \in \mathcal{N}$. Therefore, we have
$
\frac{\partial \phi(\boldsymbol{\tilde{q}})}{\partial \tilde{q}_n} = - \frac{1}{\sum\limits_{n \in \mathcal{N}} \tilde{q}_n - \tilde{P}^{\rm{max}}}
$.
Summing up, the $n$-th element in the vector $\nabla_{\boldsymbol{\tilde{q}}} U(\boldsymbol{\tilde{q}}) = \left[\frac{\partial U(\boldsymbol{\tilde{q}})}{\partial {\tilde{q}}_n}\right],~\forall n \in \mathcal{N}$, is given by
$$
\frac{\partial U(\boldsymbol{\tilde{q}})}{\partial {\tilde{q}}_n} = - t \left( \frac{W_s H_n}{\ln(2) \left( 1+\tilde{q}_n H_n \right)} + \lambda\right) - \frac{1}{\sum\limits_{n \in \mathcal{N}} \tilde{q}_n - \tilde{P}^{\rm{max}}},~\forall n \in \mathcal{N}.
$$
The Hessian of $U(\boldsymbol{\tilde{q}})$ is formulated by
$$
\nabla^2_{\boldsymbol{\tilde{q}}} U(\boldsymbol{\tilde{q}})=t \nabla^2_{\boldsymbol{\tilde{q}}} f_0(\boldsymbol{\tilde{q}}) + \nabla^2_{\boldsymbol{\tilde{q}}} \phi(\boldsymbol{\tilde{q}}),
$$
where 
$$
\nabla^2_{\boldsymbol{\tilde{q}}} f_0(\boldsymbol{\tilde{q}}) = \begin{bmatrix} 
	\frac{\partial^2 f_0}{\partial \tilde{q}^2_1} & \frac{\partial^2 f_0}{\partial \tilde{q}_1 \partial \tilde{q}_2} & \dots & \frac{\partial^2 f_0}{\partial \tilde{q}_1 \partial \tilde{q}_{N}}
	\\ 
	\frac{\partial^2 f_0}{\partial \tilde{q}_2 \partial \tilde{q}_1} & \frac{\partial^2 f_0}{\partial \tilde{q}^2_2} & \dots & \frac{\partial^2 f_0}{\partial \tilde{q}_2 \partial \tilde{q}_{N}}
	\\ 
	\vdots & \vdots & \ddots & \vdots
	\\
	\frac{\partial^2 f_0}{\partial \tilde{q}_{N} \partial \tilde{q}_1} & \frac{\partial^2 f_0}{\partial \tilde{q}_{N} \partial \tilde{q}_2} & \dots & \frac{\partial^2 f_0}{\partial \tilde{q}^2_{N}}
\end{bmatrix},
$$
such that its entries are $\frac{\partial^2 f_0(\boldsymbol{\tilde{q}})}{\partial \tilde{q}^2_n} = \frac{W_s}{\ln(2)} \frac{H^2_n}{\left( 1+\tilde{q}_n H_n \right)^2},~\forall n \in \mathcal{N}$, and $\frac{\partial^2 f_0(\boldsymbol{\tilde{q}})}{\partial \tilde{q}_i \tilde{q}_j} = 0,~\forall i,j \in \mathcal{N},~i \neq j$. As a result, we have
$$\nabla^2_{\boldsymbol{\tilde{q}}} f_0(\boldsymbol{\tilde{q}})= \textbf{diag} \left(\left[\frac{\partial^2 f_0(\boldsymbol{\tilde{q}})}{\partial \tilde{q}^2_n}\right], ~\forall n \in \mathcal{N}\right),$$ 
which is positive definite, since each element $\frac{\partial^2 f_0}{\partial \tilde{q}^2_i}$ in the main diagonal of $\nabla^2_{\boldsymbol{\tilde{q}}} f_0(\boldsymbol{\tilde{q}})$ is positive and the others are zero. The Hessian of $\phi(\boldsymbol{\tilde{q}})$ can be obtained by
$$\nabla^2_{\boldsymbol{\tilde{q}}} \phi(\boldsymbol{\tilde{q}})=\frac{1}{\left(\sum\limits_{n \in \mathcal{N}} \tilde{q}_n - \tilde{P}^{\rm{max}}\right)^2} \boldsymbol{1}_{N\times N}.$$
The eigenvector of $\phi(\boldsymbol{\tilde{q}})$ is
$\left[0,0,\dots,0,\frac{1}{\left(\sum\limits_{n \in \mathcal{N}} \tilde{q}_n - \tilde{P}^{\rm{max}}\right)^2}\right]_{1 \times N}$. Then, we conclude that $\nabla^2_{\boldsymbol{\tilde{q}}} \phi(\boldsymbol{\tilde{q}}) \succeq 0$. Finally, we have $\nabla^2_{\boldsymbol{\tilde{q}}} U(\boldsymbol{\tilde{q}})=t \nabla^2_{\boldsymbol{\tilde{q}}} f_0(\boldsymbol{\tilde{q}}) + \nabla^2_{\boldsymbol{\tilde{q}}} \phi(\boldsymbol{\tilde{q}}) \succ 0$ since $\nabla^2_{\boldsymbol{\tilde{q}}} f_0(\boldsymbol{\tilde{q}}) \succ 0 $, $\nabla^2_{\boldsymbol{\tilde{q}}} \phi(\boldsymbol{\tilde{q}}) \succeq 0$, and $t>0$. The latter result proves that $U(\boldsymbol{\tilde{q}})$ is strictly convex and its Hessian is nonsingular. Accordingly, $\left(\nabla^2_{\boldsymbol{\tilde{q}}} U(\boldsymbol{\tilde{q}})\right)^{-1}$ is positive and finite.
In this regard, the barrier method with inner Newton's method (with backtracking line search) achieves an $\epsilon$-suboptimal solution \cite{Boydconvex}.

\section{Optimal Power Allocation for Maximizing Sum-Rate of $K$-User SC-NOMA with Per-User Minimum and Maximum Rate Constraints}\label{appendix max rate constraint}
Consider a $K$-user SC-NOMA system, whose users CNRs are sorted as $h_1 < h_2 < \dots < h_K$ with optimal decoding order $K \to K-1 \to \dots \to 1$.
Let $M_{k}=\left(2^{(R^{\rm{max}}_{k}/W_s)} - 1\right) \left(\sum\limits_{j=k+1}^{K} p_{j} h_{k} + 1\right)/h_{k},~\forall k=1,\dots,K$. It can be shown that the optimal powers can be obtained by first calculating the optimal powers in Proposition \ref{proposition jointcluster}. Then, we obtain $M_{K}$ of the strongest user. If $p^*_{K} \leq M_{K}$, the obtained powers are the optimal solution. If $p^*_{K} > M_{K}$, we set $p^*_{K}=M_{K}$. Then, we calculate $M_{K-1}$ and $p_{K-1}=P^{\rm{max}} - \left(\sum\limits_{j=1 \atop j \neq K-1}^{K} p_{j}\right)$ with the updated $p^*_{K}=M_{K}$. If $p^*_{K-1} \leq M_{K-1}$, the obtained powers are the optimal solution. If $p^*_{K-1} > M_{K-1}$, we set $p^*_{K-1}=M_{K-1}$. Then, we calculate $M_{K-2}$ and $p_{K-2}=P^{\rm{max}} - \left(\sum\limits_{j=1 \atop j \neq K-2}^{K} p_{j}\right)$ with the updated $p^*_{K}=M_{K}$, and $p^*_{K-1}=M_{K-1}$. If $p^*_{K-2} \leq M_{K-2}$, the obtained powers are the optimal solution. Otherwise, we continue these series until a user denoted by $i$ satisfies $p^*_{i} \leq M_{i}$. Accordingly, the achievable rate of users $K,\dots,1$ can be obtained as
$$
\underbrace{R^{\rm{max}}_{K},R^{\rm{max}}_{K-1},\dots,R^{\rm{max}}_{i+1}}_\text{Maximum Rates}, \underbrace{\left[R^{\rm{min}}_{i}, R^{\rm{max}}_{i}\right]}_\text{Between}, \underbrace{R^{\rm{min}}_{i-1},\dots,R^{\rm{min}}_{1}}_\text{Minimum Rates},
$$
respectively. 

% use section* for acknowledgment
%\section*{Acknowledgment}
%
%The authors would like to thank...

% Can use something like this to put references on a page
% by themselves when using endfloat and the captionsoff option.
%\ifCLASSOPTIONcaptionsoff
%  \newpage
%\fi

\bibliographystyle{IEEEtran}
\bibliography{Bibliography}

%\begin{IEEEbiography}{Michael Shell}
%Biography text here.
%\end{IEEEbiography}
%
%% if you will not have a photo at all:
%\begin{IEEEbiographynophoto}{John Doe}
%Biography text here.
%\end{IEEEbiographynophoto}
%
%\begin{IEEEbiographynophoto}{Jane Doe}
%Biography text here.
%\end{IEEEbiographynophoto}

\end{document}